\title{Perpetual Voting: The Axiomatic Lens}
\author{Martin Lackner and Jan Maly}
\newtheorem{theorem}{Theorem}
\newtheorem{lemma}[theorem]{Lemma}
\newtheorem{definition}{Definition}
\newtheorem{proposition}[theorem]{Proposition}
\newtheorem*{rep@theorem}{\rep@title}
\newcommand{\newreptheorem}[2]{
\newenvironment{rep#1}[1]{
 \def\rep@title{#2 \ref{##1}}
 \begin{rep@theorem}}
 {\end{rep@theorem}}}
\newcommand{\quota}{\mathit{qu}}
\newcommand{\satisfaction}{\mathit{sat}}
\newcommand{\phragmen}{Phragm\'{e}n}
\definecolor{OKgreen}{HTML}{035925}
\definecolor{NOred}{HTML}{8F061F}
\newcommand*\cmark{\textcolor{OKgreen}{\ding{51}}}
\newcommand*\xmark{\textcolor{NOred}{$\times$}}
\newcommand*\inftyno{\xmark}
\newcommand{\calR}{\mathcal{R}}
\newcommand{\calH}{\mathcal{H}}
\newcommand{\calD}{\mathcal{D}}
\newcommand{\newrule}{Exponential Rule}
\renewcommand{\leq}{\le}
\renewcommand{\geq}{\ge}
\newcommand{\score}{\mathit{sc}}
\begin{document}

\begin{abstract}
Perpetual voting was recently introduced as a framework for long-term collective decision making.
In this framework, we consider a sequence of subsequent approval-based elections and try to achieve a fair overall outcome.
To achieve fairness over time, perpetual voting rules take the history of previous decisions into account and
identify voters that were dissatisfied with previous decisions. In this paper, we look at perpetual voting rules from an axiomatic perspective and study two main questions. 
First, we ask how simple such rules can be while still meeting basic desiderata.
For two simple but natural classes, we fully characterize the axiomatic possibilities.
Second, we ask how proportionality
can be formalized in perpetual voting.
We study proportionality on simple profiles that are equivalent to the apportionment setting and
show that lower and upper quota axioms can be used to distinguish (and sometimes characterize) 
perpetual voting rules.
Furthermore, we show a surprising connection between a perpetual rule called Perpetual Consensus and Frege's apportionment method.
\end{abstract}

\section{Introduction}

In many voting scenarios, a group of voters, for example a committee  
or working group, has to make several decisions at
different points in time. If standard voting rules are used (such as Borda, plurality, etc.), it may happen 
that a majority dictates all decisions while some voters 
disagree with every outcome. This can lead to unrepresentative results and,
eventually, to dissatisfied voters dropping out of the voting process  \cite{debardeleben2009activating}.
Such situations are particularly undesirable if participation in the process 
is valued highly and if no extreme views are present in the electorate.
If a group of colleagues has to regularly agree on a meeting time, 
it is not acceptable if always the same colleague has to compromise.
Similarly, if a committee in a sports club is tasked with the organization of a 
Christmas party, no committee member should be completely ignored.

\emph{Perpetual voting}, recently introduced by Lackner \cite{Lackner2020}, is a formalism 
for tackling these types of long-term decision making processes.
From a formal point of view, a perpetual voting instance is a sequence of approval-based elections
where each decision has to be made `online', i.e., in the knowledge of past decisions
but without information about future elections.
Perpetual voting rules are deterministic, resolute functions that
take perpetual voting instances as input and that output a winning alternative for the current decision to be made.
Lackner \cite{Lackner2020} introduced several perpetual voting rules
that aim to achieve a fairer outcome over time as well as 
three basic axioms that formalize desirable properties in the perpetual setting:
(i) \emph{bounded dry spells} guarantee each voter a
satisfying outcome on a regular basis,
(ii) \emph{independence of uncontroversial decisions} states that unanimous decisions should
not impact other decisions, and (iii) \emph{simple proportionality} is a weak (minimal) proportionality requirement.
However, many axiomatic questions remain unanswered in the scope of perpetual voting. This paper aims to tackle two of them:

First, while Lackner \cite{Lackner2020} identified a rule that 
satisfies all three axioms, called Perpetual Consensus,
this rule requires a non-trivial calculation of weights and thus requires extra effort
to implement and explain in practice. In this paper, we investigate whether simpler 
voting rules can satisfy all -- or most of -- the basic axioms.
In order to do so, we define three simple sub-classes of 
perpetual voting rules: basic, win-based and loss-based 
weighted approval methods (WAMs). Win-based WAMs are
particularly interesting as they are a
natural adaption of the well-known sequential Thiele Methods from the 
multiwinner voting literature \cite{abcsurvey} to the perpetual voting
setting, We show that win-based and loss-based WAMs can satisfy at most 
one basic axiom at once.
In addition, we completely characterize the win- and loss-based WAMs that 
satisfy any of the basic axioms. Furthermore, we show that basic WAMs can satisfy
bounded dry-spells and independence of uncontroversial decisions at the same time. 
It remains open whether basic WAMs can even satisfy all three axioms.

Secondly, we investigate proportionality measures that strengthen the minimal requirement
of simple proportionality. 
In particular, we study the behavior of the perpetual voting rules
on simple decision sequences that are equivalent to the apportionment setting.
We show that, even in this very restricted setting, only few perpetual voting rules
show a proportional behavior. In particular, we show that Perpetual PAV 
is the only win-based WAM that satisfies apportionment lower quota.
In contrast, loss-based WAMs cannot satisfy even simple proportionality.
Moreover, we explore which apportionment methods arise if perpetual rules
are restricted to the apportionment setting.
Interestingly, Perpetual Consensus equals Frege's apportionment method (discussed below in the related work).
Finally, we discuss the possibility
of stronger proportionality axioms and introduce the notion of a 
perpetual proportionality degree.

\subsubsection*{Related work}

Perpetual voting is related to other approaches that consider either temporal aspects of voting or sequences of decisions.
We give a brief overview of the closest relatives.
\citet{Freeman2017Fair} proposed a sequential mechanism for the aggregation of utility functions over time with the goal to maximize long-term Nash welfare.
Perpetual voting does not assume utility functions (but approval-based preferences) and thus cannot rely on utility-based concepts such as Nash welfare.
Offline (non-sequential) variants of this formalism have been studied by \citet{conitzer2017fair} and \citet{freeman2018dynamic}.

The \emph{storable votes} method~\citep{casella2005storable,casella2012storable} is a voting rule that is applicable in the temporal setting of perpetual voting.
However, with storable votes, voters are required to strategically store and spend votes to achieve a fair outcome; this in contrast to perpetual voting rules where fairness (proportionality) should be achieved without strategic voting.

\emph{Sequential voting rules} \citep{lang2009sequential} work on combinatorial domains \citep{Lang2016VotinginCombinatorial} for which a natural order of issues exists.
Due to this order, paradoxes disappear that can occur in arbitrary combinatorial domains.
Perpetual voting and sequential voting differ in key aspects, e.g., 
perpetual voting rules depend on the history of previous decisions (this is not the case for sequential voting rules) and sequential voting was introduced to represent preferences in combinatorial domains (this is not the goal of perpetual voting).

\emph{Frege's method}~\cite{frege:2000a,frege}, proposed by the philosopher Gottlob Frege, is a voting rule with a distinct temporal component, but -- in contrast to perpetual voting -- 
is not intended to grant fairness to voters. Instead, candidates accumulate weight and thus Frege's method achieves outcomes that are fair for candidates. A modified version of Frege's method gives rise to
an apportionment method~\cite{frege} that is related to Perpetual Consensus (see Section~\ref{sec:lower-upper-quota}).

\section{The Perpetual Voting Framework}\label{sec:pv-preliminaries}

We will now introduce the perpetual voting formalism, as defined in \cite{Lackner2020},
alongside necessary basic definitions.
Let $N=\{1,\dots, n\}$ be a set of voters (agents). 
Given a set of alternatives~$C$, we assume that each voter $v\in N$ approves some non-empty subset of~$C$. An \emph{approval profile} $A=(A(1),\dots,A(n))$ for~$C$ is an $n$-tuple of subsets of $C$, i.e., $A(v)\subseteq C$ for $v\in N$.
We call the triple $(N, A, C)$ a \emph{decision instance}.

A \emph{$k$-decision sequence} $\calD=(N,\bar A, \bar C)$ is a triple consisting of a set of voters $N$, a $k$-tuple of approval profiles $\bar A=(A_1,A_2,\dots, A_{k})$ and an associated $k$-tuple of alternatives $\bar C=(C_1,\dots,C_{k})$.
Thus, for each $i\leq k$, the triple $(N,A_i,C_i)$ is a decision instance and can be seen as an individual decision to be made. 

We write $\bar w \in \bar C$ to denote a $k$-tuple $\bar w=(w_1,\dots, w_k)$ with $w_i\in C_i$ for $i\in \{1,\dots, k\}$; we refer to $\bar w$ as a \emph{$k$-choice sequence}.
This tuple represents the chosen alternatives in rounds $1$ to $k$.
We write $\bar w_{\leq t}$ to denote the t-tuple $(w_1,\dots, w_t)$.
If we combine a $k$-decision sequence $(N,\bar A, \bar C)$ and a $k$-choice sequence $\bar w \in \bar C$, we speak of a \emph{$k$-decision history} $\calH=(N,\bar A, \bar C, \bar w)$, which can be seen as the history of past decisions alongside the made choices.
We thus know, for any $i\leq k$, that in case of decision instance $(N,A_i,C_i)$ alternative $w_i$ was chosen. 

An important statistic of $k$-decision histories is the satisfaction of each voter:
Given a $k$-decision history $\calH=(N, \bar A, \bar C, \bar w)$,
the \emph{satisfaction} of voter $v\in N$ with $\bar w$ is $\satisfaction(v, \bar w)=|\{i\leq k: w_i\in A_i(v)\}|$.
Thus, the satisfaction of a voter is the number of past decisions that have satisfied this voter.
Note that although satisfaction 
clearly depends on $\calH$, we do not explicitly mention that in the notation (and other definitions throughout the paper).

Assume that a group of voters $N$ wants to take a decision and looks back at $k$~decisions already taken.
That is, we are presented with a $k$-decision history $\calH=(N,\bar A, \bar C, \bar w)$ and a decision instance $(N, A_{k+1}, C_{k+1})$.
The question now is which alternative in $C_{k+1}$ should be chosen, subject to the preferences in $A_{k+1}$ and under consideration of $\calH$.
An \emph{(approval-based) perpetual voting rule~$\calR$} is a function that maps a pair of a decision instance $(N, A_{k+1}, C_{k+1})$ and a $k$-decision history $\calH$ to an alternative in $C_{k+1}$.

Given a $k$-decision sequence $\calD=(N,\bar A, \bar C)$, we write $\calR(\calD)$ to denote the $k$-choice sequence $\bar w\in \bar C$ as selected by the perpetual voting rule $\calR$, that is, $\calR(\calD)=\bar w$ is inductively defined by $w_i=\calR(N, (A_1,\dots, A_i), (C_1, \dots, C_{i}), (w_1,\dots,w_{i-1}))$ for $i\leq k$.
We expect perpetual voting rules to be resolute, i.e., return exactly one winning alternative,
therefore we require a tie-breaking order to resolve ties.
Throughout the paper, we assume that there exists some arbitrary and fixed order for each set of alternatives that settles ties.

\subsubsection*{Weighted Approval Methods}

A natural approach to define perpetual voting rules is via weights: voters that have been previously neglected receive a higher weight, voters that are already very satisfied receive a lower weight.
In each round, the alternative that receives the highest sum of weighted approvals is selected.
This idea is captured in a broad sense by the class of \emph{perpetual weighted approval methods\footnote{
We note that WAMs in this paper are defined slightly more general than in \cite{Lackner2020}.} (WAMs)},
which contains most rules proposed in \cite{Lackner2020}.
These approval-based perpetual voting rules are defined as follows: Each voter has an assigned positive weight, which may change each round;
a larger weight corresponds to being assigned a higher importance.
Let $\alpha_k(v)$ denote voter $v$'s weight in round $k$. Weights are initialized with $\alpha_1(v)=1$ for all $v\in N$.
The weights of voters in the following rounds are a consequence of the previous history.
Formally, 
there exists a weight function $h$ such that for all $v\in N$,
$\alpha_{k+1}(v) = h(v, \calH)$.
Given a $k$-decision history $\calH = (N, \bar A, \bar C, \bar w)$ and a decision instance $(N,A_{k+1},C_{k+1})$,
the rule selects an alternative $w_{k+1}\in C_{k+1}$ with maximum weighted approval score. 
That is, 
the score of an alternative $c$ is defined as
\[\score_{k+1}(c) = \sum_{v\in N\text{ with } c\in A_{k+1}(v)}\alpha_{k+1}(v).\]

As we allow arbitrary functions for the computation of the weights,
WAMs can be extremely complex. However,
simple voting rules are often preferable in (perpetual) voting,
as they make it easier for voters to understand why an alternative was chosen.
Therefore, in the next section, we consider simpler forms of WAMs.

\section{Basic Weighted Approval Methods}\label{sec:basicWAMs}

In this section, we study three simple subclasses of WAMs: \emph{basic},
\emph{win-based} and \emph{loss-based WAMs}.
For basic WAMs, the weights of voters are calculated in a straight-forward fashion: a voter's weight depends only on the voter's weight in the previous round and whether the voter was satisfied with the previous decision.
\begin{definition}
A perpetual weighted approval method (WAM) is a \emph{basic WAM} if the weights of voters $v\in N$ can be computed via two functions $f$ and $g$ as follows:
\[\alpha_{k+1}(v) = \begin{cases} f(\alpha_k(v)) & \text{if }w\notin A_{k}(v),\\
g(\alpha_k(v)) & \text{if }w\in A_{k}(v).\end{cases}\]
We further require that $g(x) \leq x$ and $f(x) \geq x$ for all inputs $x$.
\end{definition}
The condition that $g(x) \leq x$ and $f(x) \geq x$ for all inputs is natural because being satisfied with a decision should not result in a larger future weight and 
being unsatisfied should not lead to a smaller weight in the future.
Note also that for basic WAMs the weight of a voter does not depend on the weights of other voters nor does it depend on the total number of voters
(the functions $f$ and $g$ are the same for all $N$).
Win- and loss-based WAMs are even simpler subclasses of basic WAMs
\begin{definition}
We call a basic WAM a win-based WAM if $f(x) = x$ and a loss-based WAM if $g(x) = x$.
\end{definition}
By definition, the weight of a voter in a win-based WAM only depends on how many rounds
she has already won (i.e., how many rounds she was satisfied with). Similarly, the weight of a voter in a loss-based WAM only depends
on how many rounds she already lost.
In particular, this means a win- or loss-based WAM is fully defined by a sequence $(w_1, w_2 \dots w_i, \dots)$
such that $w_i$ is the weight of a voter that has won resp.\ lost $i$ rounds.
Win-based WAMs are closely related to 
the (sequential) Thiele methods used in multi-winner voting~\cite{abcsurvey};
loss-based WAMs are related to dissatisfaction counting rules~\cite{ijcai/LacknerSkowron-multiwinner-strategyproofness}.
We consider two of the most common sequential Thiele methods
as win-based WAMs:

\begin{description}
\item[AV.] 
The simplest example of a WAM is approval voting (AV), which
completely ignores the history of past decisions.
AV corresponds to the basic WAM with $f(x) = g(x) = x$.

\item[Perpetual PAV.] This method is inspired by Proportional Approval Voting~\citep{Thie95a,Faliszewski2017MultiwinnerVoting:A,abcsurvey}, or more specifically by its sequential counterpart, and is thus based on the harmonic series.
The weight of voters is defined by
\begin{align*}
\alpha_{k+1}(v) &= \frac{1}{\satisfaction(v, \bar w)+1}= \begin{cases} \alpha_k(v) & \text{if }w_k\notin A_{k}(v),\\
\frac{\alpha_k(v)}{\alpha_k(v)+1} & \text{if }w_k\in A_{k}(v).\end{cases}.
\end{align*}
The last equality shows that Perpetual PAV is indeed a win-based WAM.

\end{description}

\noindent
Observe that AV is win-based and loss-based at the same time.
Moreover, the rule Perpetual Unit Cost considered in \cite{Lackner2020}
is a loss-based WAM. However, as this rule does not show a particularly
promising behaviour, we omit its definition here.
Finally, the following is an example for a basic WAM that is neither win- nor loss-based.

\begin{description}
\item[Perpetual Reset.] The weight of a voter is increased by one
if he is not satisfied with the winning alternative and reset to $1$ if 
he is satisfied, i.e., $f(x) = x +1$ and $g(x) = 1$.
\end{description}

Due to their sequential nature, all aforementioned rules can be computed in polynomial time.
We want to investigate if such simple voting rules can satisfy the requirements we have for
a fair and robust perpetual voting rule. In order to do so,
we consider the three basic axioms that were introduced in \cite{Lackner2020}.

\subsubsection*{Bounded Dry Spells}

The \emph{bounded dry spells} property guarantees that every voter is satisfied with at least one choice in a bounded number of rounds.
This property is very important for creating an incentive to participate in the decision making process.

\begin{definition}[Dry spells]
 Given a $k$-decision history $\calH=(N, \bar A,\bar C,\bar w)$, we say that a voter $v\in N$ has a \emph{dry spell of length $\ell$} if there exists $t\leq k-\ell$ such that $\satisfaction(v, \bar w_{\leq{t}})=\satisfaction(v, \bar w_{\leq t+\ell})$, i.e., voter $v$ is not satisfied by any choice in rounds $t+1,\dots,t+\ell$.

Let $d$ be a function from $\mathbb{N}$ to $\mathbb{N}$.
A perpetual voting rule $\calR$
has a \emph{dry spell guarantee of $d$} if for any decision sequence
$\calD = (N, \bar A,\bar C)$ and $\bar w = \calR(\calD)$, no voter has a dry spell of length $d(|N|)$.
A perpetual voting rule $\calR$ has \emph{bounded dry spells} if $\calR$ has a dry spell guarantee of some $d$.
\end{definition}
As win- and loss-based WAMs only consider the number of wins resp.\ losses but not the round in which they 
occur, a long winning streak can be followed by a arbitrarily long dry spell.

\newcommand{\notconst}{Every win-based and loss-based WAM has unbounded dry spells.}
\begin{proposition}
\notconst\label{prop:notconst}
\end{proposition}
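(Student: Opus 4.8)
The plan is to exploit the defining weakness of win- and loss-based WAMs: the weight of a voter depends only on her number of wins (resp.\ losses), never on \emph{when} those wins or losses occurred. The strategy is to construct, for any given rule $\calR$ of this type and any target dry-spell length $\ell$, a decision sequence on which some voter experiences a dry spell of length at least $\ell$, regardless of what $d$ the rule might promise. Since $\ell$ can be taken arbitrarily large while $|N|$ stays fixed, this contradicts the existence of any dry spell guarantee $d$.

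\textbf{Win-based case.} First I would fix a win-based WAM given by the weight sequence $(w_1, w_2, \dots)$ and take two voters, $N = \{1, 2\}$, with disjoint approval sets throughout. I would first run a ``warm-up'' phase of many identical rounds in which both voters approve distinct singletons; since AV-style tie-breaking is fixed, one of the two voters -- say voter $2$ -- will repeatedly lose while voter $1$ repeatedly wins, \emph{unless} at some point voter $1$'s accumulated wins drive her weight low enough that voter $2$ overtakes her. The key observation is that the crossover only ever happens after a \emph{bounded} number of rounds that depends on the weight sequence but not on $\ell$; once voter $1$ has won enough times, the relative ordering of the two voters' weights is frozen (a voter who keeps losing never changes her weight, and a voter who keeps winning only moves further in the same direction because $g(x)\le x$). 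So after this bounded warm-up I am in a stable regime where, say, voter $1$ keeps winning every round of a homogeneous block; I then extend that block to length $\ell$, and voter $2$ has a dry spell of length $\ell$. The one subtlety is handling the degenerate case where the weight sequence is constant (this is exactly AV, or any rule where $g(x)=x$): there, tie-breaking alone does the job, one voter loses forever.

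\textbf{Loss-based case.} This is the mirror image and is actually easier: a voter who keeps losing has increasing (or constant) weight, so her weight never helps her competitor catch up. I would again take $N=\{1,2\}$ with disjoint singletons. Because $f(x)\ge x$, the more voter $2$ loses the heavier she gets, so once voter $2$ has fallen behind there is no mechanism pulling her back -- in fact for loss-based WAMs the loser's weight only grows. The cleanest construction: start both at weight $1$; voter $1$ wins round $1$ by tie-breaking and, being a loss-based WAM, retains weight $1$, while voter $2$ now has weight $f(1)\ge 1$. If $f(1)>1$ then voter $2$ now wins, and we swap roles -- but then voter $2$'s weight drops back to (something $\ge$) $1$ via... no: in a loss-based WAM $g(x)=x$, so winning leaves the weight unchanged. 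I need to think about which voter ends up permanently behind; the robust move is to note that one voter's weight is non-decreasing in her loss count, pick whichever voter the dynamics leave permanently losing after a bounded transient, and stretch the corresponding homogeneous block to length $\ell$.

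\textbf{Main obstacle.} The delicate point in both cases is making rigorous the claim that the ``transient'' before the dynamics stabilize is bounded \emph{independently of $\ell$}. I expect to argue this by a monotonicity/potential argument: in the win-based case the winning voter's weight is non-increasing in her win count and bounded below by $0$, and each actual win while she is the sole winner strictly changes the comparison in a monotone way, so only finitely many rounds can pass before the pairwise ordering of the two weights becomes permanent; from that round on, a single homogeneous block of length $\ell$ yields the dry spell. I would present this as: assume for contradiction $\calR$ has dry spell guarantee $d$, set $\ell = d(2)$, run the warm-up to reach the stable regime, then append $\ell$ copies of the ``bad'' decision instance to exhibit a dry spell of length $\ell = d(2)$ on a sequence over $2$ voters -- contradiction. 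A brief remark should note the construction uses only two voters and only two alternatives per round, so the impossibility is as strong as possible.
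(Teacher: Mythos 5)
Your construction breaks down at the central claim of the win-based case: that after a bounded ``warm-up'' the pairwise ordering of the two voters' weights becomes frozen and one voter wins every subsequent round. This is false whenever $g$ is strictly decreasing on the achievable weights. With $N=\{1,2\}$ and disjoint singleton approvals, the round is won by whichever voter currently has the \emph{higher} weight, i.e., the \emph{fewer} wins; her weight then drops, the other voter's stays put, and the comparison flips back. The two win counts therefore never differ by more than one and the choice sequence alternates forever --- Perpetual PAV on $(\{a\},\{b\})^k$ is the concrete counterexample, producing dry spells of length at most $2$. The same collapse occurs in the loss-based case with $f(x)>x$ (e.g.\ $f(x)=x+1$): the voter with more losses is heavier and hence wins, so loss counts also stay within one of each other. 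You half-notice this in your loss-based paragraph (``we swap roles \dots no'') but do not resolve it; the monotonicity/potential argument you sketch shows only that each individual weight moves monotonically, not that the \emph{comparison} between the two weights stabilizes, and it is exactly the comparison that keeps oscillating. In short, two voters with disjoint ballots can never exhibit unbounded dry spells under a win- or loss-based WAM with strictly monotone weight function, so no amount of polishing rescues the two-voter plan outside the degenerate fixed-point case.

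The missing idea is a third voter (or a duplication argument) that manufactures an \emph{extreme} win or loss count which the two-voter dynamics cannot produce. The paper splits into two cases. If the relevant function has a reachable fixed point ($f(x^*)=x^*$ for a loss-based rule, resp.\ $g(x^*)=x^*$ for a win-based one), it duplicates a history reaching $x^*$ three times (this needs a consistency lemma saying weights are preserved under disjoint unions of decision sequences with identical winners) and then pits one weight-$x^*$ voter against two, with everyone else approving both alternatives; the weights never move again, so the lone voter starves indefinitely. Otherwise the function is strictly monotone on all reachable weights, and the paper first lets $v_1$ and $v_2$ fight for many rounds so that one of them accumulates roughly half the rounds as wins (resp.\ losses), while a third voter $v_3$ approves everything and thus wins every round, ending with the strictly smallest weight (win-based) resp.\ weight $1$ (loss-based); then $\{v_1,v_2\}$ versus $v_3$ yields an arbitrarily long dry spell for $v_3$, since a single opposing voter's weight already dominates $v_3$'s. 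Your degenerate-case observation (constant weight sequence, handled by tie-breaking) is correct but covers only the easiest subcase.
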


However, there are basic WAMs with bounded dry spells.

\newcommand{\setzerodsg}{Perpetual Reset has a dry spell guarantee of $2n-2$. This guarantee is tight, i.e., dry spells of length $2n-3$ may occur.}
\begin{proposition}[Lackner \cite{Lackner2020}]
\setzerodsg
\end{proposition}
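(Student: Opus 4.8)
The plan is to prove both halves separately: first the upper bound $2n-2$, then tightness via an explicit construction. For the upper bound, fix a decision sequence $\calD$, let $\bar w = \calR(\calD)$ be the choices made by Perpetual Reset, and suppose for contradiction that some voter $v$ has a dry spell of length $2n-2$, say over rounds $t+1, \dots, t+2n-2$. The key invariant to track is the weight $\alpha_k(v)$: under Perpetual Reset, a voter who has lost $\ell$ consecutive rounds (since the last win or since the start) has weight exactly $1+\ell$, which grows by $1$ each round of the dry spell. So by the end of the dry spell, $v$ has weight at least $2n-1$. The core idea is a potential/counting argument on the total weight: I would argue that over the $2n-2$ rounds of $v$'s dry spell, in each round the winning alternative $w_i$ must have weighted score at least $\alpha_i(v)$ (since $v$ does not approve $w_i$, but some coalition does, and that coalition outscores $v$'s singleton weight — this uses that $v$'s approval set is nonempty, so $v$ contributes to \emph{some} alternative's score, which is therefore at most the winner's score).

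The heart of the argument is then a ``charging'' of losing rounds to voters other than $v$. Each round $i$ in the dry spell, the winner $w_i$ is approved by a set $S_i \subseteq N \setminus \{v\}$ with $\sum_{u \in S_i} \alpha_i(u) \geq \alpha_i(v)$. I would track how often each voter $u \neq v$ can appear ``usefully'' in these sets: whenever $u$ wins (is satisfied), her weight resets to $1$; between resets her weight climbs. The budget lemma I would aim for is that the sum over the dry spell of $\alpha_i(v)$ — which is roughly $\sum_{j=\ell_0}^{\ell_0 + 2n-3} (1+j)$ for some starting offset — cannot be matched by the other $n-1$ voters' contributions unless the dry spell is short, because each other voter's weight trajectory is itself bounded by a similar linear function reset at each of her wins, and there are only $n-1$ of them. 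Carefully balancing ``$v$ grows linearly and unboundedly'' against ``the other $n-1$ voters must collectively always exceed $v$, but each resets on winning'' should force a win for $v$ within $2n-2$ rounds. I would organize this as: (1) weight formula for Perpetual Reset; (2) the ``winner outscores $v$'' inequality; (3) a pigeonhole/amortized count showing the other voters cannot sustain the required score for $2n-2$ rounds. I expect step (3) to be the main obstacle — getting the constant exactly $2n-2$ (rather than, say, $O(n)$ or $O(n^2)$) will require a tight amortized argument, likely showing that each other voter can be ``charged'' for at most two rounds of $v$'s dry spell on average, or a direct induction on $n$.

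For tightness, I would construct a family of instances with $n$ voters in which a dry spell of length exactly $2n-3$ occurs. The natural construction: arrange the alternatives and approvals so that voters $2, \dots, n$ take turns winning while voter $1$ is excluded, with the weights calibrated so that at each round the coalition backing the winner just barely beats voter $1$'s (growing) weight. Concretely, after voter $1$ has lost $j$ rounds her weight is $1+j$; I would pair or group the other voters so that their combined weight on the chosen alternative is $1+j+\varepsilon$ in the appropriate integer sense (using that their weights, after recent resets, are small), allowing the streak to persist as long as the arithmetic permits — which should be exactly $2n-3$ rounds before voter $1$'s weight becomes so large that no admissible coalition of the other $n-1$ voters can outvote her. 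The delicate part here is bookkeeping: ensuring the constructed sequence is internally consistent (each claimed winner really is the weighted-approval maximizer, including tie-breaking) and that it terminates precisely at length $2n-3$, not earlier or later. I would present the construction explicitly for small $n$ first to fix the pattern, then describe the general scheme.

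(Since this proposition is attributed to Lackner \cite{Lackner2020}, the paper may simply cite it; but the above is how I would reconstruct the proof.)
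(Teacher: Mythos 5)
Your plan follows the same route as the paper's proof: track the dry-spell voter's weight growing by one per round, observe that in each such round the winning coalition (a subset of the other $n-1$ voters) must have total weight at least that voter's current weight, and then show by an amortized accounting that the other voters cannot sustain this for $2n-2$ rounds; tightness is shown by an explicit round-robin construction. However, the step you explicitly defer -- your step (3) -- is precisely where the entire difficulty lies, and your proposed resolution (``each other voter can be charged for at most two rounds on average'', or induction on $n$) is not how the paper closes it, and I don't see how a per-voter charging scheme would yield the exact constant. The paper's device is a \emph{global} potential argument: let $s_t$ be the total weight reset in round $t$. Since the winning coalition in round $t$ of voter $n$'s dry spell outweighs voter $n$, whose weight is at least $t+1$ (plus her starting weight if the spell begins mid-sequence), one gets $s_t \geq t+1$, hence $\sum_{j\leq t} s_j \geq \frac{t(t+1)}{2}$ plus the starting weights of all voters reset so far. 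Meanwhile the other $n-1$ voters collectively gain only $n-1$ weight per round. Comparing the quadratic loss against the linear gain shows first that every other voter must have been reset at least once by round $2n-3$, and then that their total remaining weight in round $2n-2$ is at most $n-1 < 2n-2 \leq \alpha_{2n-2}(n)$, so voter $n$ wins. Without this aggregate quadratic-versus-linear comparison (or an equally tight substitute), your outline does not yet establish the bound $2n-2$ rather than some weaker $O(n)$ or $O(n^2)$ guarantee.

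A second, smaller gap: your upper-bound argument should be set up to handle a dry spell that starts after an arbitrary prefix, i.e., with arbitrary initial weights $a_1,\dots,a_n$ for all voters (the paper carries these through the inequality explicitly; they conveniently cancel). For tightness, your description of the construction is the right shape, but the bookkeeping you flag as delicate is genuinely the content of that half -- the paper resolves it with a concrete profile sequence in which voters $1,\dots,n-1$ are eliminated one by one into a growing bloc for a single alternative, with tie-breaking always against voter $n$, verified via an explicit table of weights.
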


\newcommand{\phragdsg}{Perpetual \phragmen{} has a dry spell guarantee of $2n-1$. This bound is tight.}

\newcommand{\consensusdsg}{Perpetual Consensus has a dry spell guarantee of at least $\frac{n^2+3n}{4}$.}

\newcommand{\dryspellinfty}{AV and Perpetual PAV 
have unbounded dry spells.}

\subsubsection*{Independence of Uncontroversial Decisions}

The second axiom concerns the impact of uncontroversial decisions, i.e., decisions where a choice can be made with unanimous agreement. 
Such uncontroversial decisions should not have an impact on other decisions, as otherwise the inclusion of such decisions could be used to manipulate the outcome of the decision process.
Formally, an approval profile $A$ is \emph{uncontroversial due to $c$} if $\bigcap_{v\in N}A(v) = \{c\}$.
Furthermore, given a $k$-tuple $L=(l_1,\dots,l_k)$ and $i\in\{0,\dots,k\}$, we write 
$L \oplus_i x$ to denote the $(k+1)$-tuple $L=(l_1,\dots,l_i, x, l_{i+1},\dots,l_k)$.

\begin{definition}[Independence of uncontroversial decisions]
A perpetual voting rule $\mathcal{R}$
satisfies \emph{independence of uncontroversial decisions} if for any $k$-decision sequence $(N, \bar A,\bar C)$, approval profile $A$ for $C$ that is uncontroversial due to $c$, and $i\in\{0,\dots,k\}$
it holds that 
\[\mathcal{R}(N, \bar A \oplus_i A,\bar C \oplus_i C) = \mathcal{R}(N, \bar A,\bar C)\oplus_i c.\]
\end{definition}

\newcommand{\propiud}{Perpetual PAV, 
Perpetual Reset, Rotating Dictator, and Perpetual \phragmen{} fail independence of uncontroversial decisions.}
\newcommand{\thmiud}{AV 
and Perpetual Consensus satisfy independence of uncontroversial decisions}
Clearly AV -- and any other loss-based WAM -- satisfies independence of uncontroversial 
decisions by definition, as the weight of a satisfied voter does not change.
It turns out that all basic WAMs that satisfy independence of uncontroversial 
decisions can only change the weight of a winning voter by multiplying it with a constant,
though this constant may change from round to round.
 
\newcommand{\IUDident}{Let $\calR$ be a basic WAM. 
Then, $\calR$ satisfies independence 
of uncontroversial decisions if and only if for every $k$ there is a constant $c_k$ such that
for every $x$ that can occur as a weight after $k$-rounds we have 
$g(x) = c_k x$ for $c_k > 0$.}
\begin{proposition}\label{prop:IUDident}
\IUDident
\end{proposition}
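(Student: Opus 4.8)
The plan is to prove both implications directly.

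\textbf{The ``if'' direction.} Fix a $k$-decision sequence $(N,\bar A,\bar C)$, a profile $A$ for $C$ that is uncontroversial due to $c$, and a position $i\in\{0,\dots,k\}$. First observe that in the inserted round the alternative $c$ lies in every voter's approval set, so its weighted-approval score equals the total weight of $N$; any other alternative with that score would also lie in $\bigcap_{v}A(v)=\{c\}$ and hence equal $c$, so $c$ is the unique maximiser and is selected irrespective of the tie-breaking order. Consequently every voter is satisfied in that round, so by the hypothesis every weight is multiplied by the single constant $c_i$. I would then show by induction over the subsequent rounds that the augmented run reproduces the original choices (shifted by one after position $i$) and that at the start of each such round the weight vector of the augmented run is a common scalar multiple of the corresponding weight vector of the original run: a common rescaling multiplies every alternative's score by that scalar and therefore leaves the argmax — and the tie-break, since the alternative set is identical — unchanged, which reproduces the original winner and hence each voter's original win/loss bit; for the voters who won, the hypothesis that $g$ acts by multiplication with the round-appropriate constant on every weight that can occur propagates the common-multiple property, and one checks that the update of the losing voters is compatible as well, so the invariant is inherited. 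Combining the untouched first $i$ rounds, the inserted round (choosing $c$), and this induction yields $\calR(N,\bar A\oplus_i A,\bar C\oplus_i C)=\calR(N,\bar A,\bar C)\oplus_i c$.

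\textbf{The ``only if'' direction.} I would argue contrapositively: suppose that for some $k$ there are weights $x\neq y$, both occurring after $k$ rounds, with $g(x)/x\neq g(y)/y$, say $g(x)/x>g(y)/y$. Because for a basic WAM a voter's weight after $k$ rounds depends only on her own sequence of wins and losses (not on $N$ or on the other voters), I can pick win/loss patterns realising $x$ and $y$ and build a $k$-decision sequence with two large blocks of voters, one following the $x$-pattern and one the $y$-pattern; in each round I make the intended winner the unique alternative approved by exactly the voters who should win it, scattering all remaining voters over pairwise distinct dummy alternatives so that the intended winner is the strict score maximiser. I then append one further round with two alternatives $p$ and $q$: the $x$-block together with a suitably chosen number of auxiliary voters approves $p$, the $y$-block together with suitably many auxiliary voters approves $q$, calibrated so that with the original weights $q$ wins but only barely. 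Prepending an uncontroversial unanimous decision rescales the $x$-block weights by $g(x)/x$ and the $y$-block weights by the strictly smaller $g(y)/y$ (the other voters are rescaled by the same $g(1)$-factor, which I absorb into the calibration), which tips the last round in favour of $p$; this contradicts independence of uncontroversial decisions. Positivity of $c_k$ is immediate since $g$ maps positive weights to positive weights.

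\textbf{Main obstacle.} I expect the harder part to be the ``only if'' construction: simultaneously realising two prescribed occurring weights inside a single decision sequence while keeping exact control over who wins each round, and then tuning the auxiliary voters in the final round so that the sole effect of the uncontroversial insertion is to flip the winner. In the ``if'' direction the delicate point is verifying that the common-scalar-multiple invariant really is preserved from round to round — this is precisely where one must use that $g$ is multiplication by the round-appropriate $c_k$ on all weights that can occur (and that these constants line up as the round count shifts by one because of the inserted decision).
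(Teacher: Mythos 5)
Your ``only if'' direction is essentially the paper's argument: insert an uncontroversial round immediately before a discriminating round, so that the two blocks of voters carrying weights $x$ and $y$ are rescaled by the different factors $g(x)/x$ and $g(y)/y$, and calibrate block sizes so this flips the winner. The paper executes this more carefully than your sketch: it first proves a consistency lemma (disjoint decision sequences with identical winner sequences can be merged without changing any voter's weight), which lets it realise the weights $x$ and $y$ in many copies inside one sequence, and it makes all auxiliary voters approve \emph{both} final alternatives so their contribution cancels, reducing the calibration to finding integers $m,m'$ with $0<mx-m'y<m'd$ where $g(y)=cy+d$. Your variant (scattering dummies, splitting auxiliaries between $p$ and $q$) can be made to work but needs more bookkeeping. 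The paper also treats the degenerate case $g(x)=0$ by a separate IUD argument rather than dismissing it; your appeal to positivity of weights is defensible given the WAM definition but is not how the paper handles it.

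The genuine problem is the ``if'' direction, and it sits exactly at the step you flagged and deferred: ``one checks that the update of the losing voters is compatible as well.'' It cannot be checked, because the hypothesis constrains only $g$, while after the inserted round the \emph{losers} of subsequent rounds are updated by $f$, which is arbitrary subject to $f(x)\geq x$. The common-scalar-multiple invariant therefore fails, and in fact the ``if'' direction as literally stated is false. Take $g(x)=x/2$ and $f(x)=x+1$; the hypothesis holds with $c_k=\nicefrac{1}{2}$ for every $k$. With six voters, let voters $1,\dots,5$ approve $\{a\}$ and voter $6$ approve $\{b\}$ in round~1, then voters $1,\dots,5$ approve $\{p\}$ and voter $6$ approve $\{q\}$ in round~2: the scores in round~2 are $5\cdot\tfrac12=\tfrac52$ versus $f(1)=2$, so $p$ wins. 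Inserting an uncontroversial round at position $0$ halves all weights; round~1 still elects $a$, but in the final round the scores become $5\cdot g(\tfrac12)=\tfrac54$ versus $f(\tfrac12)=\tfrac32$, so $q$ wins and independence of uncontroversial decisions is violated. So this direction needs an additional hypothesis on $f$ (e.g., that $f$ is also multiplication by a round-dependent constant, as happens for the Exponential Rule to which the proposition is later applied). You should not feel too bad about this: the paper's own appendix proof establishes only the ``only if'' direction and never argues the converse.
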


While loss-based WAMs cannot have bounded dry spells, the 
additional flexibility of a varying constant factor allows us to define
a basic WAM that has bounded dry spells and satisfies independence of uncontroversial decision.
Although this rule belongs in the class of basic WAMs, it is far from intuitive or simple.

\newcommand{\newruleDef}{
\begin{description}
\item[\newrule] Let $r(x)$ be the function that maps $x$ to $\frac{m+1}{2}$
where $m$ is the smallest integer such that there is a $l \geq 0$ for which $2^l x = m$. 
If no such $m$ exists, then $r(x)$ is undefined (these values are never required).
This means in particular that if $x$ is of the form $\frac{2k-1}{2^\ell}$ for integers $k$ and $\ell\geq 0$, then $r(x)=k$.
Then, \newrule{} is defined by 
\begin{align*}
f(x) = \left(\frac{2r(x)+1}{2r(x) -1}\right) x\quad \text{ and }\quad 
g(x) = \left(\frac{2r(x)+1}{2^{r(x)!}(2r(x) -1)}\right) x
\end{align*}
\end{description}}

\newruleDef

The weight functions $f$ and $g$ are chosen such that the weight of a voter will always be 
a fraction in which the numerator encodes which round we are in while the denominator
guarantees that voters with maximal dry spell are more influential than all other voters.
In particular, the numerator in round $k$ will be the $k$-th odd number for all voters,
while the denominator will always be of the form $2^l$ for some $l \in \mathbb{N}$, thus $r(x)$ computes the 
unique round in which the weight $x$ can appear.
Moreover, the ``penalty" for winning, i.e., $2^{r(x)!}$ is chosen such that, 
for every $k$, a voter who has won in round $1, \dots ,k$ has a higher weight than a voter 
who won in round $k+1$, which ensures bounded dry spells.

\newcommand{\newruleBD}{The \newrule{} has bounded dry spells and satisfies independence of uncontroversial decisions.}
\begin{proposition}\label{prop:newruleBD}
\newruleBD
\end{proposition}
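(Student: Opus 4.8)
The plan is to first put the weights produced by the \newrule{} into a normal form, deduce independence of uncontroversial decisions directly from Proposition~\ref{prop:IUDident}, and then control dry spells by a counting argument built on one elementary factorial inequality.

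First I would show, by induction on $k$, that after $k$ rounds every voter's weight has the form $\alpha_{k+1}(v) = \frac{2(k+1)-1}{2^{\ell_{k+1}(v)}}$, where $\ell_{k+1}(v) = \sum_j j!$ with $j$ ranging over the rounds $j \leq k$ in which $v$ was satisfied. The base case is $\alpha_1(v) = 1 = \frac{2\cdot 1 - 1}{2^0}$. For the step, if $\alpha_{k+1}(v) = \frac{2(k+1)-1}{2^\ell}$ then $r(\alpha_{k+1}(v)) = k+1$, so $f$ multiplies the weight by $\frac{2(k+1)+1}{2(k+1)-1}$ (replacing the numerator by the next odd number, denominator unchanged), while $g$ does the same and additionally multiplies the denominator by $2^{(k+1)!}$ (i.e.\ adds $(k+1)!$ to the exponent); either way the new weight has the claimed form for $k+1$. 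Consequently, for a fixed $k$ every weight $x$ reachable after $k$ rounds has $r(x) = k+1$, so $g(x) = \frac{2(k+1)+1}{2^{(k+1)!}(2(k+1)-1)} \cdot x$ is multiplication by one fixed positive constant; by Proposition~\ref{prop:IUDident}, the \newrule{} satisfies independence of uncontroversial decisions.

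For bounded dry spells, the key combinatorial fact is $\sum_{j=1}^{k} j! < (k+1)!$: a single win in a late round outweighs all earlier wins together. Suppose $v$ has a dry spell over rounds $t+1,\dots,t+\ell$ (the case $n=1$ being trivial). Then $v$'s exponent is frozen at $L_0 := \ell_{t+1}(v)$ throughout, and $L_0 < (t+1)!$ by the inequality. Since all weights in round $s$ share the numerator $2s-1$, the fact that $v$ is not satisfied in round $s$ forces the selected alternative $w_s$ (which $v$ does not approve) to satisfy $\sum_{u \,:\, w_s \in A_s(u)} 2^{-\ell_s(u)} \geq 2^{-L_0}$, and hence, by averaging over the at most $n-1$ such voters, to be approved by some voter $u' \neq v$ with $\ell_s(u') \leq L_0 + \log_2(n-1)$. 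This voter wins round $s$, so $\ell_{s+1}(u') \geq s!$. Because $L_0 < (t+1)!$, the smallest round $s$ with $s! > L_0 + \log_2(n-1)$ exceeds $t+1$ only by an amount bounded in terms of $n$ (independently of $t$); from that round on, each dry round permanently deletes a voter from the ``light'' set $\{u : \ell_s(u) \leq L_0 + \log_2(n-1)\}$. Since this set never gains members, always contains $v$, and has at most $n$ elements, after at most $n-1$ further dry rounds it equals $\{v\}$, whence $v$ must be satisfied --- a contradiction. Adding up the two $n$-bounded contributions gives an explicit dry-spell guarantee $d(n)$ (one can get $d(n) \leq 2n$).

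The crux --- and the main obstacle --- is that $d$ must depend on $n$ alone even though a dry spell can begin after an arbitrarily long prefix, during which $v$ may have accumulated a large frozen exponent $L_0$; a plain potential argument only bounds the dry spell in terms of $L_0$ and thus fails. The inequality $\sum_{j=1}^{k} j! < (k+1)!$ is exactly what removes this dependence: it forces $L_0 < (t+1)!$, so a voter who wins even once during the dry spell is driven to an exponent that, after only $n$-boundedly many rounds, dwarfs $L_0$ and so can never again help outvote $v$. Quantifying ``$n$-boundedly many rounds'' --- i.e.\ bounding the first $s$ with $s! > L_0 + \log_2(n-1)$ by a function of $n$ only, using $L_0 < (t+1)!$ --- is the delicate step, and it is precisely here that the factorial penalty $2^{r(x)!}$, rather than a fixed multiplicative penalty, is what makes the argument go through.
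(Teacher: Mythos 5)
Your proof is correct and follows essentially the same route as the paper's: the same normal form $\frac{2k-1}{2^{\sum j!}}$ for round-$k$ weights gives independence of uncontroversial decisions via Proposition~\ref{prop:IUDident}, and the dry-spell bound rests on the same factorial domination $\sum_{j\leq t} j! < (t+1)!$ combined with a pigeonhole over the other $n-1$ voters. The only difference is bookkeeping --- the paper tracks voters with maximal ``$h(n)$-dry spell'' relative to an absolute cutoff round $h(n)$, whereas you track an explicit ``light set'' with threshold $L_0+\log_2(n-1)$ relative to the start of the dry spell --- and the two arguments are interchangeable.
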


\subsubsection*{Simple Proportionality}

Voting rules with bounded dry spells can guarantee every voter some form of representation. 
An alternative approach is to look at groups of voters that have identical preferences and 
guarantee them a \emph{proportional} representation.

\begin{definition}[Simple proportionality]\label{def:sp}
We say that a $k$-decision sequence $\calD = (N, \bar A,\bar C)$ 
is simple if $A_1=\dots=A_k$, $C_1=\dots=C_k$, and $|A_1(v)|=1$ for all $v\in N$.
Given a simple decision sequence $\calD$ and a voter $v\in N$, let $\#v$ denote the number of voters with identical preferences, i.e., $\#v=|\{v'\in N: A(v')=A(v)\}|$.
A perpetual voting rule $\calR$ satisfies \emph{simple proportionality} if
for any simple $n$-decision sequence $\calD$ with $|N|=n$ 
it holds that $\satisfaction(v, \calR(\calD))= \#v$ for every voter $v\in N$.
\end{definition}

Although it is a rather weak proportionality requirement
(similar to weak proportionality in the apportionment setting~\citep{Balinski1982FairRepresentation:Meeting}),
it is sufficiently strong to reveal that some perpetual voting rules are not proportional.
Indeed, AV, 
Perpetual Reset and \newrule{} 
fail simple proportionality. On the other hand, Perpetual PAV 
satisfies simple proportionality.
(See \cite{Lackner2020}, Proposition~\ref{newruleFailSP} in the appendix and Appendix~\ref{App:orig-proofs} for proofs.)
Perpetual PAV witnesses that win-based WAMs can satisfy simple proportionality.
Loss-based WAMs, on the other hand, never satisfy simple proportionality.

\newcommand{\failsp}{AV 
and Perpetual Reset 
fail simple proportionality.}

\newcommand{\satisfysp}{Perpetual PAV, Perpetual Consensus, 
and Rotating Dictator
satisfy simple proportionality.
}

\newcommand{\SPropIUD}{No loss-based WAM satisfies 
simple proportionality.}
\begin{proposition}
\SPropIUD\label{prop:SPropIUD}
\end{proposition}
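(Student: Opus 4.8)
The plan is to parametrise a loss‑based WAM by the weight values it can assign. Since $g(x)=x$ and $\alpha_1(v)=1$, a voter's weight after a $k$‑decision history depends only on how often she has been dissatisfied: writing $w_0=1$ and $w_{t+1}=f(w_t)$, a voter dissatisfied $t$ times carries weight $w_t$, and $f(x)\ge x$ makes $(w_t)_{t\ge 0}$ non‑decreasing. On a simple decision sequence the electorate splits into one group per approved alternative, and if group $j$ has size $n_j$ and its alternative $c_j$ has been chosen $a_j$ times among the first $r$ rounds, then each of its voters has been dissatisfied $r-a_j$ times, so $\score_{r+1}(c_j)=n_j\cdot w_{\,r-a_j}$. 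I would assume that $\calR$ satisfies simple proportionality and extract incompatible constraints on $(w_t)$.

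First I would examine monotonicity and growth of $(w_t)$. For each $n$, take the simple $n$‑decision sequence whose $n$ ballots are pairwise distinct singletons; proportionality forces every alternative to be chosen exactly once. After $j$ rounds the $j$ alternatives chosen so far have weight $w_{j-1}$ and the other $n-j$ have weight $w_j$, so a not‑yet‑chosen alternative can win round $j+1$ only if $w_j>w_{j-1}$ — otherwise all weighted scores are equal and the fixed tie‑breaking order re‑picks the round‑$1$ winner, which is then chosen twice. Hence $(w_t)$ is strictly increasing. Running a singleton group against a group of size $m$ forces $w_m\ge m$ (the singleton group is satisfied exactly once, and only in a round where its score $w_{r-1}$ already reaches $m$). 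Running a group of size $k$ against a singleton group, the singleton group is satisfied exactly once and, at the start of the final round, the large group has weight $w_1$ while the singleton group has weight $w_{k-1}$; for the large group to win that round we need $k\,w_1\ge w_{k-1}$, and together with the trivial estimate when $w_{k-1}<k$ this gives $w_t\le(t+1)\,w_1$ for all $t$. So $(w_t)$ is squeezed between linear functions.

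The crux is a third family of instances. For $m\ge 1$ consider the simple $(m+2)$‑decision sequence with one group of size $2$ approving $c_1$ and $m$ singleton groups approving $c_2,\dots,c_{m+1}$, the alternatives labelled in decreasing tie‑breaking priority. Proportionality requires $c_1$ to be chosen exactly twice and each $c_i$ ($i\ge 2$) exactly once. Using the linear upper bound one checks that $c_1$ gets its second win no later than round $4$; afterwards every round must be won by a so‑far‑unsatisfied singleton, since such a singleton strictly outweighs a satisfied one, and for it to also outscore $c_1$ in round $r$ — where $c_1$ has $r-3$ defeats and score $2w_{r-3}$ — one needs $w_{r-1}>2w_{r-3}$ (on a tie the order hands the round to $c_1$, overshooting its quota). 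Iterating $w_i>2w_{i-2}$ from $i=4$ up to $i=m+1$ yields $w_{m+1}>2^{\lfloor(m-3)/2\rfloor}\cdot 4$, whereas $w_{m+1}\le(m+2)\,w_1$; since $w_1=f(1)$ is a fixed finite number, taking $m$ large enough makes these incompatible, the desired contradiction.

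The hard part will be the clause ``$c_1$ wins its second round by round $4$'' together with the pervasive tie‑breaking bookkeeping. When $w_1\le 2$ the group $c_1$ simply wins rounds $1$ and $2$ and the final phase starts cleanly; when $w_1>2$ it loses round $2$ and one must trace a few more rounds, the branches depending on how $w_2,w_3$ compare with $2w_1,2w_2$ — but the linear bound kills the ``slow'' branches (e.g.\ $w_2>2w_1$ and $w_3>2w_2$ is impossible because it forces $w_3\le 4w_1<w_3$), so $c_1$ is never far behind its two wins. I also expect to need to verify that in the final phase the unsatisfied singletons win exactly one per round (they are mutually tied and the tie‑breaking order runs through them in turn), and to dispose of the boundary equalities $w_t=t$ and $w_t=2w_{t-2}$ by noting that each either contradicts strict monotonicity or the linear bound, or else lets the tie‑breaking order produce an immediate quota violation.
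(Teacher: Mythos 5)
Your proof is correct, and while it shares its skeleton with the paper's argument --- assume simple proportionality, extract bounds on the weight sequence $w_t=f^t(1)$ from hand-crafted simple decision sequences, then exhibit an instance where the bounds clash --- the decisive gadget is genuinely different. The paper derives the same linear cap $f^\ell(1)\leq(\ell+1)f(1)$ from a $1$-vs-$(\ell+1)$ instance (your $k$-vs-$1$ instance), but then uses a $2$-vs-$(k-1)$ instance to obtain the \emph{lower} bound $f^{k-1}(1)\geq\tfrac{1}{2}(k-1)f(1)$ and finally pits a group of size $k$ against $k$ singletons: the group's score is quadratic in $k$ while the neglected singleton's weight is only linear, so that singleton never wins. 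You instead pit a group of size $2$ against $m$ singletons and observe that once the pair has exhausted its quota, each later round must go to a still-unsatisfied singleton, forcing $w_{r-1}>2w_{r-3}$ and hence exponential growth of $(w_t)$, which collides with the same linear cap. Your route needs two ingredients the paper does not make explicit --- strict monotonicity of $(w_t)$ (to rule out degenerate ties among singletons) and $w_m\geq m$ (to seed the exponential chain) --- plus the opening case analysis showing $c_1$ collects its second win by round~$4$; I checked the three branches ($w_1\leq 2$; $w_1>2$ with $w_2\leq 2w_1$; $w_1>2$ with $w_2>2w_1$, where $2w_2>4w_1\geq w_3$ settles round~$4$) and they close as you claim, with the tie-breaking priority of $c_1$ correctly turning every tie into a quota violation. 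The trade-off: the paper's quadratic-vs-linear clash needs one extra auxiliary instance but less bookkeeping; your exponential-vs-linear clash dispenses with the $2$-vs-$(k-1)$ gadget at the cost of heavier tie-breaking analysis in the opening rounds.
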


\begin{proof}[Proof sketch.]
Assume for the sake of a contradiction that $\calR$ is a loss-based WAM 
that satisfies simple proportionality.
Now, consider for an arbitrary $\ell \geq 1$ a $\ell+2$-decision sequence $(N,\bar A, \bar C)$ such that 
$N = \{v_0, \dots, v_{\ell+1}\}$,
$C_1 = \dots = C_{|N|} =  \{a,b\}$ and $v_0$ always votes $\{a\}$ and $v_i$ votes $\{b\}$
for all $i \in \{1,\dots, \ell +1\}$. Because $\calR$ satisfies simple proportionality
there are two possible cases in round $\ell+2$:
either $a$ has won one or zero times. In the first case, simple proportionality implies
\begin{equation}
 f^\ell(1) \leq (\ell+1) f(1) \label{eq:loss-sp-1}.\\ 
\end{equation}
In the second case, we must have $f^{i-1}(1) \leq \ell +1$ for all $i \leq \ell+1$, hence
\begin{equation}
 f^\ell(1) \leq \ell+1  \label{eq:loss-sp-2}
 \end{equation} 
 
Now consider a second $k+1$-decision sequence $(N,\bar A, \bar C)$ such that 
$N = \{v_0, \dots, v_k\}$, $C_1 = \dots = C_{|N|} =  \{a,b\}$,
$v_0, v_1$ always vote $\{a\}$ and $v_i$ votes $\{b\}$
for all $i \in \{2,\dots, k\}$. Then, there must be a round $i$ where 
$a$ wins the second time. In this round, the score of $b$ is $(k-1)f(1)$
and the score of $a$ is at most $2\cdot f^{k-1}(1)$.
As we know that $a$ wins in round $i$ we have 
\begin{align}
\nicefrac{1}{2}(k-1)f(1)\leq f^{k-1}(1).\label{eq:loss-sp-3}
\end{align}

Finally, consider a $2k$-decision sequence $(N,\bar A, \bar C)$ such that 
$N = \{v_1, \dots, v_k, w_1, \dots ,w_k\}$, $C_1 = \dots = C_{|N|} =  \{a,b_1, \dots, b_k\}$ and
$v_i$ always votes $\{a\}$ and $w_i$ votes $\{b_i\}$
for all $i \in \{1,\dots, k\}$.
Furthermore, assume w.l.o.g.\ that a tie-breaking is applied that always picks $b_i$ over $b_j$ if $i < j$.
Thus, $b_i$ for all $i \leq k-1$ must win before $b_k$. 
We claim that $b_k$ does not win any of the~$2k$ first rounds.
Let us consider any round in which all $b_i$ for $i \leq k-1$ have already won, hence $b_k$ could win.
Then, by Equation~\eqref{eq:loss-sp-3}, the score of $a$ is $k \cdot f^{k-1}(1) \geq \nicefrac{1}{2}\, k(k-1)f(1)$,
while the score of $b_k$ is at most $f^{2k-1}(1)$.

We distinguish whether Equation~\eqref{eq:loss-sp-1} or~\eqref{eq:loss-sp-2} holds.
If~\eqref{eq:loss-sp-1} holds (for $\ell=2k-1$), the score of $b_k$ is $f^{2k-1}(1) \leq 2kf(1)$, which is at less than the score of $a$ for sufficiently large $k$.
Hence, $b_k$ does not win and $\calR$ does not satisfy simple proportionality.
If~\eqref{eq:loss-sp-2} holds (for $\ell=2k-1$), the score of $b_k$ is $f^{2k-1}(1) \leq  2k$.
Furthermore, we know $f(1) \geq 1$.
Now, for any $k$ large enough $2k <\nicefrac{1}{2}(k-1)k \leq \nicefrac{1}{2}(k-1)kf(1)$.
Hence, $b_k$ does not win and $\calR$ does not satisfy simple proportionality.
\end{proof}

Moreover, even for win-based WAMs, the class of rules that satisfies simple proportionality
is quite restricted, as the following classification result shows.

\newcommand{\simplepropChar}{Let $\mathcal{R}$ be a win-based WAM.
Furthermore, define the sequence $w= (1, g(1), g(g(1)), g(g(g(1))), \dots)$.
Then, $\mathcal{R}$ satisfies simple proportionality if and only if 
$xw_x<(y+1)w_y$ for all integers $x,y\geq 0$.}
\begin{theorem}\label{simplepropChar}
\simplepropChar
\end{theorem}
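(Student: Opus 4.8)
The plan is to first restate the situation in apportionment terms. Since $\calR$ is win-based, a voter's weight depends only on how many past rounds satisfied her, so the sequence $w$ from the statement, which satisfies $w_0=1$ and $w_{i+1}=g(w_i)$, records every weight that can occur; because $g(x)\le x$ and all weights are positive, $w$ is a positive, non-increasing sequence. On a simple decision sequence, group the voters by their singleton approval set: if $n_c$ voters approve $c$ and $c$ has won $s$ of the previous rounds, then every such voter currently has weight $w_s$, so the weighted approval score of $c$ in the next round is exactly $n_c\cdot w_s$. Hence running $\calR$ on a simple $n$-decision sequence ($|N|=n$) is the deterministic process that, each round, hands a ``seat'' to the alternative maximising $n_c\,w_{s_c}$, where $s_c$ is the number of rounds $c$ has already won (ties broken by the fixed order); simple proportionality then demands that after $n$ rounds each $c$ has received exactly $n_c$ seats.

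For ``$\Leftarrow$'', assume $xw_x<(y+1)w_y$ for all $x,y\ge 0$, and suppose for contradiction that on some simple $n$-decision sequence an alternative ends up above its quota; since the total number of seats is $\sum_{c}n_c=n$, this also forces an alternative below its quota. Let $t$ be the first round in which some alternative $c$ wins for the $(n_c+1)$-th time. Just before round $t$, $c$ has exactly $n_c$ wins and no alternative is yet above its quota, so only $t-1<n=\sum_{c'}n_{c'}$ seats have been handed out and some alternative $c'$ has won only $s<n_{c'}$ times. Then, in round $t$, the score of $c'$ is $n_{c'}w_s\ge(s+1)w_s>n_cw_{n_c}$ — the first inequality because $n_{c'}\ge s+1>0$ and $w_s>0$, the second by the hypothesis with $x=n_c$, $y=s$. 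So $c$ does not have the strictly largest score in round $t$ and cannot win it, a contradiction; hence $\calR$ satisfies simple proportionality.

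For ``$\Rightarrow$'' I would prove the contrapositive. Suppose $xw_x\ge(y+1)w_y$ for some $x,y\ge 0$; then $x\ge 1$, as the right-hand side is positive. Let $a$ be the alternative that wins ties in the two-element set $\{a,b\}$, and consider the simple $n$-decision sequence in which $x$ voters always approve $\{a\}$, $y+1$ voters always approve $\{b\}$, and every round offers $\{a,b\}$; here $n=|N|=x+y+1$. If $\calR$ satisfied simple proportionality here, $a$ would win exactly $x$ rounds and $b$ exactly $y+1$; look at the round $r$ in which $b$ wins for the $(y+1)$-th, hence final, time. Immediately before round $r$, $b$ has exactly $y$ wins, so its score is $(y+1)w_y$, while $a$ has some number $s\le x$ of wins, so its score is $xw_s$ with $w_s\ge w_x$ by monotonicity of $w$. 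Since $b$ wins round $r$ and ties favour $a$, we would need $(y+1)w_y>xw_s\ge xw_x\ge(y+1)w_y$, a contradiction; hence $\calR$ fails simple proportionality.

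The step I expect to be the crux is the forward direction: one cannot simply prescribe who wins each round — the winners are forced by $\calR$ — so the proof must pinpoint a configuration the rule is compelled to reach (here, the round carrying $b$'s last winning seat) and extract the contradiction from it. A secondary subtlety is tie-breaking: since simple proportionality must hold for every simple decision sequence, we may place the block of $x$ voters on whichever of the two alternatives the fixed tie-breaking order prefers, which is exactly what makes the argument survive the boundary case $xw_x=(y+1)w_y$.
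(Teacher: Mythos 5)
Your proof is correct and follows essentially the same route as the paper's: both directions hinge on the same constructions (locating the first round in which some alternative exceeds its quota, and the two-group profile with $x$ and $y+1$ voters where tie-breaking favours the $x$-group). The only cosmetic difference is that you apply the hypothesis directly with $y=s$ and rely only on the non-strict monotonicity $w_{i+1}\le w_i$ that comes from $g(x)\le x$, whereas the paper first derives strict monotonicity $w_x<w_{x-1}$ as a separate step --- a minor streamlining, not a different argument.
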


Examples of such rules include PAV with $w=(1, \nicefrac 1 2 , \nicefrac 1 3 , \dots)$ but also,
for example, $w=(1, \nicefrac{1}{c+1} , \nicefrac{1}{2c+1} , \dots)$ for all $\nicefrac 1 2 <c \leq 1$.
From this and Proposition~\ref{prop:IUDident}, we can conclude that there
is no win-based WAM that satisfies independence of uncontroversial decisions
and simple proportionality at the same time.

\begin{table}
\begin{center}
\begin{tabular}{lccccccc}
                     & BD        & IUD      & SP       &  BD+IUD   & BD+SP  & IUD+SP & BD+IUD+SP\\
\midrule                   
win-based WAMs        & \xmark    & \cmark   & \cmark   &  \xmark     & \xmark   & \xmark   &  \xmark     \\
loss-based WAMs       & \xmark    & \cmark   & \xmark   &  \xmark     & \xmark   & \xmark   &  \xmark     \\
basic WAMs            & \cmark    & \cmark   & \cmark   &  \cmark     & ?        & ?        &  ?          \\
WAMs                  & \cmark    & \cmark   & \cmark   &  \cmark     & \cmark   & \cmark   &  \cmark     \\
\end{tabular}
\caption{An overview of classes of perpetual voting rules and the axiomatic analysis with respect to Bounded Dry Spells (BD), Independence of Uncontroversial Decisions (IUD), and Simple Proportionality (SP).}
\label{tab:overview2}
\end{center}
\end{table}

\newcommand{\winbasedspropIUD}{No win-based WAM
satisfies simple proportionality and independence of uncontroversial decisions
at the same time.}
\begin{proposition}\label{prop:winbasedspropIUD}
\winbasedspropIUD
\end{proposition}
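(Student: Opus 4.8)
The plan is to derive the statement from Theorem~\ref{simplepropChar} and Proposition~\ref{prop:IUDident}, as announced above. Suppose for contradiction that a win-based WAM $\calR$ satisfies both independence of uncontroversial decisions and simple proportionality. By Proposition~\ref{prop:IUDident}, for every $k$ there is a constant $c_k>0$ with $g(x)=c_k\,x$ for every weight $x$ that can occur after $k$ rounds; the key point is that all $c_k$ must coincide. Since $\calR$ is win-based we have $f(x)=x$, so an outvoted voter never changes her weight: on a decision sequence in which one fixed voter approves $\{a\}$ in every round while sufficiently many other voters approve $\{b\}$ in every round, that fixed voter is beaten in all rounds (for $n$ large her weight $1$ is outweighed, since during the first $k$ rounds the rival group's combined weight stays above $1$ because $g(x)\le x$ keeps each $g^{(j)}(1)$ positive), so the weight $1$ occurs after $k$ rounds for every $k\ge 1$. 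Hence $c_k=g(1)$ for all $k$; writing $c:=g(1)$ we get $0<c\le 1$ (using $g(x)\le x$) and $g(x)=c\,x$ on every attainable weight.

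Next I would identify the sequence $w=(1,g(1),g(g(1)),\dots)$ used in Theorem~\ref{simplepropChar}. In a win-based WAM a voter who has won $j$ times has weight $g^{(j)}(1)$, and each of the weights $1,c,c^2,\dots$ is attainable (for instance by a single voter who wins every round); applying $g(x)=c\,x$ at each step then yields, by induction, $w_j=c^j$.

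Finally, Theorem~\ref{simplepropChar} says that simple proportionality would require $x\,c^x<(y+1)\,c^y$ for all integers $x,y\ge 0$. If $c=1$ this already fails at $x=1,y=0$. If $c<1$, then $(y+1)c^y\to 0$ as $y\to\infty$, so fixing $x=1$ and taking $y$ large enough violates the inequality. In either case we reach a contradiction, so no win-based WAM satisfies both axioms. The only delicate step is the first one --- showing that independence of uncontroversial decisions forces the multiplicative factors $c_k$ down to a single constant --- which is why the ``outvoted voter'' construction should be made explicit; once the weight sequence is pinned down to $w_j=c^j$, the rest is immediate from Theorem~\ref{simplepropChar}.
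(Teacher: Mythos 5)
Your proposal is correct and follows essentially the same route as the paper's proof: both first use the fact that the weight $1$ can occur after any number of rounds to show that independence of uncontroversial decisions forces $g(x)=c\,x$ for a single constant $c$ (the paper realizes this with $k$ voters holding disjoint singleton ballots rather than your single perpetually outvoted voter), and both then compute $w_j=c^j$ and violate the inequality of Theorem~\ref{simplepropChar} by fixing $x=1$ and letting $y$ grow. The only cosmetic difference is in the final contradiction: you note $(y+1)c^y\to 0$ directly, whereas the paper rearranges the same inequality to $(1/y)^{1/(y-2)}<c$ and lets the left-hand side tend to $1$.
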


Table~\ref{tab:overview2} summarizes the results of this section.
We conclude that win- and loss-based WAMs are too simple to satisfy
all three desiderata for a perpetual voting rule. On the other hand, it is still open
whether basic WAMs can satisfy all desiderata we considered in this section.
However, the \newrule{} shows that even basic WAMs can already be quite complex
and unintuitive. 
Hence, the presented results point towards an incompatibility between simplicity and the satisfaction of all three considered axioms.

As Table~\ref{tab:overview2} shows, there exist WAMs that satisfy all three axioms discussed in this section. Perpetual Consensus is such a rule (introduced in \cite{Lackner2020}; we define it in the following section).
Our next goal is to analyze which rules satisfy stronger 
proportionality requirements than simple proportionality.

\section{A Closer Look at Proportionality}\label{sec:proportionality}

We now want to 
figure out which 
perpetual voting rules can be deemed proportional; simple proportionality---as the name suggest---might be a too weak requirement for this.
Indeed, consider the following perpetual rule:

\begin{description}
\item[Rotating Dictator.] Let $(v_0, \dots v_{n-1})$ be an enumeration of the voters.
In round $k$, voter $v_{(k\mod n)}$ is a dictator, i.e., an alternative is selected from $A_k(v_{(k \mod n)})$ according to some tie-breaking rule. This rule is a WAM since the weight of the dictator can be set to 1, all others to 0.
\end{description}
 
This rule is certainly not proportional in a strong sense, because in every round,
it only takes the preferences of one voter into account. Nevertheless it can be checked that
it satisfies simple proportionality (see Proposition~\ref{RotDictSP} in the appendix).

Before we define stronger notions of proportionality,
let us first introduce two further perpetual rules:
Perpetual Consensus, introduced in \cite{Lackner2020}, 
and Perpetual \phragmen{}, a new rule based on \phragmen's sequential rule.
As we will see, both of them can be viewed as proportional---but each in a different sense.

\begin{description}
\item[Perpetual Consensus.]
This WAM is based on the idea that the weight of voters that are satisfied with a decision is reduced in total by $n$ and this number is divided equally among them.
Consequently, voters can have negative weights\footnotemark; voters with negative weights are not taken into account when determining the winning alternative.
After each decision, the weight of all voters is increased by $1$. 
Formally,
let $N_k^+(c) = \left\{ v\in N : c\in A_{k}(v) \text{ and } \alpha_{k}(v)>0\right\}$,
and for all $v\in N$, $\alpha_1(v)=1$  and
\[\alpha_{k+1}(v) = \begin{cases} \alpha_k(v)+1 & \text{if }v\notin N_{k}^+(w_k),\\
\alpha_k(v)+1-\frac{n}{|N_k^+(w_k)|} & \text{if }v\in N_{k}^+(w_k).\end{cases}\]
Thus, the  score of an alternative $c$ is defined as
\[\score_{k+1}(c) = \sum_{v\in N_{k+1}(c)}\max(0,\alpha_{k+1}(v)).\]

\footnotetext{
Although this definition does not quite fit the definition of WAMs because of negative weights,
it can easily be adapted to that framework by defining a voting rule that assigns the
same weights as Perpetual Consensus if $\alpha_k(v)$ is positive and $0$ otherwise.
}

\item[Perpetual \phragmen.] This rule is an adaption of \phragmen's Sequential Rule \citep{Phra94a,Brill2017Phragmens}. It can be described as a load distribution procedure. We assume that winning a round incurs a load of $1$, which is distributed to a set of voters that jointly approve the winning alternative. Let $\ell_k(v)$ denote the load assigned to voter~$v$ in rounds~$1$ to~$k$ ($\ell_1(v) = 0$).
In round $k+1$,
for each set of voters $N'$ that jointly approves at least one alternative ($\bigcap_{v\in N'} A_{k+1}(v)\neq \emptyset$), we calculate 
\[\ell_{k+1}(N')=\frac{1 + \sum_{v\in N'} \ell_k(v)}{|N'|};\]
this is the load that each of these voters would have if they were selected to choose the winning alternative.
As Perpetual \phragmen{} aims at keeping voters' loads as small as possible, 
the set of voters $N'$ is selected where $\ell_{k+1}(N')$ is minimal.
If the intersection of their approval sets contains more than one alternative, a tie-breaking rule is used. If more than one set of voters exists with minimal $\ell_{k+1}(N')$, then the set of voters according to another arbitrary tie-breaking order is chosen. 
Let $N'$ be the set of voters selected to choose an alternative.
Then the loads in the next round are defined as
\[\ell_{k+1}(v) = \begin{cases} \ell_k(v) & \text{if }v\notin N',\\
\ell_{k+1}(N') & \text{if }v\in N'.\end{cases}\]

\end{description}

Perpetual \phragmen{} has the following basic properties.

\newcommand{\phragmenprops}{
Perpetual \phragmen{} 
\begin{enumerate*}[label=(\roman*)]
\item satisfies simple proportionality,
\item has a dry spell guarantee of $2n-1$ (this bound is tight),
\item fails independence of uncontroversial decisions,
\item is not equivalent to any WAM, and \label{phragWAM}
\item is computable in polynomial time.
\end{enumerate*}}

\begin{proposition}\label{prop:phragmen-props} \phragmenprops
\end{proposition}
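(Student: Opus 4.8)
I will prove the five parts separately, after recording two facts about the load vectors that parts~(i) and~(ii) build on. First, each round distributes a total load of exactly~$1$, so $\sum_{v}\ell_k(v)=k$ after $k$ rounds. Second, the set $N'_s$ selected in round~$s$ never contains a voter whose load would strictly decrease --- deleting such a voter from $N'_s$ leaves a valid candidate of strictly smaller value --- so individual loads are non-decreasing; since the singleton $\{v\}$ is always a valid candidate, the load assigned in round~$k$ is at most $1+\min_u\ell_{k-1}(u)$, which yields the uniform bound $\max_u\ell_k(u)\le 1+\min_u\ell_k(u)$.

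For~(i): on a simple decision sequence the voters split into groups $G_c=\{v:A_1(v)=\{c\}\}$, and a set has non-empty common approval iff it lies inside one group. I would first show, by induction on the round, that all voters of $G_c$ share the load $t_c/|G_c|$ (with $t_c$ the number of previous wins of $c$) and that the cheapest subset of $G_c$ is $G_c$ itself, of value $(t_c+1)/|G_c|$; hence Perpetual \phragmen{} runs the sequential divisor procedure awarding the next seat to the group minimising $(t_c+1)/|G_c|$, i.e.\ D'Hondt/Jefferson. A second induction shows a group is never chosen once $t_c=|G_c|$ while some group still has $t_c<|G_c|$ (marginal value above~$1$ versus at most~$1$), and since after $n-1$ rounds some group is still unfilled, after $n=|N|$ rounds each group $c$ has been chosen $|G_c|$ times, i.e.\ $\satisfaction(v,\calR(\calD))=\#v$.

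For~(ii): suppose $v$ is unsatisfied in rounds $t+1,\dots,t+\ell$, so $\ell_t(v)=\dots=\ell_{t+\ell}(v)=:L$ and $v\notin N'_s$ for these~$s$. The singleton bound gives $\min_u\ell_t(u)\ge L-1$, so the potential $\Phi_k=\sum_u\max(0,(1+L)-\ell_k(u))$ starts at $\Phi_t\le 1+2(n-1)=2n-1$ (voter~$v$ contributes exactly~$1$, every other voter at most~$2$). In each dry-spell round the candidacy of $\{v\}$ caps the assigned load at $1+L$, so no load crosses $1+L$; I would then check that $\Phi$ drops by exactly~$1$ per round, entirely through voters of $N'_s\subseteq N\setminus\{v\}$. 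As $\Phi\ge1$ always, $\ell\le\Phi_t-1\le2n-2$, so no dry spell of length $2n-1$ exists; a matching construction (using rounds with pairwise-disjoint ballots, so that loads evolve predictably) realises a dry spell of length $2n-2$.

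The remaining parts are shorter. For~(v), a round only needs, for each alternative~$c$, the cheapest subset of its approver set, which is a prefix in the load-sorted order and hence found in near-linear time; so a round, and the whole rule, is polynomial. For~(iii), I would take two voters with disjoint singleton ballots in round~$1$ (one wins and ends with load~$1$, the other with load~$0$), then insert a unanimous decision on some~$c$ (equalising the loads to $(1,1)$), then a further round with disjoint singleton ballots: without the insertion this round is \emph{untied} and has a definite winner, whereas with the insertion it is a tie resolved the same way as round~$1$, so the winners differ --- and both resolutions of the round-$1$ tie violate the axiom. For~(iv), the point is that a WAM's weights depend only on the history, so after a fixed history they are fixed numbers ranking \emph{every} current profile by a single additive approval score, whereas Perpetual \phragmen's effective value of an alternative, $\min_{\emptyset\neq T\subseteq S}\frac{1+\sum_{v\in T}\ell_k(v)}{|T|}$ with $S$ its approver set, is non-additive in~$S$ (a high-load voter adds nothing alongside a low-load co-approver but is decisive as a sole approver); I would build a history whose load vector admits two current instances forcing one voter's weight to be simultaneously $0$ (via an alternative whose win is insensitive to adding a high-load approver) and positive (via a close two-alternative decision among high-load voters). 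The main obstacle in~(iii) and especially~(iv) is the book-keeping around tie-breaking: each inequality one reads off from a \phragmen{} choice is individually satisfiable, so the contradiction has to exploit ties that \phragmen{} resolves through its voter-set order while a WAM can only resolve them through the alternative order; the explicit constructions are deferred to the appendix.
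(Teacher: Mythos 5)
Parts (i), (ii), (iii) and (v) of your sketch are sound. For (iii) and (v) you reproduce essentially the paper's own arguments (the two-voter insertion example and the prefix-in-load-order observation). For (i) and (ii) you take genuinely different but valid routes: the paper derives (i) by citing the known correspondence between \phragmen's sequential rule and D'Hondt on party-list profiles, whereas you re-prove that correspondence from scratch via the two inductions on group loads --- more self-contained, at the cost of length. For the upper bound in (ii), the paper argues by averaging the total load of $N\setminus\{v\}$ over $2n-1$ rounds and contradicting the fact that loads never differ by more than $1$; your potential $\Phi_k=\sum_u\max(0,(1+L)-\ell_k(u))$, which starts at most $2n-1$, stays at least $1$, and drops by exactly $1$ per dry-spell round, is a correct repackaging of the same two facts and arguably cleaner. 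The tightness construction for (ii) is only gestured at, but the disjoint-ballot construction you indicate is exactly the paper's and routine to complete.

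The genuine gap is part (iv). There the entire mathematical content is an explicit history and a family of round-$(k+1)$ instances whose \phragmen{} outcomes impose mutually inconsistent constraints on any putative weight vector, and you defer precisely this construction. Your stated plan --- force some voter's weight to be ``simultaneously $0$ and positive'' --- is also not quite the right target: as you yourself observe, each single \phragmen{} choice only yields an inequality (or, via a tie resolved by the voter-set order, an equality), so the contradiction must come from combining a strict inequality with an exact equality on the \emph{same} linear combination of weights. The paper does this concretely with $7$ voters and loads $(\nicefrac13,\dots,\nicefrac13,0)$ after two rounds: one instance with all alternatives tied forces the six load-$\nicefrac13$ voters to share a weight $x$ (and voter $7$ some weight $y$), a second instance where two such voters beat voter $7$ forces $2x>y$, and a third instance with a \phragmen{} tie at load $\nicefrac23$ forces $3x=x+y$, i.e.\ $2x=y$ --- a contradiction. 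Without exhibiting such a triple of instances, part (iv) is not proved; the non-additivity observation alone does not rule out that a WAM happens to agree with \phragmen{} on every reachable instance.
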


\begin{proof}[Proof of \ref{phragWAM}]
We consider a decision sequence with 7 voters.
In the first two rounds, the preferences are \[(\{a\},\{a\},\{a\},\{b\},\{b\},\{b\},\{c\}).\]
Thus, $a$ wins in the first round (we assume alphabetic tie-breaking) and the corresponding loads are $(\frac 1 3,\frac 1 3,\frac 1 3,0,0,0,0)$.
In the second round $b$ wins and the loads are $(\frac 1 3,\dots,\frac 1 3,0)$.
Assume towards a contradiction that Perpetual \phragmen{} is a WAM. 
We can thus assign weights in some fashion; let these be $x_1,\dots,x_7$.

We now consider several decision instance for round three.
First, if the preferences are \[(\{a\},\{b\},\{c\},\{d\},\{e\},\{f\},\{a,b,c,d,e,f\}),\] then all alternatives are tied.
Thus, we can conclude that $x_1=x_2=x_3=x_4=x_5=x_6$; let $x=x_1=\dots=x_6$ and $y=x_7$.
Second, if the preferences are \[(\{a\},\{a\},\{b\},\{c\},\{d\},\{e\},\{f\}),\] then $a$ wins (the load of voter 1 and 2 would increase to $\frac 5 6$).
Thus, we infer that $2x > y$.
Finally, we consider \[(\{a\},\{a\},\{a\},\{b\},\{c\},\{c\}).\]
Here $a$ and $c$ are tied; in both cases the load of the corresponding voters would increase to $\frac 2 3$.
Thus, it holds that $3x = x+y$ and in turn $2x = y$. This contradicts our previous result that $2x > y$.
We conclude that Perpetual \phragmen{} cannot be ``simulated'' by a WAM.
\end{proof}

\subsection{Upper and Lower Quota in the Apportionment Setting}\label{sec:lower-upper-quota}

Ideally, we would like to define proportionality on arbitrary instances.
Unfortunately, Lackner \cite{Lackner2020} has shown that there are instances
in which every choice sequence is unproportional in a strong sense.
Instead, we enlarge the class of decision sequences by allowing simple decision sequences of arbitrary length.
This coincides with the apportionment setting \cite{Balinski1982FairRepresentation:Meeting,Brill2017MultiwinnerApprovalRules}. 
Apportionment is the problem of assigning seats to political parties in parliamentary elections. This setting is directly captured by simple profiles (Definition~\ref{def:sp}): voters approve a single alternative (this alternative corresponds to a party) and each round one party is selected (this corresponds to one party being assigned a seat).
We consider two axioms:

\begin{definition}[Apportionment lower/upper quota]
A perpetual voting rule $\calR$ satisfies \emph{apportionment lower quota (ALQ)} if for any simple decision sequence $\calD$, it holds that 
for every voter $v\in N$, \[\satisfaction(v, \calR(\calD))\geq\left\lfloor{k\cdot \frac{\#v}{n} }\right\rfloor.\]
A perpetual voting rule $\calR$ satisfies \emph{apportionment upper quota (AUQ)} if for any simple decision sequence $\calD$, it holds that 
for every voter $v\in N$, $\satisfaction(v, \calR(\calD))\leq\left\lceil{k\cdot \frac{\#v}{n} }\right\rceil$.
\end{definition}

This is still a weak notion of proportionality, but it suffices to show that Rotating Dictatorship is not proportional
as it does not satisfy either ALQ or AUQ (see Proposition~\ref{RotDictALQ} in the appendix). Moreover,
with these stronger axioms, we can extend Theorem~\ref{simplepropChar} to a full
characterization of Perpetual PAV among the win-based WAMs.

\newcommand{\pavchar}{Let $\calR$ be a win-based WAM that satisfies ALQ.
Then $\calR$ must be equivalent to Perpetual PAV. The class of basic WAMs contains additional rules satisfying ALQ.}
\begin{theorem}
\pavchar\label{thm:pav-char}
\end{theorem}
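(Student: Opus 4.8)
The plan is to treat the two assertions separately. For the first, recall that a win-based WAM is determined by the sequence $w=(w_0,w_1,\dots)$ with $w_0=1$ and $w_{s+1}=g(w_s)$, where $w_s$ is the weight of a voter satisfied $s$ times. Applying ALQ to a simple $n$-decision sequence of length $n$ shows that every party $P$ with $n_P$ voters gets at least $n_P$ seats; since the seat totals sum to $n=\sum_P n_P$, all these inequalities are equalities, so $\calR$ satisfies simple proportionality, and hence by Theorem~\ref{simplepropChar} the $w_s$ are positive and (taking $x=y+1$ in the stated inequality) strictly decreasing. On a simple decision sequence a win-based WAM then behaves exactly as the divisor method with divisor sequence $d(s)=1/w_s$: in each round it awards the seat to the party $P$ maximizing $n_P\,w_{s_P}=n_P/d(s_P)$. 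Since $d(0)=1$ and Perpetual PAV corresponds to $d(s)=s+1$ (d'Hondt's method), the first assertion is exactly that d'Hondt is the unique divisor method with $d(0)=1$ satisfying lower quota---essentially the Balinski--Young characterization of Jefferson's method---and I would reprove it here by showing $w_s=1/(s+1)$ for all $s$ in two steps.

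First I would prove $w_s\ge 1/(s+1)$ for all $s$ by induction on $s$. Assuming $w_i\ge 1/(i+1)$ for $i<s$ and, for contradiction, $w_s<1/(s+1)$, choose a rational $p/q$ with $\max\{1/w_{s-1},\,s+1\}<p/q<1/w_s$ (nonempty because $1/w_s>s+1$ and $w_{s-1}>w_s$). Consider the simple decision sequence of length $s+m$ with one party $A$ of $p$ voters and $m$ further parties of $q$ voters each. Since $w$ is decreasing and $p\,w_{s-1}>q$, party $A$ wins the first $s$ rounds and reaches exactly $s$ seats; then $p\,w_s<q$ forces $A$ to stay at $s$ seats while each of the $m$ small parties wins once. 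For $m$ large enough, $\lfloor (s+m)p/(p+mq)\rfloor\ge s+1>s$, so ALQ fails---a contradiction.

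The harder step is $w_j\le 1/(j+1)$. Here the obstacle is that, with only two parties, a single ``over-aggressive'' round re-synchronizes with d'Hondt exactly one round later, so no single deviation suffices; I would use a family of equally sized parties so that the deviation accumulates. Suppose $w_j>1/(j+1)$ and take $j$ minimal; by the first step $w_i=1/(i+1)$, i.e.\ $d(i)=i+1$, for $i<j$, while $d(j)<j+1$, and $d$ is strictly increasing so $d(j)<d(j+1)$. Pick $\ell$ large and a rational $\gamma$ with $d(j)<\gamma<\min\{d(j+1),\,(j+1)-1/\ell\}$ (nonempty for large $\ell$), and take the simple decision sequence with $\ell$ parties $P_1,\dots,P_\ell$ of $n_P$ voters and one party $B$ of $n_B$ voters, where $n_P/n_B=\gamma$. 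One checks that $\calR$ awards the first $j\ell$ seats to the $P_i$ exactly as d'Hondt would (each $P_i$ reaching $j$ seats), then---because $n_P/d(j)>n_B$ but $n_P/(j+1)<n_B$---awards the next $\ell$ seats to the $P_i$ as well (each reaching $j+1$ seats), whereas d'Hondt would already serve $B$ at round $j\ell+1$. Thus $\calR$ delays $B$'s first seat until round $(j+1)\ell+1$; but $\gamma\le(j+1)-1/\ell$ guarantees that $B$'s lower quota reaches $1$ at some round $\le(j+1)\ell$, where $\calR$ still gives $B$ zero seats---contradicting ALQ. Hence $w_s=1/(s+1)$ for all $s$ and $\calR$ is Perpetual PAV.

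For the second assertion I would exhibit a basic WAM that is not win-based (so $f$ is not the identity) yet satisfies ALQ. A natural candidate keeps Perpetual PAV's win-function $g(x)=x/(x+1)$ but modifies $f$ so that a voter who has never been satisfied and then loses a round gets a weight slightly above $1$, left unchanged on all later losses; this is not win-based, but the bonus is bounded and incurred at most once per party, and a case analysis of the resulting apportionment dynamics shows it can never push a party below its lower quota, so ALQ holds. The main obstacle throughout is the upper-bound step of the first assertion: because of the re-synchronization phenomenon the witness must be a multi-party instance tuned so that the victim party's lower-quota threshold is crossed strictly before d'Hondt would have served it, which forces the somewhat delicate joint choice of $\ell$ and $\gamma$ above.
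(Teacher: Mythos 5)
Your treatment of the first assertion is correct and, at its core, follows the same strategy as the paper: show that the weight sequence must be exactly harmonic by exhibiting, for each direction of deviation, a simple decision sequence (one ``special'' party against many clone parties) on which ALQ fails. Your organization is arguably cleaner than the paper's: you first derive strict monotonicity of $w$ from ALQ $\Rightarrow$ simple proportionality $\Rightarrow$ Theorem~\ref{simplepropChar} (the paper simply asserts ``as we assume that $g$ is decreasing'' without justification), and you then split into a lower-bound step ($w_s\geq 1/(s+1)$, via a large party stuck at $s$ seats while its quota grows past $s$) and an upper-bound step ($w_s\leq 1/(s+1)$, via a small party starved beyond the round where its quota reaches $1$). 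These correspond, respectively, to the paper's cases $w_j<\nicefrac 1 j$ and $w_j>\nicefrac 1 j$, just with different parameterizations; I checked the inequalities in both of your constructions and they go through (the condition $\gamma<d(j+1)$ in the second step is not actually needed for the contradiction, but it is harmless). The explicit framing via divisor methods and Balinski--Young is a nice touch but not load-bearing.

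The second assertion is where you have a genuine gap. You name a candidate basic WAM (a one-time bonus ``slightly above $1$'' on a voter's first loss, with $g(x)=x/(x+1)$ thereafter) and defer its ALQ verification to ``a case analysis,'' which you never perform --- and this verification is not routine for your candidate. The paper's witness is engineered so that the perturbation is an \emph{exact constant multiplicative factor} applied to every voter who has ever been unsatisfied: after the perturbation, such a voter's weight is exactly $\pi/(\satisfaction+1)$, so in the key score comparison between an over-served party and an under-served party the factor $\pi$ either cancels or strictly helps the under-served party. Your candidate breaks this in two ways: (i) iterating $g(x)=x/(x+1)$ from the perturbed weight $1+\delta$ gives $(1+\delta)/(1+s(1+\delta))$, which is \emph{not} a constant multiple of the PAV weight $1/(s+1)$ --- the effective inflation factor depends on $s$ and decays from $1+\delta$ towards $1$; and (ii) your bonus applies only to voters whose \emph{first} round is a loss, so an under-served party that happened to win round $1$ receives no bonus while an over-served party that lost round $1$ does, which pushes the crucial inequality in the wrong direction and could plausibly produce an ALQ violation when the quota bounds are tight. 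To close the gap you should either switch to (and actually verify) a perturbation that acts as a uniform multiplicative constant on all subsequently computed weights of affected voters, as in the paper's $\calR_f$, or carry out the case analysis for your rule in full.
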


The proof of the first statement is a straightforward adaption of a similar proof (based on extended justified representation) by \citet{Aziz2017Justifiedrepresentationin}.
The second part, that the characterization does not extend to basic WAMs, is due to the following example.
Consider the basic WAM $\calR_f$ defined by:
\begin{align*}
f(x)&= \begin{cases} \pi\cdot x & \text{if $x$ is a rational number},\\
x & \text{otherwise}.\end{cases}  \quad\ \ 
g(x)&= 
\begin{cases} \frac{1}{\satisfaction{} + 1} & \text{if $x$ is a rational number},\\
\frac{\pi}{\satisfaction{} + 1} & \text{otherwise}.\end{cases}
\end{align*}
This function essentially resembles PAV but the weight of a voter is multiplied by $\pi$ the first time she is unsatisfied with a decision. This change does not impact ALQ.

Only very few of the considered voting rules satisfy ALQ or AUQ (Table~\ref{tab:overview} provides an overview).
First of all, ALQ and AUQ imply simple proportionality. Therefore, 
only rules that satisfy simple proportionality can satisfy ALQ or AUQ.
We prove the following result:

\newcommand{\propapportionment}{Perpetual PAV and Perpetual \phragmen{}, when restricted to the apportionment setting, correspond to the D'Hondt method and thus satisfies ALQ but fail AUQ.
Perpetual Consensus, when restricted to the apportionment setting, corresponds to 
Frege's apportionment method and thus satisfies AUQ but fails ALQ.
}
\begin{proposition}\propapportionment\label{prop:apportionment}
\end{proposition}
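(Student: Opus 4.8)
The plan is to handle the three rules one at a time: determine what each does on a simple decision sequence, and then read off the quota properties from the corresponding apportionment method. Fix a simple decision sequence; call the approved alternatives \emph{parties}, write $n_p$ for the number of voters approving party $p$ (so $\sum_p n_p=n$), and let $s_p$ denote the number of rounds $p$ has won after the first $t-1$ rounds.

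Perpetual PAV is immediate: a voter approving $p$ has satisfaction $s_p$ after $t-1$ rounds, hence weight $1/(s_p+1)$, so $\score_t(p)=n_p/(s_p+1)$ and round $t$ goes to the party maximizing the D'Hondt quotient $n_p/(s_p+1)$. For Perpetual \phragmen{} I would show by induction that every supporter of a party $p$ carries the common load $s_p/n_p$ (base case: all loads $0$; step: the entire supporter set of the winning party is updated together). Since each voter approves a single alternative, the coalitions $N'$ with $\bigcap_{v\in N'}A_t(v)\neq\emptyset$ are exactly the subsets of a single party's supporters, and for such an $N'$ the value $\ell_t(N')=\tfrac1{|N'|}+\tfrac{s_p}{n_p}$ is minimized by taking \emph{all} of $p$'s supporters, yielding $\tfrac{s_p+1}{n_p}$; minimizing this over $p$ is the same as maximizing $n_p/(s_p+1)$, so Perpetual \phragmen{} again equals D'Hondt. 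For Perpetual Consensus, I would first check that $\sum_v\alpha_t(v)=n$ for all $t$ (each round removes $n$ in total from the weights and then adds $n$ back, one per voter), so in every round some party has positive score; hence the selected party has a positively-weighted supporter, and since (inductively) a party's supporters all share a weight, the selected party's supporters all lie in $N_t^+$, so an easy induction gives $\alpha_t(v)=t-s_p\cdot n/n_p$ for $v$ approving $p$. The truncation $\max(0,\cdot)$ in the score does not change the arg-max (the overall maximum score is positive), so round $t$ selects the party maximizing $n_p t-n\,s_p$, equivalently $\tfrac{n_p t}{n}-s_p$; this is precisely the rule of Frege's apportionment method.

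The quota statements then follow. D'Hondt satisfies lower quota for every house size~\cite{Balinski1982FairRepresentation:Meeting}, which is exactly ALQ on simple decision sequences; and it violates upper quota --- with $7$ voters, $5$ approving $a$ and $1$ each approving $b,c$, the first four seats all go to $a$ (quotients $5,\tfrac52,\tfrac53,\tfrac54$ beat the runner-up quotient $1$), whereas $a$'s upper quota after four rounds is $\lceil 20/7\rceil=3$ --- so AUQ fails. Hence Perpetual PAV and Perpetual \phragmen{} satisfy ALQ and fail AUQ. For Perpetual Consensus, AUQ can be obtained directly from the reduced rule: writing $s_p^{(t)}$ for the seats of $p$ after $t$ rounds and $d_p^{(t)}=\tfrac{n_p t}{n}-s_p^{(t)}$, the facts that $\sum_p d_p^{(t)}=0$ and that the round-$t$ winner maximizes $\tfrac{n_p}{n}+d_p^{(t-1)}$ give, by a first-violation argument, $d_p^{(t)}>-1$ always, i.e.\ $s_p^{(t)}\leq\lceil n_p t/n\rceil$; the failure of ALQ is inherited from the known lower-quota failure of Frege's apportionment method~\cite{frege} (equivalently, witnessed by an explicit simple decision sequence).

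The two D'Hondt reductions and the upper-quota example are routine. The main work --- and, I expect, the main obstacle --- is the Perpetual Consensus case: besides the weight bookkeeping (keeping a party's supporters at a common weight and controlling the negative weights), one must match the reduced choice rule $\argmax_p\big(\tfrac{n_p t}{n}-s_p\big)$ precisely to the definition of Frege's apportionment method, and in particular pin down the concrete simple decision sequence that witnesses the failure of ALQ.
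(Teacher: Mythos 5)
Your proposal is correct and follows essentially the same route as the paper: reduce Perpetual PAV and Perpetual \phragmen{} on simple sequences to the D'Hondt method and Perpetual Consensus to Frege's apportionment method via the weight identity $n_p\,\alpha_t(v)=n\,s^t_p$, then read off ALQ/AUQ from the known quota properties of those methods. The only difference is that you verify the D'Hondt equivalences and the upper-quota property of Frege's method directly (your first-violation argument on $d_p^{(t)}$ is exactly the paper's own proof that Perpetual Consensus satisfies AUQ), where the paper instead cites the multiwinner and Frege literature; for the ALQ failure the paper likewise just cites Frege's method, an explicit witness being the vote distribution $(1,3,5,18,27)$ with $k=30$.
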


\begin{proof}[Proof sketch.]
The first part of Proposition~\ref{prop:apportionment} follows directly from results in the approval-based multi-winner literature (e.g., from \cite{Brill2017MultiwinnerApprovalRules}), in particular from the fact that Sequential PAV and \phragmen's sequential rule behave like the D'Hondt method in
the apportionment setting.
 
Frege's apportionment method~\cite{frege} is defined as follows:
Let $(p_1,\dots,p_m)$ be a tuple of reals such that $\sum_{i=1}^m=1$.
These numbers represent an apportionment instance with $m$ parties, in which party~$i$ has received a $p_i$-fraction of votes.
Frege's apportionment method is defined in rounds; in each round~$t$ a seat is assigned to the party~$i$ with highest weight $s_i^t$.
In round~1, the weights are defined as $s_i^1=p_i$ for $i\in\{1,\dots,m\}$.
The weights in round $t+1$ are defined as
\[	s_i^{t+1}		=	
	\begin{cases}
		\rule[-1em]{0pt}{2em}
		s_i^t+p_i-1	&\text{if party $i$ receives a seat,}\\
		s_i^t+p_i																&\text{otherwise.}
	\end{cases}
\]
To see that Frege's apportionment method coincides with Perpetual Consensus on apportionment instances,
let $\calD$ be a simple $k$-decision sequence.
Further, let $C=\{c_1,\dots,c_m\}$ be the set of alternatives
and for $c\in C$, $N(c)=\{v\in N:A(v)=\{c\}\}$.
We define $p_i=\frac{|N(c_i)|}{n}$ for  $i\in\{1,\dots,m\}$.
Then $\calD$ corresponds to the apportionment instance $(p_1,\dots,p_m)$.
Let $\alpha_t(v)$ denote the weight of voter~$v$ in round~$t$ for Perpetual Consensus.
For $v\in N$ with $A(v)=\{c_i\}$, we claim that 
$
|N(c_i)|\cdot \alpha_t(v)=n\cdot s_t^i.
$
(The proof for this claim is in the appendix.)
As $|N(c_i)|\cdot \alpha_t(v)$ equals $s_t^i$ times $n$ and both methods choose the alternative with the higher score, they select the same alternative (assuming the same tie-breaking is used).
Finally, Frege's apportionment method satisfies AUQ but fails ALQ \cite{frege}.
\end{proof}

\newcommand{\ALQfail}{
Perpetual Consensus 
fails ALQ. Perpetual PAV and Perpetual \phragmen{} fail AUQ.} 

\newcommand{\PQAUQ}{Perpetual Quota (new) satisfies Apportionment Upper Quota.}

\newcommand{\PCAUQ}{Perpetual Consensus satisfies Apportionment Upper Quota. Perpetual PAV and Perpetual \phragmen{} satisfy ALQ.}

\subsection{A Discussion of Stronger Proportionality Axioms}

As we have seen, already proportionality axioms from the apportionment setting yield a clear distinction between perpetual voting rules.
Let us briefly discuss two stronger axioms from the approval-based multi-winner literature: Proportional Justified Representation \cite{Sanchez-Fernandez2017Proportional} and Extended Justified Representation (EJR) \cite{Aziz2017Justifiedrepresentationin}.
We are going to argue that these two axioms, while being well-established in the multi-winner setting, 
are not particularly helpful in our analysis.

First, we recall that both axioms imply Apportionment Lower Quota (this is a simply consequence of the corresponding definitions) and thus Perpetual Consensus satisfies neither. 
Furthermore, since Perpetual PAV is identical to Sequential PAV in the approval-based multi-winner setting and Sequential PAV fails both PJR and EJR~\cite{Aziz2017Justifiedrepresentationin}, we can conclude that also Perpetual PAV fails both axioms.
Finally, Perpetual \phragmen{} is equivalent to \phragmen's Sequential Rule in the approval-based multi-winner setting and 
since \phragmen's Sequential Rule satisfies PJR and fails EJR \cite{Brill2017Phragmens}, so does Perpetual \phragmen.
We can conclude that Perpetual \phragmen{} is the only perpetual voting rule in this paper that satisfies PJR and no perpetual voting rule is known that satisfies EJR\footnote{Rule~X, as proposed by \citet{peters2020proportionality}, satisfies EJR but is not committee monotone and thus not directly adaptable to a perpetual rule. More generally, it is an open problem whether committee monotone rules exist that satisfy EJR.}.
However, it would be wrong to conclude from this argument that Perpetual \phragmen{} is the most 
proportional. First, Perpetual Consensus is incompatible with PJR just because of different behaviour in the apportionment setting (upper quota vs.\ lower quota). Second, also for Sequential PAV good arguments can be made that it is more proportional than \phragmen's rule (e.g., sequential PAV has 
a larger proportionality degree than \phragmen's rule, at least for committees of size~$\leq 100$ \cite{skowron2018proportionality}.)

Which stronger proportionality axioms are suitable for the perpetual setting?
A first step towards a more neutral comparison of proportionality could be a concept called \emph{laminar proportionality} \cite{peters2020proportionality}.
This concept is agnostic to whether a rule satisfies lower or upper quota (or neither),
as it is concerned with instances where all quotas are integral.
It is, however, in conflict with a ``welfarist'' understanding of proportionality (also defined in \cite{peters2020proportionality}), which is the kind of proportionality that (Sequential) PAV is based on.

A more flexible and fine-grained measure is the proportionality degree \cite{skowron2018proportionality} (also known as average satisfaction of cohesive groups~\cite{Sanchez-Fernandez2017Proportional}).
This notion could be adapted to the perpetual setting as follows:

\begin{definition}
Given a $k$-decision sequence $\calD=(N,\bar A, \bar C)$, a group of voters $N'\subseteq N$ is $\ell$-large if (i) for all rounds $i\in\{1,\dots,k\}$ and $u,v\in N'$, $A_i(u)=A_i(v)$ (they have the same preferences in all rounds), and (ii) $|N'|\geq \ell\cdot \frac{|N|}{k}$.

Fix a function $f\colon \mathbb{N} \to \mathbb{R}$. A perpetual rule $\calR$ has a \emph{perpetual proportionality degree} of $f$ if for each $k$-decision sequence $\calD=(N,\bar A, \bar C)$ and each each $\ell$-large group $N'$, 
\begin{align*}
\sum_{v\in N'} \frac{\satisfaction(v, \calR(\calD))}{|N'|}\geq f(\ell) \text{.}
\end{align*}
\end{definition}\label{def:propdegree}

Note that Definition~\ref{def:propdegree} is essentially a lower-quota concept but does not require exact adherence to ALQ. We further note that it follows from Proposition~2 in \cite{aaai/AEHLSS-polyejr} that $f(\ell)$ can be no larger than $\ell-1$; this would constitute an optimal guarantee.

\begin{proposition}
Perpetual \phragmen{} has a perpetual proportionality degree of $\frac{\ell-1}{2}$.
\end{proposition}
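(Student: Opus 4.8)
The plan is to relate Perpetual \phragmen{} to \phragmen's Sequential Rule in the approval-based multi-winner setting and transfer the known proportionality-degree bound from there. Recall (Proposition~\ref{prop:phragmen-props}(\ref{phragWAM})) that a $k$-decision sequence with the property that some group $N'$ has identical preferences in all rounds is not quite a multi-winner instance, but the load-balancing dynamics of Perpetual \phragmen{} on such a group behave exactly like \phragmen's rule restricted to the cohesive group: in each round, the load $\ell_{k+1}(N')$ computed for a subset of $N'$ is governed by the accumulated load of that subset, independently of what happens outside the group. So the main work is to show that $\ell$-largeness (condition $|N'|\geq \ell\cdot\frac{|N|}{k}$) in the perpetual setting plays the role of $\ell$-cohesiveness in the multi-winner setting with committee size $k$, and that the loads are bounded in the same way.

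The key steps, in order, are as follows. First I would fix a $k$-decision sequence $\calD=(N,\bar A,\bar C)$ and an $\ell$-large group $N'$, and let $\bar w=\calR(\calD)$ be the choice sequence produced by Perpetual \phragmen{}. Second, I would track the total load $L_k=\sum_{v\in N'}\ell_k(v)$ carried by the group after all $k$ rounds; since each round contributes a total load of exactly $1$ distributed among the selected voters, and load is only ever assigned to voters approving the winner, we have $L_k\leq k$ trivially, but the sharper bound comes from the greedy load-minimization: whenever the group $N'$ is ``unsatisfied for too long'', some subset of it becomes the minimizer and picks up load, which forces wins for the group. Third — the crucial estimate — I would argue that if the group has won only $s$ rounds in total, then at the moment of its $s$-th win the per-voter load $\ell_{k}(v)$ for $v\in N'$ is at least roughly $\frac{k - (\text{something})}{|N'|}$, because every round in which the group did not win and was ``eligible'' contributed load elsewhere only because some smaller load was available; quantitatively this yields $s\geq \frac{|N'|}{|N|}\cdot k \cdot \frac{1}{2} - O(1) \geq \frac{\ell-1}{2}$, matching the bound stated for \phragmen's Sequential Rule in \cite{Brill2017Phragmens}. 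Finally, summing $\satisfaction(v,\bar w)\geq s$ over $v\in N'$ and dividing by $|N'|$ gives $\sum_{v\in N'}\frac{\satisfaction(v,\calR(\calD))}{|N'|}\geq \frac{\ell-1}{2}$.

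The cleanest way to organize the third step is to mirror the original \phragmen{} proportionality-degree argument: one shows that the ``virtual load'' that an $\ell$-large group must accumulate before the process ends is at least $\frac{\ell-1}{2}$ per voter can be avoided only by the group winning proportionally often, and then invokes that the committee-size parameter here is $k$ with group-size threshold $\ell\cdot\frac{n}{k}$, which is precisely the apportionment-style analogue of an $\ell$-cohesive group in a size-$k$ committee. Concretely: after round $k$ the global load equals $k$ (one unit per round), and by the minimality of chosen loads one can show each voter's final load is at most $1$; if $N'$ won $s$ rounds, the load it accumulated is at least $(k-\text{wins by others that the group could have undercut})/|N'|$, and a counting argument over rounds where the group was a candidate minimizer pins down $s\geq\frac{\ell-1}{2}$.

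The main obstacle I anticipate is handling the bookkeeping for rounds where the group $N'$ is \emph{not} able to be the minimizing set even though its members are unsatisfied — this happens when another set of voters has a strictly smaller prospective load. One must show these rounds do not hurt the bound, which in the multi-winner proof follows because such a round adds load $<\ell_{k+1}(N')$ elsewhere, so the ``budget'' the group can still extract only grows relative to the competition; making this precise in the perpetual notation (where the decision sets $C_i$ vary and where $\ell$-largeness ties the threshold to $k$ rather than to a fixed committee size) is the delicate part, but it is a direct translation of the argument in \cite{Brill2017Phragmens}, so I would state the correspondence carefully and then cite that the remaining estimate is identical.
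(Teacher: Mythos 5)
Your approach is essentially the paper's: the published proof consists of the single observation that the proportionality-degree argument for \phragmen's sequential rule in the approval-based multi-winner setting carries over verbatim to the perpetual setting, citing \cite{skowron2018proportionality} (note that this, not \cite{Brill2017Phragmens}, is the source of the $\frac{\ell-1}{2}$ bound). Your more detailed sketch of the transfer is in the right spirit, though be aware that your intermediate claim that each voter's final load is at most $1$ is not literally true when $k>|N|$ --- only that minimum and maximum loads differ by at most $1$ --- so the write-up should simply mirror the cited multi-winner proof rather than rely on that estimate.
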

\begin{proof}
This follows from the proof that \phragmen's rule (in the approval-based multi-winner setting) has a proportionality degree of $\frac{\ell-1}{2}$ \cite{skowron2018proportionality}; essentially the same proof applies here.
\end{proof}

It remains an open problem to determine the perpetual proportionality degrees of other rules. 
For Perpetual Consensus, this will require a better understanding of Frege's apportionment method (in particular, to know by how much it may violate ALQ).
A summary of our axiomatic analysis can be found in Table~\ref{tab:overview}.

\begin{table}
\begin{center}
\begin{tabular}{lccccc}
               & Simple    & Indep.\ of    & Bounded           & \multicolumn{2}{c}{Apportionment}\\ 
                     &  Proport. & Uncontr.\ Dec. &  Dry Spells & Lower Qu.    & Upper Qu.    \\ 
\midrule                   
AV                   & \xmark    & \cmark        & \inftyno    & \xmark         & \xmark         \\ 
Per.~PAV             & \cmark    & \xmark        & \inftyno    & \cmark         & \xmark         \\ 
Per.~Reset           & \xmark    & \xmark        & \cmark      & \xmark         & \xmark         \\ 
\newrule             & \xmark    & \cmark        & \cmark      & \xmark         & \xmark         \\
Per.~Consensus       & \cmark    & \cmark        & \cmark      & \xmark         & \cmark         \\  
Rotating Dict.       & \cmark    & \xmark        & \cmark      & \xmark         & \xmark         \\
Per.~\phragmen       & \cmark    & \xmark        & \cmark      & \cmark         & \xmark         
\end{tabular}
\caption{Axiomatic results for selected perpetual voting rules.}
\label{tab:overview}
\end{center}
\end{table}

\section{Discussion}
\label{sec:discussion}

The first question tackled by this paper is whether there are perpetual rules that are conceptually simple and
satisfy all three basic axioms from \cite{Lackner2020}. 
This question can be answered mostly in the negative.
In particular, it is notable that win-based WAMs, which include 
natural adaptions of many popular multiwinner rules cannot satisfy
more than one basic axiom at once and cannot satisfy the crucial property
of limited dry-spells at all.

The second question relates to stronger proportionality 
notions. We have seen that already in the apportionment setting 
we can distinguish between, for example, Perpetual Consensus
and Perpetual \phragmen. Moreover, we observe that the slightly stronger 
proportionality notions defined in the apportionment setting
allow us to show that some rules that satisfy simple proportionality
for technical reasons, like Rotating Dictatorship, are in fact not 
proportional. Finally, we highlighted some further directions
in which the analysis of proportionality in perpetual voting can be extended.

Our work opens several interesting opportunities for future work.
First of all, a more in-depth analysis of proportionality in the perpetual 
setting is needed. Moreover, it would be interesting to search for
voting rules with better dry-spell guarantees than, for example,
Perpetual Reset. However, we have to observe that minimizing the dry-spell
guarantee is not a sufficient goal by itself.
For example, Rotating Dictatorship offers an optimal dry-spell guarantee of $n$,
but is still an unacceptable rule.
Indeed, it can be shown that any voting rule with this optimal dry-spell
guarantee must have, in the worst case, arbitrarily many rounds in
which only the opinion of one voter is considered (see Appendix~\ref{app:dict} for more details).
Finally, many other crucial axiomatic properties still have to be explored 
in perpetual voting, including, e.g., strategy-proofness.

\bibliographystyle{abbrvnat}
\bibliography{literature}

\begin{contact}
Martin Lackner\\
TU Wien\\
Vienna, Austria\\
\email{lackner@dbai.tuwien.ac.at}
\end{contact}

\begin{contact}
Jan Maly\\
TU Wien\\
Vienna, Austria\\
\email{jmaly@dbai.tuwien.ac.at}
\end{contact}

\newpage

\appendix

\section{Proof Details from Section~\ref{sec:basicWAMs}}\label{App:proofs}

First, observe that basic WAMs have the following very useful property:

\newcommand{\LemCons}{Let $\calR$ be a basic WAM and let
$(N, \bar A, \bar C)$ and $(N^*, \bar A^*,\bar C^*)$ be two $k$-decision sequences 
such that and $w_i = w_i^*$ for all $i$ under $\calR$.
Then for all voters $v \in N$ the weight of $v$ after round $k$
is the same for the decision sequences 
$(N, \bar A, \bar C)$ and $(N + N^*, \bar A + \bar A^*,\bar C +\bar C^*)$.
Here $(N + N^*, \bar A + \bar A^*,\bar C +\bar C^*)$ denotes 
the decision sequence with alternatives $C \cup C^*$,
$N + N^*$ voters where the first $N$ voters vote like the voters in
$(N, \bar A, \bar C)$
and the voters from $N+1$ to $N + N^*$ vote like the voters in
$(N^*, \bar A^*,\bar C^*)$.
}

\begin{lemma}\label{Lem:Consistency}
\LemCons
\end{lemma}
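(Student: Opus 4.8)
The plan is to prove this by a direct induction on the round number $k$, tracking the weight of each voter in the ``small'' decision sequence $(N, \bar A, \bar C)$ versus the ``combined'' one $(N+N^*, \bar A+\bar A^*, \bar C+\bar C^*)$. The key observation that makes the induction go through is the defining feature of basic WAMs highlighted in the paper just before the lemma: the weight update $\alpha_{k+1}(v) \in \{f(\alpha_k(v)), g(\alpha_k(v))\}$ depends \emph{only} on $v$'s own previous weight and on whether $v$ was satisfied in round $k$ (i.e.\ whether $w_k \in A_k(v)$) --- it does not depend on the weights of other voters, on $|N|$, or on the identity of the winning alternative beyond the satisfaction bit.

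First I would set up the induction hypothesis: for every $i \le k$ and every original voter $v \in N$, the weight $\alpha_i(v)$ computed in $(N,\bar A,\bar C)$ equals the weight $\alpha_i(v)$ computed in the combined sequence, \emph{and} the winning alternative in round $i$ is the same in both sequences (this second part is exactly the hypothesis ``$w_i = w_i^*$ under $\calR$'' supplied by the lemma, reinterpreted: the combined sequence's restriction to the first $N$ voters and to the alternatives $C$ behaves identically). Actually the cleanest framing is: the hypothesis of the lemma asserts that the two \emph{separate} runs pick matching winners $w_i = w_i^*$; from this I want to deduce that in the \emph{merged} run, the first block of voters still sees weight evolution identical to their evolution in $(N,\bar A,\bar C)$. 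The base case $i=1$ is immediate since all weights are initialized to $1$. For the inductive step, I would argue that since $v \in N$ approves the round-$k$ winner in the combined sequence if and only if $w_k \in A_k(v)$ --- and the combined-sequence winner restricted to the relevant alternatives is $w_k$ by the matching-winners hypothesis --- the satisfaction bit of $v$ is the same in both computations, so by the inductive hypothesis $\alpha_k(v)$ agrees and hence $\alpha_{k+1}(v) = f(\alpha_k(v))$ or $g(\alpha_k(v))$ agrees too.

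The one genuinely delicate point --- and the place I expect to spend the most care --- is verifying that the winning alternative in each round of the \emph{merged} sequence really does restrict to $w_i$ on the first block of voters, so that the satisfaction bits transfer. This is not automatic: the extra voters from $N^*$ contribute their own weighted approval scores and could in principle tip the outcome toward an alternative in $C^* \setminus C$. The resolution is that the lemma's hypothesis is precisely engineered to rule this out --- it assumes $w_i = w_i^*$ holds ``under $\calR$'', meaning the two sub-sequences independently produce the same choice sequence, and the construction $\bar A + \bar A^*$, $\bar C + \bar C^*$ is such that in round $i$ the merged profile's score on $C_i$-alternatives from the first block plus the $C_i^*$-alternatives from the second block is just the disjoint union of the two independent computations; since both independently select the same alternative $w_i = w_i^*$, that alternative is also the global argmax (using the fixed tie-breaking order consistently). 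I would make this rigorous by an auxiliary sub-claim, proved inside the same induction, stating that for every $i \le k$ the score of each alternative $c$ in the merged round $i$ equals $\score_i^{(N)}(c) + \score_i^{(N^*)}(c)$ (with the convention that a score is $0$ if $c$ is not among the relevant alternatives for that block), and then invoke that $w_i$ is the argmax of the first summand and $w_i^* = w_i$ is the argmax of the second, hence of the sum.

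Finally, once the induction is complete, the statement of the lemma --- that $v$'s weight after round $k$ is the same in $(N,\bar A,\bar C)$ and in the merged sequence --- is exactly the conclusion of the inductive hypothesis at $i = k+1$ (or at $i=k$, depending on indexing conventions), so no further work is needed. I would close by remarking that the symmetric statement for the second block of voters holds by the same argument with the roles of $N$ and $N^*$ swapped, though the lemma as stated only needs the one direction.
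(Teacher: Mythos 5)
Your proposal is correct and follows essentially the same route as the paper's proof: an induction on the round number, using the fact that in the merged sequence the score of each alternative decomposes as the sum of its scores in the two separate runs, so the matching winners $w_i = w_i^*$ remain the argmax in the merged run and the satisfaction bits (hence the $f/g$ weight updates) transfer unchanged. Your explicit attention to tie-breaking is if anything slightly more careful than the paper's argument, which states the score comparison with a strict inequality.
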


\begin{proof}
We prove the lemma by induction over $k$. 
By definition the weight of all voters in the first round is $1$.
Now assume that the weights of all voters are the same 
in $(N,\bar A,\bar C)$ resp.\ $(N^*, \bar A^*, \bar C^*)$ 
and $(N + N^*, \bar A + \bar A^*,\bar C +\bar C^*)$
in round $k-1$. Now let $w$ be the winner in round $k-1$
for $(N,\bar A,\bar C)$ and $(N^*, \bar A^*, \bar C^*)$.
Then, for every alternative $c \in C_{k-1} \cup C_{k+1}^*$ we have 
$\score_{k-1}(w) > \score_{k-1}(c)$ in $(N,\bar A,\bar C)$
and $(N^*, \bar A^*, \bar C^*)$ (if $c \not \in C_{k-1}$,
then we set $\score_{k-1}(c) = 0$ in $(N,\bar A,\bar C)$ and the same for 
$(N^*, \bar A^*, \bar C^*)$.)
As the weigth of all voters is the same in
$(N,\bar A,\bar C)$ resp.\ $(N^*, \bar A^*, \bar C^*)$ 
and $(N + N^*, \bar A + \bar A^*,\bar C +\bar C^*)$
in round $k-1$, the score of all alternatives in 
$(N + N^*, \bar A + \bar A^*,\bar C +\bar C^*)$
is just the sum of their scores in
$(N,\bar A,\bar C)$ and $(N^*, \bar A^*, \bar C^*)$.
Therefore, we have 
$\score_{k-1}(w) > \score_{k-1}(c)$ in
$(N + N^*, \bar A + \bar A^*,\bar C +\bar C^*)$
for all $c \in C_{k-1} \cup C_{k-1}^*$,
hence $w$ is the winner in round $k-1$.
However, then the weight of any voter $v$ in 
round $k$ is 
\[\alpha_{k}(v) = \begin{cases} f(\alpha_{k-1}(v)) & \text{if }w \notin A_{k-1}(v),\\
g(\alpha_{k-1}(v)) & \text{if }w \in A_{k-1}(v).\end{cases}\]
where $\alpha_{k-1}$, $A_{k-1}(v)$ and $w$ are the same for 
$(N,\bar A,\bar C)$ and $(N + N^*, \bar A + \bar A^*,\bar C +\bar C^*)$.
Hence, $\alpha_k(v)$ is the same in
$(N,\bar A,\bar C)$ and $(N + N^*, \bar A + \bar A^*,\bar C +\bar C^*)$
for all $v \in N$. 
\end{proof}

\begin{repproposition}{prop:notconst}
\notconst
\end{repproposition}
\begin{proof}
Let $\calR$ be either a win-based or a loss-based WAM, i.e., either $g(x) = x$ or $f(x) =x$ holds.
We assume for the sake of a contradiction
that there is a function $b$ that bounds the dry spells of $\calR$.

First, assume $g(x) = x$ and there is a value $x^*$ (that can appear) such that $f(x^*) = x^*$.
[$f(x) = x$ and there is a value $x^*$ (that can appear) such that $g(x^*) = x^*$].
Then, let $(N, \bar A, \bar C)$ be a $k$-decision sequence such that 
there is a voter that has weight $x^*$ in round $k+1$.
We consider the decision sequence that consist of three copies of 
$(N, \bar A, \bar C)$. Then, by Lemma~\ref{Lem:Consistency}
there are three voters $v_1, v_2, v_3$ with weight $x^*$
in round $k+1$. Now, consider that in the next $b(3|N|)+1$ rounds
there are two alternatives $\{a,b\}$,
voter $v_1$ votes for a alternative $a$, voter $v_2$ and $v_3$ vote
$b$ and every other voter votes $\{a,b\}$.
Then the score of alternative $a$ is always $x^* + V$,
where $V$ is the weight of the voters in $V \setminus \{v_1,v_2,v_3\}$,
and the score of alternative $b$ is always $2x^* +V$. 
Therefore, $b$ has a higher score in every election
and $v_1$ has a dry spell of $b(3|N|)+1$ rounds. A contradiction.

Now, assume that $g(x) = x$ and $f(x) > x$ 
[or $f(x) = x$ and $g(x) < x$] on all values $x$ that can appear as 
weights of voters.
Now consider an election with two voters $v_1$ and $v_2$, two alternatives $a$ and $b$
and $2b(3) +3$ rounds, where $v_1$ always votes $a$ and $v_2$ always votes $b$.
Then, either $v_1$ or $v_2$ looses at least $b(3) +2$ rounds.
We assume w.l.o.g.\ that $v_1$ looses $b(3) +2$ rounds.
Then, the weight of voter $v_1$ in round $2b(3) +4$ is
$f^{b(3) +2} (1)$ [$g^{b(3)+1} (1)$]. Now, we add another voter $v_3$ that 
votes $\{a,b\}$ in the first $2b(3) +3$ rounds. Hence, he wins in every 
round and his weight in round $2b(3) +4$ is $1$ [$g^{2b(3) +3}(1)$].
Then, assume that in the next $b(3) +1$ rounds voters $v_1$ and $v_2$
vote $a$ and voter $v_3$ votes $b$. We claim that voter $v_3$ wins non of these
$b(3) +1$ rounds:
Alternative~$a$ always has a score of at least $f^{b(3) +2} (1)$ [$g^{2b(3)+2} (1)$],
whereas alternative $b$ has a score of at most $f^{b(3) +1} (1)$ [$g^{2b(3)+3} (1)$].
As $f(x) > x$  [$g(x) < x$], alternative $a$ has a higher score in every 
election.
Therefore, $v_3$ has a dry spell of $b(3) +1$, a contradiction.
\end{proof}

\begin{repproposition}{prop:IUDident}
\IUDident
\end{repproposition}

\begin{proof}
First observe that $f(x) \geq x \geq g(x)$ for all $x$,
hence $f(x) \geq g(x)$.
Proof by contradiction.
First assume that $k$ is the minimal number of rounds 
such that a weight $x$ can occur for which $g(x) = 0$ holds.
Furthermore, let $(N, \bar A, \bar C)$ be a $k$ decision sequence,
such that the weight of voter $v_i$ after round $k$ is $x$.
Then consider the case that $C_{k+1} = \{a,b\}$,
$A_{k+1}(v_j) = \{a,b\}$ for all $j \neq i$ and $A_{k+1}(v_i) = \{b\}$.
Then $w_{k+1} = b$ as $x \neq 0$ by assumption.
However, if we consider the case that $C_{k+1}= C_{k+2} = \{a,b\}$,
$A_{k+1}(v_j) = \{a,b\}$ for all $j$,
$A_{k+2}(v_j) = \{a,b\}$ for all $j \neq i$ and $A_{k+2} = \{b\}$,
then $a$ and $b$ have the same score in round $k+2$ because $g(x) = 0$.
Therefore, we can assume w.l.o.g.\ that by tie-breaking $w_{k+2} = a$.
This contradicts independence of uncontroversial decisions.

Now, assume that there is a $k$ such that there are weights $x$ and $y$ 
that can occur after $k$ rounds for which $g(x) = cx$ and $g(y) = cy+d$
for some $c,d > 0$.
Furthermore, let $(N, \bar A, \bar C)$ be a $k$ decision sequence,
such that the weight of voter $v_i$ is $x$
and let $(N^*, \bar A^*, \bar C^*)$ be a $k$ decision sequence,
such that the weight of voter $v^*_i$ is $y$.
We can assume that $w_i = w_i^*$ for all $i \leq k$ by renaming the alternatives
if necessary. Then, by Lemma~\ref{Lem:Consistency},
$(N + N^*,\bar A + \bar A^*,\bar C + \bar C^*)$
is a $k$ decision sequence such that the weight of voter $v_i$
is $x$ and the weight of voter $v_i^*$ is $y$.

Next, we compute integers $m$, $m'$ such that 
$mx > m'y$ and $mx - m'y < m'd$.
In other words, we compute values such that
$0 < mx - m'y < m'd$. This is equivalent to
\[\frac{m'y}{x} < m < \frac{m'y}{x} + \frac{m'd}{x}.\]
Clearly, for large enough integers $m'$, there is an 
integer $m$ that satisfies this equation.
Let $m^* = \max(m,m')$. 
Then, we consider the decision sequence 
\[(m^*\cdot(N + N^*),m^*\cdot(\bar A + \bar A^*),m^*\cdot(\bar C + \bar C^*)).\]
By Lemma~\ref{Lem:Consistency} there are $m^*$ voters with weight $x$
and also $m^*$ voters with weight $y$ after round $k$.
Then consider the case $C_{k+1} = \{a,b\}$,
$A_{k+1}(v) = a$ for $m$ voters with weight $x$,
$A_{k+1}(v) = b$ for $m'$ voters with weight $y$
and $A_{k+1}(v) = \{a,b\}$ for everyone else.
Then $w_{k+1} = a$ as the score of $a$ in round $k+1$
is $mx +C$ for some $C$ and the score of $b$ is $m'y +C$.
However, if we consider the case that $C_{k+1}= C_{k+2} = \{a,b\}$,
$A_{k+1}(v_j) = \{a,b\}$ for all $j$,
$A_{k+2}(v) = a$ for $m$ voters with weight $x$,
$A_{k+2}(v) = b$ for $m'$ voters with weight $y$
and $A_{k+2}(v) = \{a,b\}$ for every one else, then
$w_{k+2} = b$ as the score of $a$ in round $k+2$
is $cmx +C$ for some $C$ and the score of $b$ is $cm'y + cm'd +C$.
This contradicts independence of uncontroversial decisions.
\end{proof}

Let us recall the definition of \newrule.

\newruleDef

\begin{repproposition}{prop:newruleBD}
\newruleBD
\end{repproposition}

\begin{proof}
We claim that using the \newrule{} any weight $x$ such that $r(x) = k$
can only appear in round $k$.
Indeed, in round $k$ only weights of the following form
(for some integer $l \geq 0$) can appear:
\[x = \frac{2(k-1)+1}{2^{l}}\]
which satisfies $r(x) = k$.
We can show this by induction on $k$. For $k =1$ only weight $1$
can appear. This equals 
\[\frac{2(1-1)+1}{2^{0}}.\]
Now, assume the claim holds in round $k$. This means. 
all weights in round $k+1$ are of the form
\[\left(\frac{2r(x)+1}{2^{l'}(2r(x)-1)}\right)x.\]
where $x$ is a weight that can appear in round $k$
and $l' = 0$ if $f$ is applied to $x$.
By the induction hypothesis $r(x) = k$, hence we have
\[\left(\frac{2k+1}{2^{l'}(2k-1)}\right) x = \left(\frac{2k+1}{(2k-1)2^{l'}}\right)
\left(\frac{2(k-1)+1}{2^{l}}\right) = \frac{2k+1}{2^{l'+l}}.\]
This concludes the proof of the claim.
It follows from the claim that, for any value $x$ that can appear in round $k$
we have 
\[g(x) = \frac{2k+1}{2^{k!}} x,\]
which is multiplication by a constant for fixed $k$.
Therefore, \newrule{} satisfies independence of uncontroversial decisions
by Proposition~\ref{prop:IUDident}. 

It remains to show that \newrule{} has bounded dry spells.
We claim that the dry spells are bounded by $b(n) = n + h(n)$,
where $h$ is a function such that $2^{k!} > n$ for all $k > h(n) -1$.
In the following, we say a voter $v$ has a $h(n)$-dry spell in round $k$ of length $l$
if $v$ did not approve the winner in the rounds $k-1, k -2, \dots, k-l$ and
$k-l \geq h(n)$, i.e., if $v$ has a dry spell of length $l$ ignoring rounds before $h(n)$.
We claim that in every round $l \geq h(n)$ it holds that at least one voter 
that has a maximal $h(n)$-dry spell supports the winning alternative.
If all voter have maximal $h(n)$-dry spell, then the claim holds trivially.
Therefore, assume that there is at least one voter with non-maximal $h(n)$-dry spell.
Let $k+1$ be the last round in which voters with the second longest $h(n)$-dry spell
in round $l$ approved the winner. By construction $k+1 \geq h(n)$.
First we observe that the voting power of a voter $v$ with maximal $h(n)$-dry spell
in round $l$ is at least
\[\frac{2l-1}{\Pi_{i=1}^k 2^{i!}} =
\frac{2l-1}{2^{\sum_{i=1}^k i!}} \geq \frac{2l-1}{2^{k(k!)}}\]
Now, consider a voter $v'$ with a shorter $h(n)$-dry spell. The voting power of 
$v'$ in round $l$ is at most
\[\frac{2l-1}{2^{(k+1)!}}\]
Therefore, the support of a alternative that is not supported by a 
single voter with maximal $h(n)$-dry spell is at most
\[(n-1)\frac{2l-1}{2^{(k+1)!}} < 2^{k!} \frac{2l-1}{2^{(k+1)!}} = 
2^{k!} \frac{2l-1}{2^{(k+1)(k!)}}= 2^{k!} \frac{2l-1}{2^{k(k!)}2^{k!}} =
\frac{2l-1}{2^{k(k!)}}\]
hence smaller than the weight of a single voter with maximal $h(n)$-dry spell.
Therefore, only alternatives supported by at least one voter with maximal $h(n)$-dry spell can win.

It follows from this claim that after round $h(n)$ no dry spells longer than $n$ can occur:
Assume a voter $v$ has a dry spell $n$ in round $k \geq h(n) +n $.
We claim that the winner in round $k$ must be supported by $v$.
In every round $k -i$ for $i \leq n$ by our claim there must have
been a voter $v'$ that had a $h(x)$-dry spell at least as long as $v$ among the supporters of the
winning alternative. This means this voter did not win in round $k-n, \dots, k - i -1$. 
As there are only $n-1$ voters apart from $v$ each one of them must have won at least one
round $k -i$ for $i \leq n$. Therefore, $v$ is the only voter with maximal $h(n)$-dry spell
in round $k$, which implies that the winner in round $k$ must be supported by $v$.
\end{proof}

\begin{proposition}\label{newruleFailSP}
\newrule{} does not satisfy simple proportionality.
\end{proposition}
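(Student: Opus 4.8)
The plan is to exhibit a single simple decision sequence on which the \newrule{} is demonstrably disproportional. Take $N=\{v_1,\dots,v_6\}$, six rounds, $C_i=\{a,b\}$ in every round, with $A_i(v_j)=\{a\}$ for $j\leq 5$ and $A_i(v_6)=\{b\}$. Since $\#v_j=5$ for $j\leq 5$ and $\#v_6=1$, simple proportionality would force $a$ to be chosen exactly five times and $b$ exactly once; I will instead show that $b$ is chosen twice.

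First I would pin down the weights in closed form. By the claim proved inside Proposition~\ref{prop:newruleBD} (a weight $x$ with $r(x)=t$ can only occur in round~$t$), the multiplicative update applied to a voter's weight in round~$r$ is $\frac{2r+1}{2r-1}$ if she was unsatisfied in round~$r$ and $\frac{2r+1}{2^{r!}(2r-1)}$ if she was satisfied. All five $\{a\}$-voters are treated identically, so write $S(t)\subseteq\{1,\dots,t\}$ for the set of rounds among the first $t$ in which $a$ was chosen, and set $P_a(t)=\sum_{r\in S(t)}r!$ and $P_b(t)=\sum_{r\in\{1,\dots,t\}\setminus S(t)}r!$, so that $P_a(t)+P_b(t)=\sum_{r=1}^{t}r!$. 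Multiplying the updates and telescoping $\prod_{r=1}^{t}\frac{2r+1}{2r-1}=2t+1$, the weight entering round~$t+1$ is $(2t+1)\,2^{-P_a(t)}$ for each $\{a\}$-voter and $(2t+1)\,2^{-P_b(t)}$ for $v_6$. Hence $a$ is selected in round~$t+1$ iff $5\cdot 2^{-P_a(t)}>2^{-P_b(t)}$, i.e.\ iff $P_a(t)-P_b(t)<\log_2 5$; as the left-hand side is an integer and $\log_2 5$ is irrational, equality never occurs and the tie-breaking order is irrelevant.

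Finally I would run the recursion for $D(t):=P_a(t)-P_b(t)$, which satisfies $D(0)=0$ and $D(t+1)=D(t)+(t+1)!$ when $a$ wins round~$t+1$ (that is, when $D(t)<\log_2 5$) and $D(t+1)=D(t)-(t+1)!$ otherwise. This yields $D(0),\dots,D(5)=0,1,3,-3,21,-99$, and comparing with $\log_2 5\approx 2.32$ shows that the chosen alternatives in rounds $1,\dots,6$ are $a,a,b,a,b,a$; so $v_6$ is satisfied twice although $\#v_6=1$, contradicting simple proportionality. (The same behaviour occurs for any majority split $n-1$ versus $1$: since $(t+1)!>\sum_{r\leq t}r!$, the selection wins $a$ on rounds $1,\dots,k^{*}$ and then strictly alternates, where $k^{*}$ is the largest $k$ with $\sum_{r<k}r!<\log_2(n-1)$; for any fixed $n$ exceeding $k^{*}+2$, $b$ is chosen at least twice.) The only slightly delicate point is the telescoping weight bookkeeping together with the observation that integrality of $D(t)$ versus irrationality of $\log_2 5$ rules out ties; once the formula $(2t+1)\,2^{-P}$ is in hand, everything else is a short finite computation.
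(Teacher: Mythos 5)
Your proof is correct and takes essentially the same approach as the paper, which uses the simple profile $(\{a\},\{a\},\{a\},\{a\},\{b\})^5$ and observes that \newrule{} selects $(a,a,b,a,b)$, so the minority voter wins twice. Your version with a $5$-versus-$1$ split is the same counterexample structure, just with the closed-form weight bookkeeping (which the paper leaves implicit) written out explicitly.
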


\begin{proof}
Consider $(\{a\},\{a\},\{a\},\{a\},\{b\})^5$.
The corresponding choice sequence is $(a,a,b,a,b)$, which is not proportional.
\end{proof}

\begin{repproposition}{simplepropChar}
\simplepropChar
\end{repproposition}

\begin{proof}
Assume that $xw_x<(y+1)w_y$ for all integers $x,y\geq 0$.
If we set $y=x-1$, we obtain 
\begin{align}
w_x<w_{x-1}\label{eq:strict}
\end{align}
Towards a contradiction, assume that the rule fails simple proportionality for some simple $|N|$-decision sequence $(N, \bar A, \bar C)$ and corresponding $|N|$-choice sequence $\bar w$.
Now, let $k$ be the first round such that there is a voter $v$ with
$\satisfaction(v, \bar w_{\leq k+1})=\#v+1$. Such a round must exist,
because simple proportionality is violated.
In round $k+1$, there also exists a voter $v'$ with $\satisfaction(v', \bar w_{\leq k+1})<\#v'$.
Let $\#v=x$ and $\#v'=y$.
Observe that, by assumption, $\satisfaction(v,\bar w_{\leq k}) = \#v$.
As $v$ wins in round $k+1$, it holds that 
$xw_{\satisfaction(v,\bar w_{\leq k})} = xw_x \geq y\cdot w_{\satisfaction(v', \bar w_{\leq k})}$.
Since $\satisfaction(v', \bar w_{\leq k})=\satisfaction(v', \bar w_{\leq k+1})<\#v'=y$, by \eqref{eq:strict} we have $w_{\satisfaction(v', \bar w_{\leq k})}\geq w_{y-1}$.
Thus, $xw_x \geq y\cdot w_{\satisfaction(v', \bar w_{\leq k})}\geq y w_{y-1}$, a contradiction.

For the other direction, assume that simple proportionality holds.
First, let us show that $w_{x}<w_{x-1}$.
Consider a simple $(2x)$-simple profile with  $x$ voters approving some alternative and the remaining $x$ voters approving some other alternative. In round $2x$, one of these groups has satisfaction $x-1$, the other $x$. By simple proportionality, the group with satisfaction $x-1$ must win (independent of tiebreaking).
Thus, $xw_x < xw_{x-1}$, i.e., $w_{x}<w_{x-1}$.

Next, we show that $xw_x<(y+1)w_y$.
Consider an $(x+y+1)$-simple profile with two groups: $x$ voters approving one alternative, 
$y+1$ voters approve another alternative.
Consider the round $k$ where the latter group wins for the $(y+1)$-st time (this has to happen due to simple proportionality).
Let $x'\leq x$ be the satisfaction of the former group in round $k$.
Assuming that tiebreaking is against this group, it holds that
$(y+1)w_y > x w_{x'} \geq x w_x$.
\end{proof}

\begin{repproposition}{prop:winbasedspropIUD}
\winbasedspropIUD
\end{repproposition}

\begin{proof}
We observe that for a win-based WAM the weight $1$ can occure in any round $k$. 
Indeed, consider a $k$-decision sequence with $k$ voters $v_1, \dots, v_{k}$ and $k$
alternatives $a_1, \dots, a_{k}$ such that in each round voter $v_i$ approves only
$a_i$. By construction, there is at least one voter $v_j$ in round $k$ who did not appove of 
any winner. Hence, $v_j$ has weight $1$ in round $k$. 

It follows that a win-based WAM satisfies independence of uncontroversial decisions
if and only if there is a $c$ such that $g(x) = cx$ for all $x$ that can appear as 
weights in any round.
We claim that such a rule can not satisfy simple proportionality. 
By Proposition~\ref{simplepropChar}, we know $2c = 2 w_2 < 2 \cdot w_1 = 2$,
hence $c < 1$ must hold.
Furthermore, for every $y \geq 2$ we have the following:
\begin{equation*}
 1 \cdot w_1 < ((y-1)+1)w_{y-1} \Leftrightarrow
 1 < yc^{y-2} \Leftrightarrow
 \frac{1}{y} < c^{y-2} \Leftrightarrow
 \left(\frac{1}{y}\right)^{\frac{1}{y-2}} < c
\end{equation*}
However, as $\left(\frac{1}{y}\right)^{\frac{1}{y-2}}$ gets arbitrarily close to $1$ for large enough $y$,
this contradicts $c <1$.
\end{proof}

\section{Proof Details from Section~\ref{sec:proportionality}}

\begin{proposition}\label{RotDictSP}
Rotating Dictator satisfies simple proportionality.
\end{proposition}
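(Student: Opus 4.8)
The plan is to exploit the extreme rigidity of a simple instance: because each voter approves exactly one alternative, the tie-breaking rule inside Rotating Dictator never has anything to decide, so the winner of every round is forced outright by that round's dictator. First I would fix a simple $n$-decision sequence $\calD = (N, \bar A, \bar C)$ with $|N| = n$, write $\bar w = \calR(\calD)$ for the choice sequence produced by Rotating Dictator, and for each voter $v \in N$ let $c_v$ denote the unique alternative with $A(v) = \{c_v\}$ (well defined since $A_1 = \dots = A_n$ and $|A_1(v)| = 1$).

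Next I would unwind the rule round by round. Let $(v_0, \dots, v_{n-1})$ be the enumeration of $N$ underlying Rotating Dictator. In round $k \in \{1, \dots, n\}$ the dictator is $v_{(k \bmod n)}$, and since $A_k(v_{(k \bmod n)}) = \{c_{v_{(k \bmod n)}}\}$ is a singleton, the rule must pick $w_k = c_{v_{(k \bmod n)}}$. The key combinatorial observation is that the map $k \mapsto (k \bmod n)$ is a bijection from $\{1, \dots, n\}$ onto $\{0, 1, \dots, n-1\}$ (note that round $n$ corresponds to $v_0$, as $n \bmod n = 0$); hence over the $n$ rounds every voter serves as dictator exactly once.

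Finally I would compute the satisfaction of an arbitrary voter $v$ directly. Since $w_k \in A_k(v)$ iff $w_k = c_v$, we obtain
\[
\satisfaction(v, \bar w) = \bigl|\{k \leq n : c_{v_{(k \bmod n)}} = c_v\}\bigr| = \bigl|\{v' \in N : c_{v'} = c_v\}\bigr| = \bigl|\{v' \in N : A(v') = A(v)\}\bigr| = \#v,
\]
where the second equality is exactly the bijection from the previous step. This is precisely the equality required by simple proportionality. I do not anticipate any genuine obstacle here; the only point calling for a little care is the round-to-voter indexing — checking that in an $n$-round simple instance the $n$ dictatorship slots are filled by the $n$ voters without repetition — after which the conclusion is immediate.
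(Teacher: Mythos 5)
Your proposal is correct and follows essentially the same argument as the paper: each voter is dictator exactly once over the $n$ rounds, the simple-profile structure forces the dictator's unique approved alternative to win, and therefore each voter's satisfaction equals the number of voters sharing her approval set. The paper's proof is just a more compressed version of the same observation.
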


\begin{proof}
Let $\calD$ be a simple $|N|$-decision sequence. Further, let $N(c)=\{v\in N:A(v)=\{c\}\}$ for $c \in C$.
Then, $c$ is picked as the winner whenever a voter from $N(c)$ is picked as dictator.
Hence for every candidate $c$ and voter $v \in N(c)$ we have
$\satisfaction(v, \bar w) = |N(c)| = \#v$.
\end{proof}

\begin{repproposition}{prop:phragmen-props}
\phragmenprops
\end{repproposition}
\begin{proof}
\begin{enumerate}[label=(\roman*)]
\item Simple proportionality follows immediately from the apportionment upper quota axiom, which Perpetual \phragmen{} satisfies by Proposition~\ref{prop:apportionment}.
\item First, observe that the minimum load and maximum load of voters cannot differ by more than 1; otherwise a different choice (in favour of those with a smaller load) would have been made previously.
Further, each round the total load increases by 1.
Towards a contradiction, assume that a voter~$v$ is not satisfied by a sequence of $2n-1$ decisions.
Let $\alpha$ be the load of voter~$v$ during these $2n-1$ rounds.
Further, let $N'=N\setminus \{v\}$.
The total load the voters in $N'$ is at least $(n-1)(\alpha-1)$ before these rounds, and at least 
$(n-1)(\alpha-1)+2n-1$ after these $2n-1$ rounds.
Thus, the average load of voters in $N'$ is $\geq \alpha + 1 + \frac{1}{n-1}$.
Hence, there is one voter with a load strictly larger than $\alpha + 1$, a contradiction.
Hence, we obtain a dry spell guarantee of $2n-1$.

To see that this bound is tight, consider the following decision sequence with $n$ voters.
All voters have disjoint approval sets. In round $1$, tiebreaking in favour of voter~$n$'s alternative, in later rounds it is always against voter $n$. Voter $n$ loses in rounds $2,\dots, 2n-1$, but wins in round~$2n$. This is a dry spell of length~$2n-2$.
\item Consider $(\{a\},\{b\})^2$. The corresponding decision sequence is $(a,b)$.
If we introduce $(\{a\},\{a\})$ after the first round, the corresponding decision sequence is $(a,a,a)$.
This is because before round 2 voter 1's load is 1 and of voter 2's load is 0. Thus, in round 2, the load of voter 2 increases to 1 and voter 1's load remains 1. Now, in round 3, there is a tie between alternative~$a$ and $b$ and by alphabetic tiebreaking $a$ wins.
\item 
We consider a decision sequence with 7 voters.
In the first two rounds, the preferences are \[(\{a\},\{a\},\{a\},\{b\},\{b\},\{b\},\{c\}).\]
Thus, $a$ wins in the first round (we assume alphabetic tiebreaking) and the corresponding loads are $(\frac 1 3,\frac 1 3,\frac 1 3,0,0,0,0)$.
In the second round $b$ wins and the loads are $(\frac 1 3,\dots,\frac 1 3,0)$.
Assume towards a contradiction that Perpetual \phragmen{} is a WAM. 
We can thus assign weights in some fashion; let these be $x_1,\dots,x_7$.

We now consider several decision instance for round three.
First, if the preferences are \[(\{a\},\{b\},\{c\},\{d\},\{e\},\{f\},\{a,b,c,d,e,f\}),\] then all alternatives are tied.
Thus, we can conclude that $x_1=x_2=x_3=x_4=x_5=x_6$; let $x=x_1=\dots=x_6$ and $y=x_7$.
Second, if the preferences are \[(\{a\},\{a\},\{b\},\{c\},\{d\},\{e\},\{f\}),\] then $a$ wins (the load of voter 1 and 2 would increase to $\frac 5 6$).
Thus, we infer that $2x > y$.
Finally, we consider \[(\{a\},\{a\},\{a\},\{b\},\{c\},\{c\}).\]
Here $a$ and $c$ are tied; in both cases the load of the corresponding voters would increase to $\frac 2 3$.
Thus, it holds that $3x = x+y$ and in turn $2x = y$. This contradicts our previous result that $2x > y$.
We conclude that Perpetual \phragmen{} cannot be ``simulated'' by a WAM.

\item To calculate $\ell_{k+1}(v)$ for all $v\in N$, we have to first find the set of voters $N'$ for which $\ell_{k+1}(N')$ is minimal. Recall that
\[\ell_{k+1}(N')=\frac{1 + \sum_{v\in N'} \ell_k(v)}{|N'|}.\]
Let $N(c)=\{v\in N:A(v)=\{c\}\}$ for $c \in C$. 
We will calculate for each alternative $c\in C_{k+1}$ the subset $N'(c)\subseteq N(c)$ that has a minimal $\ell_{k+1}(N'(c))$ among all subsets of $N(c)$.
To do this we sort voters in $N(c)$ by their load: $\ell_{k}(v_1)<\ell_{k}(v_2)<\dots<\ell_{k}(v_s)$ with $N(c)=\{v_1,\dots,v_s\}$.

We claim that $N'(c)$ is an interval containing $v_1$ in this order, i.e., there exists a $t\leq s$ such that $N'(c)=\{v_1,\dots, v_t\}$. Towards a contradiction, assume that $N'(c)$ does contain $v_t$ but not $v_r$ with $r<t$. Then clearly replacing $\ell_{k+1}(N'(c))> \ell_{k+1}(N'(c)+v_t-v_r$); a contradiction.

Since $N'(c)$ consists of an interval when alternatives are sorted by load, we can determine the optimal value by adding alternatives one by one (starting with the lowest-load alternative, i.e.,~$v_1$).
Then, we compare $N'(c)$ for all $c\in C_{k+1}$ and thus find $N'$ with minimal $\ell_{k+1}(N')$.
This procedure requires polynomial time.
\end{enumerate}
\end{proof}

\begin{proposition}\label{RotDictALQ}
Rotating Dictator satisfies neither ALQ nor AUQ.
\end{proposition}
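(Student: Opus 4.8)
The plan is to exploit the fact that on a simple decision sequence the preferences never change, so Rotating Dictator produces a periodic choice sequence: in round $k$ the dictator $v_{(k\bmod n)}$ always selects the unique alternative she approves, hence the winners of rounds $1,\dots,n$ repeat verbatim in every block of $n$ consecutive rounds. Consequently every party wins exactly $\#v$ times (for any of its supporters $v$) per block of $n$ rounds, but \emph{at positions inside the block that are entirely determined by where that party's supporters sit in the enumeration}. The idea behind both counterexamples is to place all supporters of one party either at the very front or at the very back of the enumeration, so that this party wins ``too early'' (breaking AUQ) or ``too late'' (breaking ALQ) relative to what the quota prescribes after a suitable prefix of rounds.

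Concretely, for \textbf{AUQ} I would take $N=\{v_0,v_1,v_2,v_3\}$ with $A(v_1)=A(v_2)=\{a\}$, $A(v_3)=\{b\}$, $A(v_0)=\{c\}$, and the simple $2$-decision sequence $\calD$ with these preferences. Rounds $1$ and $2$ have dictators $v_1$ and $v_2$, so $a$ wins both; thus $\satisfaction(v_1,\calR(\calD))=2$ while $\lceil 2\cdot 2/4\rceil=\lceil 1\rceil=1$, violating AUQ. For \textbf{ALQ} I would mirror the construction: take $A(v_1)=\{b\}$, $A(v_2)=\{c\}$, $A(v_3)=A(v_0)=\{a\}$ and again the simple $2$-decision sequence. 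Now rounds $1$ and $2$ are won by $b$ and $c$, so $\satisfaction(v_3,\calR(\calD))=0$ while $\lfloor 2\cdot 2/4\rfloor=\lfloor 1\rfloor=1$, violating ALQ. (If a single unified counterexample is preferred, an instance with $n=6$ — two supporters of one party at the front of the enumeration and two supporters of another party at the back — violates both axioms simultaneously.)

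There is essentially no obstacle here: the only things to be careful about are the indexing/tie-breaking convention (round $k$ has dictator $v_{(k\bmod n)}$, so round $1$'s dictator is $v_1$, not $v_0$) and the floor/ceiling arithmetic, both of which are routine. One could also phrase the argument abstractly — for a party of size $m$ whose supporters occupy the first $m$ positions of a block of length $n$ with $m\ge 2$ and $n\ge m+2$, one has $m>\lceil m^2/n\rceil$, which is exactly the AUQ violation after $m$ rounds, and the analogous late-placement computation gives the ALQ violation — but the explicit four-voter instances above are the cleanest presentation.
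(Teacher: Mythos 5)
Your proposal is correct and follows essentially the same approach as the paper: choose a short simple decision sequence and order the dictator rotation so that one party's supporters dictate the first rounds, then check the floor/ceiling arithmetic. The only cosmetic difference is that the paper uses a single five-voter instance (two approving $a$, three approving $b$, with the two $a$-supporters as the first dictators) to violate both AUQ and ALQ at once, whereas you use two separate four-voter instances.
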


\begin{proof}
Consider a simple $2$-decision sequence with five voters and two candidates $a$ and $b$.
Furthermore, assume that two voters approve $a$ and three voters approve $b$.
Now, additionally assume that the two voters approving $a$ are picked as dictators first,
hence $a$ is the winner in the first two rounds.
Let $v$ be a voter approving $a$. Then, in round $2$ we have 
\[\satisfaction(v,(a,a)) = 2 > \left\lceil 2 \cdot \frac{2}{5} \right\rceil = \left\lceil k \cdot \frac{\#v}{n} \right\rceil.\]
Hence AUQ is violated. Now, let $v'$ be a voter approving $b$. Then, in round $2$ we have 
\[\satisfaction(v',(a,a)) = 0 < \left\lfloor 2 \cdot \frac{3}{5} \right\rfloor = \left\lfloor k \cdot \frac{\#v'}{n} \right\rfloor.\]
Therefore, ALQ is violated.

\end{proof}

\begin{reptheorem}{thm:pav-char}
\pavchar
\end{reptheorem}
\begin{proof}
Let $w = (w_1 =1, w_2 = g(1), w_3 = g(g(1)), \dots)$. We claim that $w_j = \nicefrac 1 j$
must hold for all $j$.
First assume that there is a $j \geq 1$ such that $w_j > \nicefrac 1 j$.
Then, there is a $\epsilon > 0$ such that $w_j = \nicefrac 1 j +\epsilon$.
We pick a $k > \lceil \nicefrac{1}{j \epsilon} \rceil$ such that $j$ divides $k$.
Now, we consider the following $k$-decision sequence:
Let $t = \nicefrac k j$. Then, in every round we have $C = \{c_0, \dots , c_t\}$.
Furthermore, $N = \{1, \dots , k^2\}$ and in every round the same $k$ voters vote for $c_0$
and for every $c_i$ with $i \geq 1$ there are $j(k-1)$ unique voters that vote $c_i$ each round.
We observe that this is an apportionment-like decision sequence. Furthermore,
$k + tj(k-1) = k^2$, hence it is actually possible to pick unique voters for every alternative.
Now, the quota of alternative $c_0$ in each round is $\nicefrac k k^2$. Hence, the quota of 
$c_0$ is $1$ in round $k$. Because $\calR$ satisfies ALQ this implies that $c_0$ is the winner in at least 
one round.
All other alternatives have the same number 
of voters supporting them and as we assume that $g$ is decreasing, 
the number of wins of the other alternatives can not differ by more than 1.
As there are $t$ other alternatives, this means for all $i \geq 1$ that $c_i$
can win at most $\nicefrac k t = j$ rounds. Therefore the support
of alternative $c_i$ in every round is at least 
\[j(k-1)\cdot\left(\frac{1}{j} + \epsilon\right) = k - 1 + j\epsilon(k-1) >
k - 1 - \frac{j\epsilon}{j\epsilon} = k\]
However, the support of $c_0$ is at most $k$. Hence, $c_0$ can not win a single round.
A contradiction.

Now, assume there is a $j$ such that $w_j < \nicefrac 1 j$.
Then, there is a $\epsilon > 0$ such that $w_j = \nicefrac 1 j -\epsilon$.
We pick a $k >j + \lceil \nicefrac{1}{\epsilon} \rceil$.
Now, we consider the following $k$-decision sequence:
In every round we have $C = \{c_0, \dots, c_{k-j+1}\}$.
Furthermore, $N = \{1, \dots , k(k-j+1)\}$ and in every round the same $j(k-j+1)$ voters vote for $c_0$
and for every $c_i$ with $i \geq 1$ there are $k-j$ unique voters that vote $c_i$ each round.
We observe that this is an apportionment-like decision sequence. Furthermore,
$j(k-j+1) + (k-j)(k-j+1) = k(k-j+1)$, hence it is actually possible to pick unique voters for every alternative.
Now, the quota of alternative $c_0$ in each round is 
\[\frac{j(k-j+1)}{k(k-j+1)} = \frac{j}{k} = k-j + 1 - kj\epsilon + j^2\epsilon  - j\epsilon < 
k - j + (1-j) \leq k-j.\]
Hence, the quota of $c_0$ is $j$ in round $k$.
Because $\calR$ satisfies ALQ this implies that $c_0$ is the winner in at least 
$j$ round.
We claim that $c_0$ only wins $j-1$ rounds. 
Assume we are in a round, such that $c_0$ has already won $j-1$ rounds.
Then, the score of $c_0$ is 
\[j(k-j+1)\left(\frac{1}{j} - \epsilon \right).\] 
All other alternatives have the same number 
of voters supporting them and as we assume that $g$ is decreasing, 
the number of wins of the other alternatives can not differ by more than 1.
As there are $k-j+1$ other alternatives and $c_0$ wins $j-1$ rounds,
this means for all $i \geq 1$ that $c_i$
can win at most \[\frac{k-j+1}{k-j+1} = 1\] rounds. Therefore,
in every round, there is at least one alternative $c_i$ that has not won a single round yet.
The score of that alternative is $k-j$. 
However, this is larger than the score of $c_0$. Hence, $c_0$ can not win a $j$th round.
A contradiction.

It remains to show that Perpetual PAV satisfies ALQ. This follows again from the fact that Perpetual PAV is identical to D'Hondt method in the apportionment
setting. See the proof of Theorem~\ref{thm:satisfysp} for more details.

Let us show that the characterization does not hold if the restriction $f(x)=x$ is dropped, i.e., if arbitrary $f$-functions are allowed.
To see this, consider the basic WAM $\calR_f$ defined by:
\begin{align*}
f(x)&= \begin{cases} \pi\cdot x & \text{if $x$ is a rational number},\\
x & \text{otherwise}.\end{cases} \\
g(x)&= 
\begin{cases} \frac{1}{\satisfaction{} + 1} & \text{if $x$ is a rational number},\\
\frac{\pi}{\satisfaction{} + 1} & \text{otherwise}.\end{cases}
\end{align*}
This function essentially resembles PAV but the weight of a voter is multiplied by $\pi$ the first time she is unsatisfied with a decision. This change does not impact ALQ:
Towards a contradiction let $\calD$ be a simple $k$-decision instance for which $\calR_f$ violates ALQ.
Let $N'$ be a representative subset of voters, i.e.,
for each $v\in N$ there exists a $u'\in N'$ such that $A(v)=A(v')$,
and further let $N'$ be minimal (no proper subset is representative).
As $\calR_f$ fails ALQ on $\calD$, there exists a $u\in N'$ such that 
\[\satisfaction(u, \calR(\calD))<\left\lfloor{k\cdot \frac{\#u}{n} }\right\rfloor.\]
Now note that 
\[\sum_{v'\in N'} \satisfaction(v', \calR(\calD)) = k.\]
Hence, there exists a $w\in N'$ with
 \[\satisfaction(w, \calR(\calD))>\left\lceil k\cdot \frac{\#w}{n} \right\rceil.\]
 Now consider the round where the satisfaction of $w$ increased to more than $\left\lceil k\cdot \frac{\#w}{n} \right\rceil$. 
 At this point, the weight of $w$ is at most \[\frac{\pi}{\left\lceil k\cdot \frac{\#w}{n} \right\rceil},\]
 the weight of $u$ is strictly larger than 
 \[\frac{\pi}{\left\lfloor{k\cdot \frac{\#u}{n} }\right\rfloor}.\]
 (The weight is multiplied by $\pi$ because $v$ was unsatisfied by at least one decision where $w$ won.)
 Now observe that
 \[\#w \cdot \frac{\pi}{\left\lceil k\cdot \frac{\#w}{n} \right\rceil} \leq  \#u\cdot \frac{\pi}{\left\lfloor{k\cdot \frac{\#u}{n} }\right\rfloor}.\]
 Consequently, the total weight of voters with the same preference as $u$ have a larger weight than all voters with the same preference as $w$.
 This contradicts the assumption that $w$'s satisfaction increases this round.
\end{proof}

\begin{repproposition}{prop:apportionment}
\propapportionment
\end{repproposition}
\begin{proof}
The first part of Proposition~\ref{prop:apportionment} follows directly from results in the approval-based multi-winner literature (e.g., from \cite{Brill2017MultiwinnerApprovalRules}), in particular from the fact that Sequential PAV and \phragmen's sequential rule behave like the D'Hondt method in
the apportionment setting.
 
Frege's apportionment method~\cite{frege} is defined as follows:
Let $(p_1,\dots,p_m)$ be a tuple of reals such that $\sum_{i=1}^m=1$.
These numbers represent an apportionment instance with $m$ parties, in which party~$i$ has received a $p_i$-fraction of votes.
Frege's apportionment method is defined in rounds; each round~$t$ a seat is assigned to the party~$i$ with highest weight $s_i^t$.
In round~1, the weights are defined as $s_i^1=p_i$ for $i\in\{1,\dots,m\}$.
The weights in round $t+1$ are defined as
\[	s_i^{t+1}		=	
	\begin{cases}
		\rule[-1em]{0pt}{2em}
		s_i^t+p_i-1	&\text{if party $i$ receives a seat,}\\
		s_i^t+p_i																&\text{otherwise.}
	\end{cases}
\]
To see that Frege's apportionment method coincides with Perpetual Consensus on apportionment instances,
let $\calD$ be a simple $k$-decision sequence.
Further, let $C=\{c_1,\dots,c_m\}$ be the set of alternatives
and for $c\in C$, $N(c)=\{v\in N:A(v)=\{c\}\}$.
We define $p_i=\frac{|N(c_i)|}{n}$ for  $i\in\{1,\dots,m\}$.
Then $\calD$ corresponds to the apportionment instance $(p_1,\dots,p_m)$.
Let $\alpha_t(v)$ denote the weight of voter~$v$ in round~$t$ for Perpetual Consensus.
For $v\in N$ with $A(v)=\{c_i\}$, we claim that 
\begin{align}
|N(c_i)|\cdot \alpha_t(v)=n\cdot s_t^i.\label{eq:frege-cons}
\end{align}
For $t=1$, this holds since $|N(c_i)|\cdot \alpha_t(v)=|N(c_i)|=n\cdot p_i = n\cdot s_t^1$.
By induction, if $c_i$ won in round~$t$,
then 
\begin{align*}
|N(c_i)|\cdot \alpha_{t+1}(v)&=|N(c_i)|\cdot \alpha_{t}(v) -n + |N(c_i)|\\
&= n\cdot s_t^i -n + |N(c_i)| \\
&= n\cdot (s_t^i + \frac{|N(c_i)|}{n} - 1) \\
&= n\cdot (s_t^i + p_i - 1) \\
&=n\cdot s_{t+1}^i.
\end{align*}
If $c_i$ did not win in round~$t$,
then 
\begin{align*}
|N(c_i)|\cdot \alpha_{t+1}(v)&=|N(c_i)|\cdot \alpha_{t}(v) |N(c_i)|\\
&= n\cdot s_t^i |N(c_i)| \\
&= n\cdot (s_t^i + \frac{|N(c_i)|}{n}) \\
&= n\cdot (s_t^i + p_i ) \\
&=n\cdot s_{t+1}^i.
\end{align*}
Now note that the score of alternative $c$ in round~$t$ with Perpetual Consensus is
$\score_{k+1}(c) = |N(c_i)|\cdot \alpha_t(v)$ (this holds for every voter $v\in N(c_i)$).
The weight of alternative~$c$ in Frege's apportionment method is  $s_t^i$.
As these two number are proportional (Equation~\ref{eq:frege-cons}) and both methods choose the alternative with the higher score, they select the same alternative (assuming the same tiebreaking is used).

Finally, Frege's apportionment method satisfies AUQ but fails ALQ \cite{frege}.
\end{proof}

\section{Dry Spells and Dictators}\label{app:dict}

\begin{proposition}
Rotating Dictator has a dry spell guarantee of $n$. This guarantee is tight, i.e., dry spells of length $n-1$ may occur.
\end{proposition}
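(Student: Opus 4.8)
The plan is to treat the guarantee and its tightness separately. For the guarantee, I would first record the basic observation that under Rotating Dictator the winner in round~$k$ is selected from $A_k(v_{(k \bmod n)})$, so the round-$k$ dictator is always satisfied in round~$k$ (every voter approves a non-empty set, so the dictator can always pick one of her own approved alternatives). The engine of the proof is then the elementary fact that $k \mapsto k \bmod n$, restricted to any block of $n$ consecutive integers, is a bijection onto $\{0,\dots,n-1\}$: every window of rounds $t+1,\dots,t+n$ contains exactly one round in which each voter $v_i$ is the dictator. Hence $\satisfaction(v_i,\bar w_{\le t}) < \satisfaction(v_i,\bar w_{\le t+n})$ for every $i$ and every admissible~$t$, so no voter can have a dry spell of length~$n$; this is precisely a dry-spell guarantee of~$n$.

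For tightness I would construct a small explicit witness: take $N=\{v_0,\dots,v_{n-1}\}$, two alternatives $a,b$, and an $(n-1)$-decision sequence in which $v_0$ always approves $\{a\}$ while $v_1,\dots,v_{n-1}$ always approve $\{b\}$ (the preferences being the same in every round). In round~$i$ for $i=1,\dots,n-1$ the dictator is $v_i$, who approves only $b$, so $b$ wins in all $n-1$ rounds. Thus $\satisfaction(v_0,\bar w_{\le 0}) = 0 = \satisfaction(v_0,\bar w_{\le n-1})$, i.e.\ $v_0$ suffers a dry spell of length $n-1$, so the guarantee cannot be improved to $n-1$.

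I do not expect a genuine obstacle here -- the whole argument is bookkeeping. The only point that needs care is fixing the indexing convention for the dictator rotation (where the round count starts and how $k \bmod n$ is matched against the enumeration $(v_0,\dots,v_{n-1})$) and verifying that, under that convention, every block of $n$ consecutive rounds still realizes each voter as dictator exactly once regardless of where the block begins. Once that convention is pinned down, both parts follow immediately.
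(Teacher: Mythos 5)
Your proof is correct and matches the paper's approach: the paper dismisses this proposition with ``by definition,'' and your argument is exactly the definitional unwinding it has in mind (each window of $n$ consecutive rounds realizes every voter as dictator once, and the dictator is always satisfied; tightness via a voter who is dictator only every $n$-th round). Nothing further is needed.
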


\begin{proof}
By definition.
\end{proof}

By definition, it is impossible to have a dry spell guarantee of less than $n$.
Even though Rotating Dictator has this very good dry spell guarantee, it has the very undesirable property
that in every round only the opinion of one voter is considered.
It turns out, that this can not be avoided in the worst case if we expect a dry spell guarantee of $n$.

\begin{definition}
Let $\calH = (N,\bar A, \bar C, \bar w)$ 
be a $k$-decision history.
We say a round $j\leq k$ is \emph{dictatorial} if
there exists an alternatives $c\neq w_j$ and a voter $v\in N$ such that
$w_j \in A_{j}(v)$ and for all $v' \neq v$ it holds that $w_j \notin A_{j}(v')$ and $c \in A_j(v')$.

A perpetual voting rule $\calR$
has \emph{bounded dictatorial rounds} if there exists a function $d$ from $\mathbb{N}$ to $\mathbb{N}$ such that for any decision sequence $\calD=(N,\bar A, \bar C)$
and $\bar w = \calR(\calD)$ there are fewer than $d(|N|)$ consecutive dictatorial rounds.
\end{definition}

\newcommand{\strgdictrounds}{Every voting rule with a dry spell guarantee of $d(n)=n$ has unbounded dictatorial rounds.}
\begin{proposition}
\strgdictrounds\label{prop:strgdictrounds}
\end{proposition}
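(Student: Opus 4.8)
The plan is to show that for \emph{any} rule $\calR$ with dry spell guarantee $d(n)=n$ one can build decision sequences over a fixed set of $n\ge 3$ voters (the case $n=2$ is trivial, since every round with two disjoint singleton ballots is already dictatorial) that contain arbitrarily long blocks of consecutive dictatorial rounds; this refutes bounded dictatorial rounds. The construction is adaptive -- each round's profile is chosen in response to $\calR$'s previous choices -- and it repeatedly uses that, since $\calR$ has the guarantee, $\calR$ can never be maneuvered into a position from which a dry spell of length $n$ is unavoidable (otherwise extending the sequence would contradict the hypothesis). I track, for each voter $v$, its current losing streak $\ell_v$ (number of consecutive most recent rounds in which $v$ was not satisfied): the guarantee forces $\ell_v\le n-1$ at all times, and it forces $\calR$ to satisfy $v$ in the next round whenever $\ell_v=n-1$.

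Two gadget profiles do the work. A \emph{spreading round} has all $n$ voters approve pairwise disjoint fresh singletons: whoever wins, one streak drops to $0$ and the other $n-1$ streaks increase by $1$, and for $n\ge 3$ the round is never dictatorial (the $n-1$ losers share no common alternative). A \emph{dictator round for $v$} has $v$ approve a fresh singleton $\{a\}$, every other voter approve a common fresh singleton $\{c\}\neq\{a\}$, and alternative set $\{a,c\}$; if $\calR$ picks $a$ this round is dictatorial by definition (with witnesses $v$ and $c$), and if $\ell_v=n-1$ then $\calR$ is forced to pick $a$.

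Phase~1 consists of $n-1$ spreading rounds and drives the streak-multiset to the \emph{staggered} value $\{0,1,2,\dots,n-1\}$. The key claim, proved by induction on the round number, is that after $j$ spreading rounds the streak-multiset equals $\{0,1,\dots,j-1\}$ together with $n-j$ copies of $j$: if $\calR$ ever reset a voter whose streak is strictly below the current maximum $j$, then the $n-j\ge 2$ voters at level $j$ would all move to level $j+1$, and continuing with spreading rounds would push at least two of them up to level $n-1$ and then one to level $n$ -- a dry spell of length $n$, impossible for $\calR$. Hence $\calR$ must reset a voter of maximal streak each round, so after $n-1$ rounds the streak-multiset is $\{0,1,\dots,n-1\}$. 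Phase~2 then plays, round after round, a dictator round for the unique voter currently at streak $n-1$: that voter is forced to win, the round is dictatorial, and the streak-multiset returns to $\{0,1,\dots,n-1\}$ (cyclically shifted), so a fresh voter now sits at $n-1$ and we repeat. Running Phase~2 for as many rounds as desired yields arbitrarily many consecutive dictatorial rounds.

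The main obstacle is the Phase~1 staggering claim: showing $\calR$ genuinely cannot avoid being funnelled into the streak-profile $\{0,1,\dots,n-1\}$. Everything else is routine -- checking against the definition that a dictator round in which the dictator wins is dictatorial, checking that the staggered profile reproduces itself under dictator rounds, and invoking the adaptive-adversary principle that a rule which truly has the guarantee cannot be driven into an unavoidable violation. One should also verify the low-round boundary so that the first spreading rounds do not themselves manufacture a dry spell, and confirm that Phase~2 starts at round $\ge n$, so the index $t=k-n$ witnessing a would-be dry spell in the definition is non-negative.
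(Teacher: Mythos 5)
Your proof is correct and follows essentially the same strategy as the paper's: force the voters into a staggered losing-streak configuration $\{0,1,\dots,n-1\}$ via disjoint-singleton rounds, then repeatedly play dictator gadgets in which the voter at streak $n-1$ is forced to win, each such round being dictatorial while cyclically regenerating the configuration. The only difference is that the paper reaches the staggered configuration more directly, by playing $n$ disjoint-singleton rounds and observing that each of the $n$ voters must win exactly once (otherwise some voter has a dry spell of length $n$), which avoids your round-by-round induction and pigeonhole argument.
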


\begin{proof}
For the sake of a contradiction, assume that $\calR$ is a voting rule
with bounded dry spell guarantee $n$ and bounded dictatorial rounds.
Furthermore, let $d$ be the function bounding the number of dictatorial rounds.
Consider the following $k$-decision sequence, with $k = n + d(n)$:
$C_i = \{c_1, \dots ,c_n\}$ for all $i \leq k$.
$A_i(v_j) = c_j$ for every $i \leq n$.

We claim that every voter wins exactly once in the first $n$ rounds.
Assume otherwise that there is a voter $v^*$
that wins twice in the first $n$ rounds.
By definition, in each round at most one voter can win. 
Hence there must be a voter who does not win in the 
first $n$ rounds. However, this contradicts the assumption
that $\calR$ has a bounded dry spell guarantee of $n$.
We assume w.l.o.g.\ that voter $v_i$ wins in round $i$.

Now for the remaining rounds, we set
\[
A_i(v_j) =
\begin{cases} 
c_1 &\text{ if } i = j \mod n \\
c_2 &\text{ else.}
\end{cases}
\]

We claim that for every $i$  
we have $w_i = A_i(v_{j})$ for $j = i \mod n$ and $w_i \not \in A_i(v_{j^*})$
for all $j^* \neq i\mod n$.
We show this by induction.
By the argument above, this holds for the first $n$ rounds. 
Now assume that the claim holds for some $i \geq n$.
Then, in round $i +1$ the dry spell of voter $j$ with $j = i+1 \mod n$ is 
$n -1$ as $i+1 \neq i - l \mod n$ for all $0 \leq l < n-1$.
Hence, the bounded dry spell guarantee of $n$ 
implies $w_{i+1} \in  A_{i+1}(v_{j})$.
By the construction of $\bar A$ this implies
the claim.
This shows that there are $d(n)$ consecutive dictatorial rounds.
This is contradiction! 
\end{proof}

Voting rules with less strict dry spell bounds can avoid this very undesirable behavior,
but every rule that has limited dry spells must have some dictatorial rounds.

\newcommand{\weakdictrounds}{Every voting rule with bounded dry spells has dictatorial rounds.}
\begin{proposition}
\weakdictrounds\label{prop:weakdictrounds}
\end{proposition}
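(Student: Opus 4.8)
The plan is to argue by contradiction: suppose $\calR$ has bounded dry spells (say with dry spell guarantee $b$) but no dictatorial rounds ever occur. I will then construct a decision sequence that forces a voter into an arbitrarily long dry spell, contradicting the guarantee. The key idea is to exploit the absence of dictatorial rounds: whenever a single voter $v$ is the lone approver of the winning alternative $w_j$ while \emph{every} other voter approves some common competitor $c$, round $j$ is dictatorial. So if we design rounds where the only way to satisfy $v$ is through such a configuration, and dictatorial rounds are forbidden, then $v$ can never be satisfied in those rounds.

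Concretely, I would take $n+1$ voters and consider rounds with two alternatives $\{a,b\}$ in which a fixed voter $v_0$ approves only $\{a\}$, and all remaining $n$ voters approve only $\{b\}$. In such a round, if $a$ wins, then $v_0$ is the sole approver of $a$ and all other voters approve $b$ — this is precisely a dictatorial round. Hence, under the assumption that $\calR$ has no dictatorial rounds, $b$ must win every such round, so $v_0$ is never satisfied. Repeating this configuration for more than $b(n+1)$ consecutive rounds produces a dry spell for $v_0$ of length exceeding the guarantee, the desired contradiction. One should double-check the boundary case $n=1$ (where ``all other voters approve $c$'' is vacuous); there the notion of dictatorial round degenerates and the statement may need $n\geq 2$, or a small separate argument, which I would handle explicitly.

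The main obstacle is getting the definition of dictatorial round to bite exactly: the definition requires a \emph{specific} competitor $c$ approved by \emph{all} voters other than the winner's sole supporter, and requires the winner's supporter to be truly unique. The two-alternative gadget above makes all three conditions automatic ($c=b$, $v=v_0$ is the unique approver of $a=w_j$, and every other voter approves $b$ and not $a$), so this is really the crux that needs to be verified carefully rather than a deep difficulty. A secondary point is that $\calR$ is an arbitrary perpetual voting rule, not a WAM, so I cannot use any weight-based reasoning — but the argument above only uses the abstract definitions of dictatorial round and dry spell, so this is fine. I would also note that this construction can be prepended with any prior history without affecting the argument, so the dry spell can be made to start at any point, matching the universally quantified form of the bounded-dry-spells definition.
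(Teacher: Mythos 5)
Your proof is correct and rests on the same key observation as the paper's: to satisfy a voter who is the sole approver of an alternative while all other voters approve a common competitor, the rule must produce a round that is dictatorial by definition. The paper reaches this via a terser route (taking the minimal dry-spell bound $d$ and a witness sequence where some voter has a dry spell of length $d(|N|)-1$, so that voter must win the next round), whereas you construct the forcing gadget explicitly and run it for more than $b(n+1)$ rounds; your version is, if anything, more self-contained, since it spells out the profile that makes the forced win dictatorial and avoids appealing to minimality of the bounding function.
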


\begin{proof}
Let $\calR$ be a voting rule with limited dry spells. Let $d$ be the minimal function 
bounding the dry spells of $\calR$. Then by the minimality of $d$, there is a $k$-decision sequence 
$(N,\bar A, \bar C)$ where some voter $v$ has a dry spell of $d(N)-1$ in round $k$.
Then, voter $v$ must win in round $k+1$, hence round $k+1$ is dictatorial.
\end{proof}

\section{Proofs from the Original Perpetual Voting Paper}\label{App:orig-proofs}

The following proofs are from the original paper on perpetual voting \cite{Lackner2020} and are included here for completeness.

In the following we use a special notation for defining decision sequences.
As usual, we write $(A(1), A(2),\dots, A(n))$ to denote an approval profile.
For such an approval profile $A$, we write $A^k$ to denote the $k$-decision sequence 
where in each round approval profile $A$ is used.
The set of voters and the set of alternatives is omitted and chosen accordingly.
We use the $\oplus$ operator to concatenate two decision sequences.

\begin{theorem}\label{thm:setzerodsg}
\setzerodsg
\end{theorem}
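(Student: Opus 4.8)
The plan is to first rewrite Perpetual Reset as the basic WAM with $f(x)=x+1$ and $g(x)=1$, and record the resulting weight interpretation: in any round~$i$ the weight $\alpha_i(v)$ of a voter equals $1$ plus the length of $v$'s current run of unsatisfied rounds, equivalently $\alpha_i(v)=i-j$ where $j\le i-1$ is the last round in which $v$ was satisfied (and $j=0$ if there is none). In particular, a dry spell of length $\ell$ forces $v$'s weight to climb through $1,2,\dots,\ell$ while $v$ is unsatisfied.

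For the upper bound of $2n-2$ the plan is a double-counting argument. Suppose for contradiction that some voter~$v$ is unsatisfied in $2n-2$ consecutive rounds; extend this to a \emph{maximal} run of unsatisfied rounds $t+1,\dots,t+\ell$, so $\ell\ge 2n-2$ and $v$ is satisfied in round~$t$ (or $t=0$), hence $\alpha_{t+1}(v)=1$ and $\alpha_{t+j}(v)=j$ for $j=1,\dots,\ell$. I would estimate $\Sigma:=\sum_{j=1}^{\ell}\score_{t+j}(w_{t+j})$ two ways. Lower bound: since $v$ is unsatisfied in round $t+j$, the winning score dominates the score of whatever $v$ approves, which is at least $\alpha_{t+j}(v)=j$, so $\Sigma\ge\sum_{j=1}^{\ell}j=\frac{\ell(\ell+1)}{2}$. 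Upper bound: $v$ contributes to none of these scores, so $\Sigma=\sum_{u\in N\setminus\{v\}}D_u$, where $D_u$ is the total weight $u$ contributes over the rounds of the window in which $u$ is satisfied; because $g\equiv 1$ resets $u$'s weight to $1$ at every satisfaction, these contributions telescope, and when all weights are $1$ at the start of the window (in particular whenever $t=0$) one gets $D_u=j_p\le\ell$, where $t+j_p$ is $u$'s last satisfied round in the window. Summing over the $n-1$ voters other than $v$ yields $\Sigma\le(n-1)\ell$, and comparing the two bounds gives $\ell+1\le 2(n-1)$, i.e.\ $\ell\le 2n-3$, a contradiction. The hard part will be the case $t>0$, where a voter $u\ne v$ may already carry a large weight $\alpha_{t+1}(u)$ at the start of the window, so only $D_u\le\alpha_{t+1}(u)+\ell-1$ holds; I would control the total ``overhang'' $\sum_u(\alpha_{t+1}(u)-1)$ over voters satisfied during the window by induction on the final round of the run -- the overhang of such a $u$ is itself the length of an earlier unsatisfied run of $u$ ending before round $t+\ell$ (covered by the inductive hypothesis), and $v$ being satisfied in round $t$ resets a chunk of the electorate -- but making this bookkeeping tight enough to preserve the estimate is the crux.

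For tightness I would exhibit, for every $n$, an instance with a dry spell of length $2n-3$. Take $N=\{v,u_1,\dots,u_{n-1}\}$; in every round the alternatives are $\{a,b,p_1,\dots,p_{n-1}\}$ with a fixed tie-breaking order in which $b$ beats everything and $a$ beats every $p_i$; let $v$ approve $\{a\}$ throughout. In rounds $1,\dots,n-1$, in round~$j$ let $u_j$ approve $\{b\}$ and every other $u_i$ approve its private alternative $\{p_i\}$; then $b$ and $a$ are tied at score $j$ (and each $p_i$ has score at most $j$), so $b$ wins, $v$ is unsatisfied, $u_j$'s weight resets to $1$, and the not-yet-satisfied $u_i$ keep accumulating weight. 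In rounds $n,\dots,2n-3$, in round $n-1+i$ (for $i=1,\dots,n-2$) let the set $S_i=\{u_i,u_{n-1}\}\cup\{u_1,\dots,u_{i-1}\}$ approve $\{b\}$ and the remaining $u_j$ approve their private alternatives; a short induction shows the voters of $S_i$ have weights $n-1$ (for $u_i$, last satisfied in round~$i$) and $1$ (for the other $i$ voters of $S_i$, satisfied in the previous round), so $b$ again has score exactly $n-1+i=\alpha_{n-1+i}(v)$ and wins, while the private alternatives score at most $n-2$. Hence $v$ is unsatisfied in all of rounds $1,\dots,2n-3$, which is a dry spell of length $2n-3$; and since all other voters are reset to weight $1$ in round $2n-3$, voter $v$ (weight $2n-2$) wins round $2n-2$, consistent with the claimed bound.
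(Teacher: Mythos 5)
Your tightness construction is correct and is a legitimate alternative to the paper's: the paper builds a single-approval profile in which voters gradually migrate to one alternative $c_1$ (with tie-breaking against voter $n$), whereas you keep a designated alternative $b$ whose supporting coalition is engineered each round to tie voter $v$'s weight exactly. Both work; yours is arguably easier to verify round by round.

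The upper bound, however, has a genuine gap, and it is exactly the part you flag as ``the crux.'' Your double count of $\Sigma=\sum_{j=1}^{\ell}\score_{t+j}(w_{t+j})$ closes only when every voter enters the window with weight $1$; for $t>0$ a voter $u$ may carry an arbitrarily large weight $\alpha_{t+1}(u)$ into the window, your upper bound degrades to $D_u\le\alpha_{t+1}(u)-1+\ell$, and the proposed repair (an induction over earlier unsatisfied runs of the other voters, plus the observation that round $t$ ``resets a chunk of the electorate'') is not an argument: round $t$ resets only the voters who approved $w_t$, and nothing prevents several voters from entering the window with large weights simultaneously. The missing idea is that the overhang $\sum_u(\alpha_{t+1}(u)-1)$ can be charged to the \emph{lower} bound rather than fought in the upper bound: in the round $t+j$ in which $u$ is satisfied for the first time within the window, $u$ approves the winner, so the winning score is at least $\sum_{u\in F_j}\bigl(\alpha_{t+1}(u)+j-1\bigr)\geq j+\sum_{u\in F_j}\bigl(\alpha_{t+1}(u)-1\bigr)$, where $F_j$ is the set of voters first satisfied in that round (and it is still at least $j$ when $F_j=\emptyset$, via $v$'s approved alternative). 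Summing gives $\Sigma\geq\frac{\ell(\ell+1)}{2}+\sum_u(\alpha_{t+1}(u)-1)$, the overhang cancels against your upper bound $\Sigma\le\sum_u(\alpha_{t+1}(u)-1)+(n-1)\ell$, and $\ell\le 2n-3$ follows. This is precisely the role of Case (ii) in the paper's inductive inequality on the total zeroed weight, which tracks the ``starting weights'' $a_i$ of voters the first time they are zeroed; without this step (or an equivalent one) your proof only establishes the bound for dry spells beginning in round $1$.
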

\begin{proof}
Let us consider an arbitrary $(2n-2)$-decision sequence. We show (without loss of generality) that if voter $n$ does not win within the first $2n-3$ rounds, then $n$ necessarily wins in round $2n-2$.
To reflect that this decision sequence may be preceded by another decision sequence, let $a_1,\dots,a_n$ be arbitrary voter weights, and without loss of generality assume that $a_1\geq \dots \geq a_{n-1}$.
As the weight of voter $n$ increases in each round, her weight is at least $2n-2$ in round $2n-2$.
We will show that the total weight of voters $1,\dots, n-1$ in round $2n-2$ is less that $2n-2$ and hence voter $n$ is guaranteed to win in round $2n-2$, i.e., this proves that dry spells last for at most $2n-2$ rounds.

Let $s_t$ denote the weight that has been zeroed in round $t$, and
let $Z_t$ denote the set of voters that have been zeroed in rounds $1$ to $t$.
As a first step, we are going to prove for all $t\leq 2n-2$ that
\begin{align}
\sum_{j=1}^{t} s_j \geq \sum_{i\in Z_t\cup\{1,\dots,t\}} a_i + \frac{t(t+1)}{2}.\label{eq:perp-setzero-1}
\end{align}
We prove Ineqation~\eqref{eq:perp-setzero-1} by induction. For $t=0$ the statement obviously holds.
Assume that the statement holds for $t$.
We distinguish two cases: (i) $Z_{t+1} = Z_{t}$, and (ii) $Z_{t+1} \neq Z_{t}$.

Case (i): 
We know that voter $n$ does not win in round $t+1$. As the weight of voter $n$ in this round is at least $t+1$, the winning voters have a total weight of at least $t+1$.
Hence $s_{t+1}\geq t+1$. As we know that $Z_{t+1}=Z_t$, it follows that
\begin{align*}
\sum_{j=1}^{t+1} s_j  &\geq \sum_{i\in Z_t} a_i + \frac{t(t+1)}{2} +  s_{t+1}\\
 &\geq \sum_{i\in Z_{t+1}} a_i + \frac{t(t+1)}{2} +(t+1)\\
 &= \sum_{i\in Z_{t+1}} a_i + \frac{(t+1)(t+2)}{2} .
\end{align*}
Case (ii): 
Let $x\in Z_{t+1} \neq Z_{t}$. Voter $x$ has been zeroed in round $t+1$ for the first time. Hence, $x$'s voter weight is at least $a_x+t+1$. It follows that $s_{t+1} \geq \sum_{i\in Z_{t+1}\setminus Z_t} a_i +  (t+1)$
\begin{align*}
\sum_{j=1}^{t+1} s_j  &\geq \sum_{i\in Z_t} a_i + \frac{t(t+1)}{2} +   s_{t+1}\\
 &\geq \sum_{i\in Z_{t}} a_i + \frac{t(t+1)}{2} + \sum_{i\in Z_{t+1}\setminus Z_t} a_i +  (t+1)\\
 &= \sum_{i\in Z_{t+1}} a_i + \frac{(t+1)(t+2)}{2} .
\end{align*}

Towards a contradiction assume that $Z_{2n-3}\neq \{1,\dots, n-1\}$. By assumption $n\notin Z_{2n-3}$, so consequently there are at least two voters that have not been zeroed in rounds $1$ to $2n-3$; let the other voter be $k$.
We are now considering the voter weights in round $2n-3$.
We can bound the weight of voters $ \{1,\dots,k-1,k+1,\dots,n-1\}$ (using Inequation~\ref{eq:perp-setzero-1}) as follows:
\begin{align*}
&\underbrace{\sum_{i\in \{1,\dots,k-1,k+1,\dots,n-1\}} a_i}_{\text{starting weight}} + \underbrace{(n-2)(2n-3)}_{\text{the weight gained in $(2n-3)$ rounds}}\\ &- \underbrace{\sum_{t=1}^{2n-4} s_t}_{\text{zeroed weight in previous rounds}} \leq \\
&\leq (n-2)(2n-3) - \frac{(2n-4)(2n-3)}{2} = 0.
\end{align*}
As the weight of voter $k$ is exactly $a_k+2n-3$, we obtain a contradiction towards the assumption that voter $k$ did not win in this round. Hence, $Z_{2n-3} = \{1,\dots,n-1\}$.

It remains to show that voter $n$ necessarily has to win in round $2n-2$, and thus $n$'s dry spell can last at most $2n-3$ rounds.
We bound the weight of voters $ \{1,\dots,n-1\}$, as before by using Inequation~\ref{eq:perp-setzero-1}:
\begin{align*}
&\underbrace{\sum_{i\in \{1,\dots,n-1\}} a_i}_{\text{starting weight}} + \underbrace{(n-1)(2n-2)}_{\text{the weight gained in $(2n-3)$ rounds}} \\ &- \underbrace{\sum_{t=1}^{2n-3} s_t}_{\text{zeroed weight in previous rounds}} \leq \\
&\leq (n-1)(2n-2) - \frac{(2n-3)(2n-2)}{2} = n-1.
\end{align*}
Since the weight of voter $n$ is $a_n + 2n-2>n-1$, she must win this round.

\begin{table*}
\begin{center}
\begin{tabular}{cccccccccc}\toprule
• & 1 & 2 & 3 & $\dots$ & $n-3$ & $n-2$ & $n-1$ & $n$ & choice \\ 
\midrule
$P_1$ & $\{c_1\}$ & $\{c_2\}$ & $\{c_3\}$ & $\dots$ & $\{c_{n-3}\}$ & $\{c_{n-2}\}$ & $\{c_{n-1}\}$ & $\{c_{n}\}$ & $c_{1}$ \\ 
$P_2$ & $\{c_1\}$ & $\{c_2\}$ & $\{c_3\}$ & $\dots$ & $\{c_{n-3}\}$ & $\{c_{n-2}\}$ & $\{c_{n-1}\}$ & $\{c_{n}\}$ & $c_{2}$ \\ 
$\dots$ & • & • & • & • & • & • & • & • \\ 
$P_{n-1}$ & $\{c_1\}$ & $\{c_2\}$ & $\{c_3\}$ & $\dots$ & $\{c_{n-3}\}$ & $\{c_{n-2}\}$ & $\{c_{n-1}\}$ & $\{c_{n}\}$ & $c_{n-1}$ \\ 
$P_{n}$ & $\{c_1\}$ & $\{c_2\}$ & $\{c_3\}$ & $\dots$ & $\{c_{n-3}\}$ & $\{c_{n-2}\}$ & $\{c_{1}\}$ & $\{c_{n}\}$ & $c_{1}$ \\ 
$P_{n+1}$ & $\{c_1\}$ & $\{c_1\}$ & $\{c_3\}$ & $\dots$ & $\{c_{n-3}\}$ & $\{c_{n-2}\}$ & $\{c_{1}\}$ & $\{c_{n}\}$ & $c_{1}$ \\ 
$\dots$ & • & • & • & • & • & • & • & $\dots$ \\ 
$P_{2n-4}$ & $\{c_1\}$ & $\{c_1\}$ & $\{c_1\}$ & $\dots$ & $\{c_{1}\}$ & $\{c_{n-2}\}$ & $\{c_{1}\}$ & $\{c_{n}\}$ & $c_{1}$ \\ 
$P_{2n-3}$ & $\{c_1\}$ & $\{c_1\}$ & $\{c_1\}$ & $\dots$ & $\{c_{1}\}$ & $\{c_{1}\}$ & $\{c_{1}\}$ & $\{c_{n}\}$ & $c_{1}$ \\ 
$P_{2n-2}$ & $\{c_1\}$ & $\{c_1\}$ & $\{c_1\}$ & $\dots$ & $\{c_{1}\}$ & $\{c_{1}\}$ & $\{c_{1}\}$ & $\{c_{n}\}$ & $c_{n}$ \\\bottomrule 
\end{tabular} 
\caption{Sequence of approval profiles for lower bound construction in Theorem~\ref{thm:setzerodsg}}\label{tab:dry-spell-zero-profiles}
\end{center}
\end{table*}

\begin{table*}
\begin{center}
\begin{tabular}{ccccccccccc}\toprule
• & 1 & 2 & 3 & $\dots$ & $n-3$ & $n-2$ & $n-1$ & $n$ & \\ 
\midrule
$P_1$ & $\mathbf{1}$ & $1$ & $1$ & $\ldots$ & $1$ & $1$ & $1$ & $1$ &  \\ 
$P_2$ & $1$ & $\mathbf 2$ & $2$ & $\dots$ & $2$ & $2$  &  $2$ & $2$ &  \\ 
$\dots$ & • & • & • & • & • & • & • & • \\ 
$P_{n-1}$ & $n-2$ & $n-3$ & $n-4$ & $\dots$ & $2$ & $1$ & $\mathbf{n-1}$ & $n-1$ &  \\ 
$P_{n}$ & $\mathbf{n-1}$ & $n-2$ & $n-3$ & $\dots$ & $3$ & $2$ & $\mathbf 1$ & $n$ &  \\ 
$P_{n+1}$ & $\mathbf{1}$ & $\mathbf{n-1}$ & $n-2$ & $\dots$ & $4$ & $3$ & $\mathbf 1$ & $n+1$ &  \\ 
$\dots$ & • & • & • & • & • & • & • & • \\ 
$P_{2n-4}$ & $\mathbf{1}$ & $\mathbf{1}$ & $\mathbf{1}$ & $\dots$ & $\mathbf{n-1}$ & $n-2$ & $\mathbf 1$ & $2n-4$ &  \\ 
$P_{2n-3}$ & $\mathbf{1}$ & $\mathbf{1}$ & $\mathbf{1}$ & $\dots$ & $\mathbf{1}$ & $\mathbf{n-1}$ &  $\mathbf 1$ & $2n-3$ &  \\ 
$P_{2n-2}$ & ${1}$ & ${1}$ & ${1}$ & $\dots$ & ${1}$ & $1$ &  $1$ & $\mathbf{2n-2}$ &  \\ \bottomrule
\end{tabular} 
\caption{Voter weights for lower bound construction in Theorem~\ref{thm:setzerodsg}}\label{tab:dry-spell-zero-weights}
\end{center}
\end{table*}

To see that a better guarantee is not possible, for each $n>2$ we construct a $(2n-2)$-sequence of approval profiles as shown in Table~\ref{tab:dry-spell-zero-profiles}. The corresponding sequence of voter weights is shown in Table~\ref{tab:dry-spell-zero-weights}. We assume that ties are broken in order $c_1>c_2>\dots>c_n$, i.e., always disadvantaging voter $n$. Observe that voter $n$ does not win within the first $2n-3$ rounds, thus showing tightness of the bound.
\end{proof}

\begin{theorem}\label{thm:consensusdsg}
\consensusdsg
\end{theorem}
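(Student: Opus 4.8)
The plan is to prove the statement by exhibiting, for every $n$, a decision sequence with $n$ voters on which Perpetual Consensus leaves one designated voter~$v$ unsatisfied for $\tfrac{n^2+3n}{4}-1$ consecutive rounds. Since a dry spell of length~$\ell$ rules out every dry spell guarantee~$d$ with $d(n)\le\ell$, this gives a dry spell guarantee of at least $\tfrac{n^2+3n}{4}$. As in the tight lower-bound arguments elsewhere in the paper (e.g.\ for Perpetual Reset, Theorem~\ref{thm:setzerodsg}), I would use a fresh set of alternatives in every round, so that the single fixed tie-breaking order can be chosen to realize any desired winner among tied alternatives; in particular, ties can always be broken against~$v$.

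Two structural facts about Perpetual Consensus drive the construction. First, $\sum_{u\in N}\alpha_t(u)=n$ throughout the dry spell -- the reduction of~$n$ among the winning voters exactly cancels the~$+1$ handed to everyone -- and an unsatisfied voter's weight grows by exactly~$1$ each round, so if~$v$ never wins then $\alpha_t(v)$ is essentially the round counter (shifted down by up to~$n$ if we first let~$v$ win once and dive to weight~$2-n$). Second, when a small coalition is forced to win, its members collectively lose almost~$n$, so one can deliberately push chosen voters to negative weights; negative weights are clipped to~$0$ in scoring, so the positive-weight voters other than~$v$ then hold total weight $n-\alpha_t(v)+D_t$, where $D_t:=\sum_{u\ne v,\ \alpha_t(u)<0}(-\alpha_t(u))$ is the accumulated ``debt''. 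The invariant to maintain is thus $D_t\ge 2\alpha_t(v)-n$: whenever it holds, letting every positive non-$v$ voter approve one common alternative gives that alternative score $\ge\alpha_t(v)$, so $v$ loses that round.

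Accordingly I would organize the construction in phases: a short opening in which~$v$ is made to win once and is thereby sent far negative (so it then loses trivially while climbing back towards~$0$), followed by phases in which the remaining $n-1$ voters are ``sacrificed'' one at a time as singleton winners. Each sacrifice does double duty -- the sacrificed voter wins with weight roughly~$\alpha_t(v)$ and so both blocks~$v$ that round and, by losing almost~$n$, drops to a negative weight that refills the debt pool, keeping the invariant alive for the rounds after~$v$ has recovered. I would present the schedule of winners and the induced weight vectors in two tables mirroring Theorem~\ref{thm:setzerodsg}, choose the phase lengths so that the debt stockpiled while $\alpha_t(v)$ is small is exactly what is needed to carry the dry spell through the rounds in which $\alpha_t(v)$ has grown large, and check that the lengths telescope to $\tfrac{n^2+3n}{4}-1$. (A two-phase version already gives a dry spell of $2n-2$, which attains $\tfrac{n^2+3n}{4}-1$ at $n=4$; the general quadratic bound needs $\Theta(n)$ phases.) With the sequence fixed, the verification is an induction on the round number: one shows that the prescribed winner is always a highest-scoring alternative after tie-breaking, that~$v$'s private alternative never attains the maximum, and that the weights update to the claimed values, splitting into the two cases of the Perpetual Consensus weight rule.

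The main obstacle is the design rather than the verification. The naive single-cycle construction only produces a dry spell of length $\Theta(n)$; the real difficulty is to arrange the phases so that debt generated while~$\alpha_t(v)$ is small is released gradually -- and partially regenerated -- over the many later rounds in which~$\alpha_t(v)$ is large, which is what turns the bound from linear into quadratic. Once every voter's weight is pinned down as an explicit function of the round within each phase, closing the arithmetic at the phase boundaries and summing the lengths is routine, if tedious, bookkeeping.
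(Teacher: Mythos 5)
You have proved the wrong direction of the statement. As the theorem is used in the paper (it is what justifies the ``bounded dry spells'' checkmark for Perpetual Consensus in Table~\ref{tab:overview}), ``has a dry spell guarantee of at least $\tfrac{n^2+3n}{4}$'' is a \emph{positive}, universally quantified claim: on every decision sequence, every voter is satisfied at least once in any window of $\tfrac{n^2+3n}{4}$ consecutive rounds; the ``at least'' merely signals that the bound is not claimed to be tight. The paper's proof is accordingly a weight-counting argument with no construction at all: if voter $v$ has been unsatisfied since round $t$ (where her weight may have dropped as low as $-n+2$), then after $\tfrac{n^2+3n}{4}$ further rounds $\alpha_{t'}(v)\geq\tfrac{n^2-n+4}{4}$, while --- since negative weights are bounded below by roughly $-n+1$ and each recovers by $1$ per round --- the total debt present in round $t'$ is at most $\tfrac{1}{2}(n^2-3n+2)$, so the total positive weight is at most $\tfrac{1}{2}(n^2-n+2)$. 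Hence $v$ alone holds more than half of all positive weight and an alternative she approves must win round $t'$. Your plan, by contrast, is to exhibit a single instance with a dry spell of length $\tfrac{n^2+3n}{4}-1$. That is a tightness/lower-bound statement: it is not what the theorem asserts, it is nowhere claimed in the paper, and no worst-case example can ever establish that dry spells are bounded, which is the content actually needed downstream.

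Even read on its own terms, the proposal has a genuine gap: the quadratic-length construction is never given. You correctly observe that the natural one-sacrifice-per-round schedule yields only $\Theta(n)$, and you defer exactly the step that would make the bound quadratic, namely maintaining the invariant $D_t\geq 2\alpha_t(v)-n$ once $\alpha_t(v)$ has grown to order $n^2$. This is not mere bookkeeping: each round the accumulated debt decays by up to $n-1$, and a blocking coalition only regenerates debt if its members' pre-win weights are below about $n/|N'|$, which becomes hard to arrange once the score needed to beat $v$ exceeds $n$. It is therefore unclear that the required instance exists at all; the paper's bound may be far from tight, and the true worst-case dry spell of Perpetual Consensus could be much shorter than $\tfrac{n^2+3n}{4}-1$.
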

\begin{proof}
Let $(N, \bar A,\bar C,\bar w)$ be a $k$-decision history with $\bar w$ being the choice sequence generated by Perpetual Consensus.
Let $t$ be so that $t+\frac{n^2+3n}{4}\leq k$; we denote $t+\frac{n^2+3n}{4}$ with $t'$.
Then we have to show that for any voter $v$, $\satisfaction(v, \bar w_{\leq t})<\satisfaction(v, \bar w_{\leq t'})$.
If $\satisfaction(v, \bar w_{\leq t})<\satisfaction(v, \bar w_{\leq t'-1})$, then the statement holds.
Otherwise, let us consider round $t'$ and show that voter $v$ is guaranteed to be satisfied with the choice of Perpetual Consensus.

First, note that $v$ might have been been the only voter satisfied with the choice in round $t$.
Thus, $\alpha_{t+1}(v)$ might be negative; indeed it can be as low as $-n+2$. (In order for $v$ to win in round $t$, 
$\alpha_{t}(v)$ must be positive.)
Consequently, \[\alpha_{t'}(v)\geq -n+1+\frac{n^2+3n}{4}=\frac{n^2-n+4}{4}.\]

Secondly, the sum over all weights each rounds has to add up to $n$.
In particular, $\sum_{u\in N} \alpha_{t'}(u)=n$.
Let $N^+$ denote the voters in round $t'$ with positive weight (which includes $v$) and $N^-$ those with weight $\leq 0$.
Then \[\sum_{u\in N^+} \alpha_{t'}(u)-n = - \sum_{u\in N^-} \alpha_{t'}(u).\]
We try to bound this negative component $\sum_{u\in N^-} \alpha_{t'}(u)$.
Observe that each round the negative weights can decrease by at most $n-1$.
Further each negative weight increases by $1$ each round.
Let us consider how much negative weight can still exist in round $t'$ from previous rounds.
From round $t'-1$, a contribution of $-n+2$ is possible.
From round $t'-2$, a contribution of $-n+3$ is possible (assuming that only one negative voter has the maximum negative weight of $-n+1$ in round $t'-2$, which increased by $1$).
This continues until round $t'-n+1$, where the maximum possible negative weight of $-n+1$ has already been offset by the increase in the meantime.
Thus, we can bound \[\sum_{u\in N^-} \alpha_{t'}(u)\geq \sum_{i=1}^n (-n+i)=\frac{1}{2}(n^2-3n+2)\]
and \[\sum_{u\in N^+} \alpha_{t'}(u)\leq \frac{1}{2}(n^2-3n+2)+n=\frac{1}{2}(n^2-n+2)\]

Since \[\alpha_{t'}(v)\geq \frac{n^2-n+4}{4} > \frac{1}{2} \sum_{u\in N^+} \alpha_{t'}(u),\]
voter $v$ has more than half of the total weight and thus decides round $t'$.
\end{proof}

\begin{proposition}\label{prop:dryspellinfty}
\dryspellinfty
\end{proposition}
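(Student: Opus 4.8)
The statement is a direct corollary of Proposition~\ref{prop:notconst}. Recall that AV is the basic WAM with $f(x)=g(x)=x$, hence it is simultaneously win-based and loss-based, and that Perpetual PAV is win-based (its weight update depends only on how many rounds a voter has won, i.e.\ $f(x)=x$). Both rules therefore lie in the scope of Proposition~\ref{prop:notconst}, which states that every win-based or loss-based WAM has unbounded dry spells; so the only thing to verify is the membership of these two concrete rules in the relevant subclass, and this was already recorded when the rules were introduced.

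For a self-contained argument one can instead spell out explicit witnesses. For AV this is immediate: with $N=\{v_1,v_2\}$, alternatives $\{a,b\}$, and $v_1$ always approving $\{a\}$ while $v_2$ always approves $\{b\}$, the weights never change, so the winner is decided by tie-breaking in every round; consequently one of the two voters loses every single round and has a dry spell exceeding any prescribed bound $d(2)$. For Perpetual PAV a static coalition does \emph{not} suffice, since a group that keeps winning has its weights---and hence its score---driven to $0$; the delicate point is that a voter who has only ever lost eventually overtakes any voter who has won sufficiently often. This is exactly the three-voter construction used in the proof of Proposition~\ref{prop:notconst} for the case $f(x)=x$, $g(x)<x$ (and indeed $g(x)=x/(x+1)<x$ for Perpetual PAV), which transfers verbatim. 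Hence the only genuinely non-routine ingredient is this Perpetual PAV case, and it has already been handled in that proof.
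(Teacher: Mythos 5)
Your proof is correct, but it takes a different route from the paper's. You derive the statement as a corollary of Proposition~\ref{prop:notconst}: since AV is simultaneously win- and loss-based and Perpetual PAV is win-based (with $g(x)=x/(x+1)<x$ on all reachable weights), both fall under that general classification, and the reduction is legitimate because the proof of Proposition~\ref{prop:notconst} (via Lemma~\ref{Lem:Consistency}) nowhere presupposes this proposition. The paper instead gives short explicit witnesses: for AV the static profile $(\{a\},\{a\},\{b\})^k$ in which voter~$3$ never wins, and for Perpetual PAV the sequence $\left((\{a\},\{b\},\{a\})\oplus(\{a\},\{b\},\{b\})\right)^k\oplus(\{a\},\{a\},\{c\})^k$, where voter~$3$ wins every one of the first $2k$ rounds (so her weight drops to $\tfrac{1}{2k+1}$) and then loses all of the last $k$ rounds to the coalition $\{v_1,v_2\}$ whose combined weight is $\tfrac{2}{k+1}$. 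Your two-voter witness for AV is fine (with fixed tie-breaking the same alternative wins every round), and your observation that a static profile does \emph{not} work for Perpetual PAV is exactly right and is the reason the paper's PAV construction is more elaborate. What your approach buys is economy -- no new construction is needed once Proposition~\ref{prop:notconst} is in place; what the paper's explicit examples buy is concrete, checkable dry-spell lengths (here linear in $k$ with only three voters) and independence from the general machinery, which matters since these proofs are reproduced from \cite{Lackner2020}, where Proposition~\ref{prop:notconst} was not available.
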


\begin{proof}
We assume that ties are broken in alphabetic order.
\begin{itemize}
\item AV: Consider $(\{a\},\{a\},\{b\})^k$. Voter $3$ never wins. Hence, AV has unbounded dry spells.

\item Perpetual PAV: Consider the $3k$-decision sequence
\begin{align*}
\left((\{a\},\{b\},\{a\})^1\oplus (\{a\},\{b\},\{b\})^1\right)^k \oplus(\{a\},\{a\},\{c\})^k
\end{align*}
The choice sequence determined by Perpetual PAV is \[(\underbrace{a,b,a,b,\dots,a,b}_{\text{$2k$ many}},\underbrace{a,\dots,a}_{\text{$k$ many}}).\]
Note that voter 3 does not win in rounds $2k+1$ to $3k$. As $k$ is unbounded, so is the potential length of dry spells.

\end{itemize}
\end{proof}

\begin{theorem}\label{thm:thmiud}
\thmiud
\end{theorem}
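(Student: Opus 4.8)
The plan is to handle AV and Perpetual Consensus separately, since they satisfy the axiom for rather different reasons: AV is essentially immediate, while Perpetual Consensus carries all the work.

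For AV, recall that $f(x)=g(x)=x$, so every voter has weight $1$ in every round, independently of the history. Hence in any round with an approval profile $A$ that is uncontroversial due to $c$, alternative $c$ is approved by all $n$ voters and has score $n$, whereas every other alternative $c'$ is approved by a strict subset of $N$ (if all voters approved $c'$, then $c'\in\bigcap_{v\in N}A(v)=\{c\}$) and therefore has score at most $n-1$; thus $c$ is selected. Since the round changes no weights, every other round has exactly the same score vector with and without the inserted instance, and hence the same winner. (Equivalently, AV is covered by the ``if'' direction of Proposition~\ref{prop:IUDident} with $c_k=1$ for all $k$.)

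For Perpetual Consensus I would proceed in two steps. The first step is to check that the inserted uncontroversial instance selects $c$: since $c$ is approved by all voters, its score is $\sum_{v\in N}\max(0,\alpha(v))$, which is the largest score attainable in that round and is strictly positive because the Perpetual Consensus weights always sum to $n$. The second step --- the crux --- is to show that inserting this instance does not change the winner of any later round. The useful observation here is how an uncontroversial round acts on the weight vector: writing $\alpha$ for the weights immediately before the inserted round and $N^+=\{v\in N:\alpha(v)>0\}$, afterwards every voter in $N^+$ has paid $n/|N^+|$ and every voter has gained $1$, so if $N^+=N$ the weight vector is unchanged (each voter pays exactly $1$ and gains $1$), while in general $\alpha'(v)=\alpha(v)-\tfrac{n-|N^+|}{|N^+|}$ for $v\in N^+$ and $\alpha'(v)=\alpha(v)+1$ otherwise. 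The plan is to establish, by induction over the rounds following the insertion, that the weight vectors of the two runs (with and without the inserted instance) always agree up to a transformation to which both the scoring rule $c\mapsto\sum_{v\in N,\,c\in A(v)}\max(0,\alpha(v))$ and the Perpetual Consensus weight update are insensitive; this keeps the winners aligned and propagates the invariant.

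The main obstacle is exactly this invariant. Unlike AV --- and, more generally, unlike any WAM whose reaction to a win is multiplication by a round-dependent positive constant (Proposition~\ref{prop:IUDident}) --- Perpetual Consensus updates weights additively, so an uncontroversial round does \emph{not} act by a uniform scaling once $N^+\subsetneq N$: the positive weights shrink by an additive amount while the non-positive ones grow by $1$. Pushing the induction through therefore requires identifying precisely which features of the weight vector Perpetual Consensus' choice actually depends on --- in particular how the clamping $\max(0,\cdot)$ interacts with the update --- and I expect the cleanest route is to exhibit a normal form for weight vectors under which an uncontroversial round becomes a genuine no-op and to verify that the dynamics of Perpetual Consensus preserve that normal form.
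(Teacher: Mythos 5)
Your treatment of AV is complete and coincides with the paper's (which simply notes that AV is loss-based, so weights never move). For Perpetual Consensus, however, your proposal stops exactly where the work begins. You correctly compute that an uncontroversial round maps the weight vector $\alpha$ to $\alpha'(v)=\alpha(v)-\tfrac{n-|N^+|}{|N^+|}$ for $v\in N^+$ and $\alpha'(v)=\alpha(v)+1$ otherwise, you correctly observe that this is a no-op only when $N^+=N$, and you then defer the case $N^+\subsetneq N$ to an unspecified ``normal form'' invariant that is never exhibited. That half of the argument is therefore a plan, not a proof. For comparison, the paper's own proof consists of the single assertion that $\alpha_{k+2}(v)=\alpha_{k+1}(v)$ after an uncontroversial round --- i.e., it silently assumes exactly the case $N^+=N$ that you identify as harmless and says nothing about the case you flag as the crux. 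So you have isolated the right difficulty, but neither you nor the paper resolves it.

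Worse, the missing step cannot be filled in as you hope: under the definitions as written, the invariant you are looking for does not exist. Take $n=3$ with tie-breaking $a$ before $b$, and the profiles $(\{a\},\{b\},\{b\})$, then $(\{a\},\{a\},\{b\})$. The weights evolve from $(1,1,1)$ to $(2,\tfrac12,\tfrac12)$ to $(\tfrac32,0,\tfrac32)$, and on a third profile $(\{a\},\{b\},\{b\})$ both alternatives score $\tfrac32$, so $a$ wins by tie-breaking. Inserting an uncontroversial round before that third profile gives $N^+=\{1,3\}$ (voter $2$ has weight exactly $0$ and is excluded, since membership in $N^+$ requires $\alpha>0$), so the weights become $(1,1,1)$ and the last round is now won by $b$ with score $2$ against $1$. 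Hence the choice sequence changes from $(b,a,a)$ to $(b,a,c,b)$ rather than $(b,a,c,a)$. (A symmetric instance works for the opposite tie-breaking order, and the same clamping issue can even cause the inserted round itself to select an alternative other than $c$, undermining your step one.) The upshot is that your instinct about where the obstacle lies is exactly right, but the obstacle is fatal rather than technical: the inductive argument can only go through if one either proves that all weights remain positive along every run (false in general) or amends the definition of $N^+$ (e.g., to $\alpha\ge 0$). Your proposal does neither, so the Perpetual Consensus half stands as a genuine gap.
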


\begin{proof}
\leavevmode
\begin{itemize}
\item AV is loss-based and hence satisfies independence of uncontroversial decisions.

\item 
Perpetual Consensus satisfies IUD. It is easy to see that if round $k+1$ is an uncontroversial decision, then for all $v\in N$ it holds that $\alpha_{k+2}(v)=\alpha_{k+1}(v)$.
\end{itemize}
\end{proof}

\begin{proposition}\label{prop:propiud}
\propiud
\end{proposition}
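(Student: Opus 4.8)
The plan is to treat the four rules separately. Perpetual PAV is the quickest: it is a win-based WAM and satisfies simple proportionality (as recalled in the excerpt), so Proposition~\ref{prop:winbasedspropIUD} immediately forbids it from also satisfying IUD. Perpetual Reset is a basic WAM, so I would invoke Proposition~\ref{prop:IUDident}: it suffices to exhibit, for some round~$k$, two weights that can both occur after $k$ rounds but on which $g$ is not the same positive multiple of the identity. For Perpetual Reset, $g\equiv 1$, so already in round~$k=1$ the weights $1$ (a voter who won round~1) and $2$ (a voter who lost; e.g.\ from two voters over $\{a,b\}$ splitting their vote) both occur, with $g(1)/1=1\neq\tfrac12=g(2)/2$; hence no constant $c_1$ works and IUD fails. (Concretely, this manifests as: an uncontroversial decision resets every voter's weight to~$1$, wiping out the earlier history, which visibly changes later outcomes.)

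For Rotating Dictator and Perpetual \phragmen{}, which are not basic WAMs, I would give explicit two-voter counterexamples. For Rotating Dictator with voters $(v_0,v_1)$: on the two-round sequence in which $v_0$ always approves $\{a\}$ and $v_1$ always approves $\{b\}$, the choice sequence is $(b,a)$ (round~1 dictator $v_1$ picks~$b$, round~2 dictator $v_0$ picks~$a$); inserting after round~1 the uncontroversial profile $(\{c\},\{c\})$ makes the new round~2 select~$c$ but shifts the new round~3's dictator back to~$v_1$, giving $(b,c,b)$, whereas IUD demands $(b,c,a)$. For Perpetual \phragmen{} I would reuse the computation from the proof of part~(iii) of Proposition~\ref{prop:phragmen-props}: on $(\{a\},\{b\})^2$ the choice sequence is $(a,b)$, but inserting $(\{a\},\{a\})$ after round~1 equalizes the two voters' loads and yields $(a,a,a)$, violating IUD.

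The argument is not deep, so the only real obstacle is careful bookkeeping. For Perpetual \phragmen{} one must check that the internal tie-breaking among voter sets of equal minimal load does not affect the relevant outcomes (in the inserted round every eligible set forces the winner to be~$a$ and leaves both loads equal to~$1$), and for Rotating Dictator one must track the convention $v_{(k\bmod n)}$ carefully to confirm that inserting a single round really permutes the dictator assignment in all later rounds. One should also double-check in each case that the inserted profile genuinely satisfies $\bigcap_{v\in N}A(v)=\{c\}$. These are the spots where a careless version of the argument would go wrong.
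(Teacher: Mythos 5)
Your proof is correct, but for two of the four rules it takes a genuinely different route from the paper. The paper handles Perpetual PAV and Perpetual Reset by direct counterexamples with an inserted uncontroversial round ($(\{a\},\{a\},\{a\},\{b\})^4$ with $(\{c\},\{c\},\{c\},\{c\})$ prepended, and $(\{a\},\{a\},\{b\})^3$ with $(\{c\},\{c\},\{c\})$ inserted before round~3, respectively), whereas you derive both failures from the paper's structural results: PAV from Proposition~\ref{prop:winbasedspropIUD} combined with the fact that it satisfies simple proportionality, and Reset from the characterization in Proposition~\ref{prop:IUDident} via the occurring weights $1$ and $2$ after one round (on which $g\equiv 1$ cannot be a common positive multiple of the identity). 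Neither of those propositions depends on the present one, so there is no circularity, and your route has the advantage of explaining \emph{why} the failures occur rather than just exhibiting them; the paper's explicit examples are more self-contained and do not require the characterization machinery. For Rotating Dictator the paper merely says ``by definition'' while you supply the concrete two-voter instance (which checks out under the $v_{(k\bmod n)}$ convention), and for Perpetual \phragmen{} you reuse exactly the paper's counterexample from part~(iii) of Proposition~\ref{prop:phragmen-props}, including the observation that the voter-set tie-breaking in the inserted round is immaterial. The only cosmetic slip is the phrase ``in round $k=1$ the weights $1$ and $2$ both occur'' --- these are the weights \emph{after} round~1, i.e., the values of $\alpha_2$, which is what Proposition~\ref{prop:IUDident} quantifies over --- but this does not affect the argument.
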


\begin{proof}
\leavevmode
\begin{itemize}
\item Perpetual PAV: Consider $(\{a\},\{a\},\{a\},\{b\})^4$. The corresponding decision sequence is $(a,a,a,b)$. If we insert the approval profile $(\{c\},\{c\},\{c\},\{c\})$ before the first round, we obtain the sequence $(c,a,a,a,a)$. (Note that in the last round the weight of $a$ is $\frac{3}{5}$ and the weight of $b$ is $\frac{1}{2}$; hence $a$ wins.) Thus IUD is not satisfied.
\item Perpetual Reset: Consider $(\{a\},\{a\},\{b\})^3$. The corresponding decision sequence is $(a,a,b)$. If we insert the approval profile $(\{c\},\{c\},\{c\})$ before the third round, we obtain the sequence $(a,a,c,a)$. Thus IUD is not satisfied.
\item Rotating Dictator: Rotating Dictator does not satisfy independence of uncontroversial decision
by definition.
\end{itemize}
\end{proof}

\begin{proposition}\label{prop:failsp}
\failsp
\end{proposition}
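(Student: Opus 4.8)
The plan is to refute simple proportionality for each rule by exhibiting one explicit simple $n$-decision sequence (with $k=n$ rounds, as the definition demands) on which the required equality $\satisfaction(v,\calR(\calD))=\#v$ fails for some voter. For AV this is essentially immediate: since AV is the basic WAM with $f(x)=g(x)=x$, every weight stays equal to $1$, so in each round AV picks the alternative with the most approvers (breaking ties by the fixed order). Take $n=3$, alternatives $\{a,b\}$, two voters approving $\{a\}$ and one voter approving $\{b\}$. In all three rounds $a$ has score $2>1$ and hence wins, so an $a$-voter ends with satisfaction $3$ while $\#v=2$; simple proportionality fails. No tie ever arises here, so tie-breaking is irrelevant.

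For Perpetual Reset ($f(x)=x+1$, $g(x)=1$) I would take $n=4$ with a three-element alternative set $\{a,b,c\}$ and voters $v_1,v_2$ approving $\{a\}$, $v_3$ approving $\{b\}$, $v_4$ approving $\{c\}$; by relabelling the three chosen alternatives, assume the fixed tie-breaking order restricted to this set is $a\succ b\succ c$ (we are free to place the size-$2$ group on whichever alternative we wish). I would then just run the rule, tracking the weight vector $(\alpha(v_1),\alpha(v_2),\alpha(v_3),\alpha(v_4))$: from $(1,1,1,1)$, round $1$ has scores $2,1,1$, so $a$ wins and the weights become $(1,1,2,2)$; round $2$ is a three-way tie broken to $a$, giving $(1,1,3,3)$; round $3$ has scores $2,3,3$, a tie between $b$ and $c$ broken to $b$, and since the $a$-voters are now unsatisfied while $v_3$ is reset, the weights become $(2,2,1,4)$; round $4$ then has scores $a\colon 4$, $b\colon 1$, $c\colon 4$, a tie broken to $a$. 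Thus $\calR(\calD)=(a,a,b,a)$, so $\satisfaction(v_1,\calR(\calD))=3$ whereas $\#v_1=2$, contradicting simple proportionality.

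The computation itself is routine arithmetic; the one point that needs care — and the main obstacle — is the bookkeeping around ties, both in checking each of the four rounds above (especially that in round $3$ the $a$-voters' weights rise from $1$ to $2$ while $v_3$'s weight is reset from $3$ to $1$, which is exactly what makes $a$ win a third time in round $4$) and in justifying that the tie-breaking order may be assumed favourable. It is also worth being aware that no smaller witness exists for Perpetual Reset: with only two voter groups, or with any simple $3$-decision sequence, Perpetual Reset returns a proportional outcome under every tie-breaking (the weights are ``self-correcting''), so three groups and four rounds are genuinely needed.
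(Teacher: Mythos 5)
Your proof is correct and follows essentially the same strategy as the paper's: exhibit an explicit simple $n$-decision sequence for each rule and trace the weights to show a voter's satisfaction deviates from $\#v$ (your AV instance is in fact identical to the paper's, and for Perpetual Reset you use a smaller witness with $n=4$ and groups of sizes $2,1,1$ where the paper uses $n=6$ with sizes $3,1,1,1$). The round-by-round computation for Perpetual Reset checks out, and the relabelling argument for the tie-breaking order is the same device the paper uses when it fixes alphabetic tie-breaking.
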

\begin{proof}
We assume that ties are broken in alphabetic order. Observe that all $k$-decision sequences in counterexamples are simple and have $k=|N|$.
\begin{itemize}
\item For $(\{a\},\{a\},\{b\})^3$, AV outputs the choice sequence $(a,a,a)$. Voter $3$ never wins.
\item Perpetual Reset: Consider $(\{a\},\{a\},\{a\},\{b\},\{c\},\{d\})^{6}$. The corresponding choice sequence is $ \bar w=(a,a,a,b,a,c)$ and, considering for example voter $1$, we see that $\satisfaction(1, \bar w_{\leq 6})=4\neq3=\quota_6(1)$.
\end{itemize}
\end{proof}

\begin{theorem}\label{thm:satisfysp}
\satisfysp
\end{theorem}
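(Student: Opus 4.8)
The plan is to reduce the statement to a single combinatorial claim and then check that claim for each rule. Fix a simple $n$-decision sequence $\calD=(N,\bar A,\bar C)$ with $|N|=n$, write $C=\{c_1,\dots,c_m\}$ for the common alternative set, put $N(c_j)=\{v\in N:A(v)=\{c_j\}\}$, and let $x_j$ be the number of the $n$ rounds in which $c_j$ is selected by $\calR$. For $v\in N(c_j)$ we have $\#v=|N(c_j)|$ and $\satisfaction(v,\calR(\calD))=x_j$, so simple proportionality is exactly the assertion $x_j=|N(c_j)|$ for all $j$. Since $\sum_j x_j=n=\sum_j|N(c_j)|$, it is enough to prove the one-sided bound $x_j\le|N(c_j)|$ for every $j$ (equality follows by summing). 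I would establish this bound by contradiction, looking at the first round $k+1\le n$ in which some $c_j$ is selected for the $(|N(c_j)|+1)$-st time.

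\textbf{Rotating Dictator} needs essentially no work and is already recorded as Proposition~\ref{RotDictSP}: over rounds $1,\dots,n$ the dictators are $v_{1\bmod n},\dots,v_{n\bmod n}$, so each of the $n$ voters dictates exactly once, and whenever $v\in N(c_j)$ dictates the chosen alternative lies in $A(v)=\{c_j\}$; hence $x_j=|N(c_j)|$. \textbf{Perpetual PAV} is almost as quick: at the start of round $k+1$ every voter in $N(c_j)$ has satisfaction exactly $|N(c_j)|$ (such a voter is satisfied precisely in the rounds where $c_j$ wins), hence weight $\tfrac1{|N(c_j)|+1}$, so $\score_{k+1}(c_j)=\tfrac{|N(c_j)|}{|N(c_j)|+1}<1$. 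Because $k<n$ and the $k$ earlier rounds yield $k$ selections in total, some alternative $c_{j'}\neq c_j$ has so far been chosen only $t<|N(c_{j'})|$ times; its supporters each have weight $\tfrac1{t+1}$, so $\score_{k+1}(c_{j'})=\tfrac{|N(c_{j'})|}{t+1}\ge 1>\score_{k+1}(c_j)$, contradicting that $c_j$ is selected in round $k+1$. This forces $x_j\le|N(c_j)|$ for all $j$, and the reduction finishes the case.

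\textbf{Perpetual Consensus.} On a simple profile all voters approving $c_j$ share one weight $\beta_j^t$ at every round, with $\beta_j^1=1$, and, writing out the update, $\beta_j^{t+1}=\beta_j^t+1-\tfrac{n}{|N(c_j)|}$ if $c_j$ wins in round $t$ and $\beta_j^{t+1}=\beta_j^t+1$ otherwise. The invariant $\sum_{j}|N(c_j)|\beta_j^t=\sum_{v}\alpha_t(v)=n$ holds in every round; it both guarantees that the active set equals $N(c_j)$ whenever $c_j$ wins (so the stated recurrence is the correct one) and that some alternative always has strictly positive score, hence that the winner has positive score. If $c_j$ is about to be chosen for the $(|N(c_j)|+1)$-st time in round $k+1\le n$, summing the recurrence over rounds $1,\dots,k$ gives $\beta_j^{k+1}=1+k-n\le 0$, so $\score_{k+1}(c_j)=|N(c_j)|\cdot\max(0,\beta_j^{k+1})=0$; but the invariant forces some other alternative to have positive score, so $c_j$ cannot be selected — a contradiction. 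Alternatively, one may simply invoke Proposition~\ref{prop:apportionment}: on simple sequences Perpetual Consensus coincides with Frege's apportionment method, which satisfies apportionment upper quota, and apportionment upper quota specialized to $k=n$ gives exactly $x_j\le|N(c_j)|$.

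The main obstacle is the Perpetual Consensus case: one has to set up the ``common-weight'' bookkeeping carefully and then make the total-weight invariant $\sum_v\alpha_t(v)=n$ do double duty — ruling out degenerate rounds in which every alternative has non-positive score, and pinning down which voters count as active. Rotating Dictator and Perpetual PAV are short once the reduction to the per-alternative upper bound is in place; one could also package both of them through the classical apportionment methods they reduce to (D'Hondt and Frege's method, via Proposition~\ref{prop:apportionment}), but the direct arguments above are more economical and self-contained.
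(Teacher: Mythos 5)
Your proof is correct. For Perpetual Consensus and Rotating Dictator it is essentially the paper's argument: the paper likewise observes that a violation forces some voter group over its quota (because each round selects one alternative and unapproved alternatives never win), tracks the common weight of that group up to the first offending round, computes it to be $1+k-n\le 0$, and derives a contradiction from the total weight being $n$; your explicit handling of the ``active set'' $N_k^+$ and of the positivity of the winner's score makes precise a point the paper glosses over, and your fallback via Frege's method mirrors Proposition~\ref{prop:apportionment}. The genuine difference is Perpetual PAV: the paper simply cites the equivalence of Sequential PAV with the D'Hondt method on party-list profiles and the lower-quota property from the apportionment literature, whereas you give a direct score comparison at the first round where some alternative would exceed its supporter count ($\score(c_j)<1\le\score(c_{j'})$ for an under-served $c_{j'}$, whose existence follows from counting the $k<n$ selections made so far). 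Your version is self-contained and elementary, at the cost of not connecting the rule to the apportionment literature; the paper's citation-based route is shorter but imports an external result. The uniform reduction to the one-sided bound $x_j\le|N(c_j)|$ (with equality by summing) is a clean packaging that the paper only uses implicitly in the Consensus case.
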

\begin{proof}
Let us consider each of the three rules in turn:
\paragraph{Perpetual PAV:} 
This follows from the fact that Sequential PAV (which can be seen as a special case of Perpetual PAV) is identical to the D'Hondt method on party-list profiles~\citep{brill2018multiwinner,jet-consistentabc}. It hence satisfy the lower quota axiom from the apportionment literature~\cite{Balinski1982FairRepresentation:Meeting}.
It is straightforward to see that lower quota implies simple proportionality.

\paragraph{Perpetual Consensus:} Towards a contradiction, assume that Perpetual Consensus outputs for a simple $n$-decision sequence $(N, \bar A,\bar C)$ with $|N|=n$ a $k$-choice sequence $\bar w$ that is not proportional.
Since Perpetual Consensus never chooses alternatives that are not approved by any voter,
there has to be at least one voter $v\in N$ with $\satisfaction(v, \bar w)>\quota_n(v)$.
Let $\quota_n(v)=a$ and consider the round $i\leq n$ where voter $v$ increased her satisfaction to $a+1$.

Recall that there are $a$ many voters with $v$'s preference; let us name them $N_v$.
The total weight of voters in $N_v$ in round $i$ is \[\sum_{v'\in N_v} \alpha_i(v') = a\cdot i - n\cdot a = a\cdot (i-n) \leq 0,\] as $\satisfaction(v', \bar w_{\leq i})=a$.
Since the total weight of all voters is $n$, there is at least one voter with positive weight.
Thus, we have obtained a contradiction to the fact that in round $i$ the satisfaction of $v$ increased, i.e., $i$'s preferred alternative won.

\end{proof}

\end{document}